\newcommand{\dN}{\mathsf{N}}
\newcommand{\dS}{\mathsf{S}}
\newcommand{\dE}{\mathsf{E}}
\newcommand{\dW}{\mathsf{W}}
\newcommand{\dall}{\{\dN,\dS,\dE,\dW\}}
\newcommand{\ha}{\widehat{\alpha}}
\newcommand{\stack}[2]{\genfrac{}{}{0pt}{}{#1}{#2}}
\newcommand{\ta}{\alpha}
\newcommand{\tb}{\beta}
\newcommand{\tg}{\gamma}
\newcommand{\vv}{\vec{v}}
\renewcommand{\exp}[1]{\mathrm{E}\left[ #1 \right]}
\renewcommand{\time}{\mathbf{T}}
\newcommand{\Kddac}{\mathrm{depth}^\mathrm{da}}
\newcommand{\ignore}[1]{}
\newcommand{\w}[1]{\widetilde{#1}}
\newcommand{\wa}{\w{\alpha}}
\begin{document}

\opt{normal,submission}{
    \title{Parallelism and time in hierarchical self-assembly\footnote{A preliminary version of this article appeared as~\cite{CheDot12}.}}
    \date{}

    \author{
        Ho-Lin Chen\thanks{National Taiwan University, Taipei, Taiwan, {\tt holinc@gmail.com}. This author was supported by the Molecular Programming Project under NSF grant 0832824.}
        \and
        David Doty\thanks{University of California, Davis, Davis, CA, USA, {\tt doty@ucdavis.edu}. This author was supported by NSF grants 1219274, 1162589, 1619343, a Computing Innovation Fellowship under NSF grant 1019343, and the Molecular Programming Project under NSF grants 0832824 and 1317694.}
    }
}

\opt{final}{
    \chapter[Sample File for SIAM \LaTeX\ Book Macro Package, Proceedings Version]%
    {Sample File for SIAM \LaTeX\ Book Macro Package, Proceedings Version\thanks{Funding for this paper furnished
    by the American Taxpayers.}}
    \index{Sample!file}

    \begin{authorline}
    H.G.~Wells\thanks{Mr.~Wells.}, S.L.~Clemens\thanks{Mark Twain.}, and H.~Melville\thanks{Call me Ishmael.}
    \end{authorline}
}

\maketitle

% ----------------------------------------------------------------

\begin{abstract}
We study the role that parallelism plays in time complexity of variants of Winfree's abstract Tile Assembly Model (aTAM), a model of molecular algorithmic self-assembly.
In the ``hierarchical'' aTAM, two assemblies, both consisting of multiple tiles, are allowed to aggregate together, whereas in the ``seeded'' aTAM, tiles attach one at a time to a growing assembly.
Adleman, Cheng, Goel, and Huang (\emph{Running Time and Program Size for Self-Assembled Squares}, STOC 2001) showed how to assemble an $n \times n$ square in $O(n)$ time in the seeded aTAM using $O(\frac{\log n}{\log \log n})$ unique tile types, where both of these parameters are optimal.
They asked whether the hierarchical aTAM could allow a tile system to use the ability to form large assemblies in parallel before they attach to break the $\Omega(n)$ lower bound for assembly time.
We show that there is a tile system with the optimal $O(\frac{\log n}{\log \log n})$ tile types that assembles an $n \times n$ square using $O(\log^2 n)$ parallel ``stages'', which is close to the optimal $\Omega(\log n)$ stages, forming the final $n \times n$ square from four $n/2 \times n/2$ squares, which are themselves recursively formed from $n/4 \times n/4$ squares, etc.
However, despite this nearly maximal parallelism, the system requires superlinear time to assemble the square.
We extend the definition of \emph{partial order tile systems} studied by Adleman et al.\ in a natural way to hierarchical assembly and show that \emph{no} hierarchical partial order tile system can build \emph{any} shape with diameter $D$ in less than time $\Omega(D)$, demonstrating that in this case the hierarchical model affords no speedup whatsoever over the seeded model.
We also strengthen the $\Omega(D)$ time lower bound for deterministic seeded systems of Adleman et al.\ to nondeterministic seeded systems.
Finally, we show that for infinitely many $n$, a tile system can assemble an $n \times n'$ rectangle, with $n > n'$, in time $O(n^{4/5} \log n)$, breaking the linear-time lower bound that applies to all seeded systems and partial order hierarchical systems.
\end{abstract}

\opt{submission,final}{\thispagestyle{empty}\newpage\setcounter{page}{1} \clearpage}
\section{Introduction}
\label{sec-intro}

Tile self-assembly is an algorithmically rich model of ``programmable crystal growth''.
It is possible to design molecules (square-like ``tiles'') with specific binding sites so that, even subject to the chaotic nature of molecules floating randomly in a well-mixed chemical soup, they are guaranteed to bind so as to deterministically form a single target shape.
This is despite the number of different types of tiles possibly being much smaller than the size of the shape and therefore having only ``local information'' to guide their attachment.
The ability to control nanoscale structures and machines to atomic-level precision will rely crucially on sophisticated self-assembling systems that automatically control their own behavior where no top-down externally controlled device could fit.

A practical implementation of self-assembling molecular tiles was proved experimentally feasible in 1982 by Seeman~\cite{Seem82} using DNA complexes formed from artificially synthesized strands.
Experimental advances have delivered increasingly reliable assembly of algorithmic DNA tiles with error rates of 10\% per tile in 2004~\cite{RoPaWi04}, 1.4\% in 2007~\cite{FujHarParWinMur07}, 0.13\% in 2009~\cite{BarSchRotWin09}, and 0.05\% in 2014~\cite{EvansThesis}.
Erik Winfree~\cite{Winf98} introduced the abstract Tile Assembly Model (aTAM) -- based on a constructive version of Wang tiling~\cite{Wang61,Wang63} -- as a simplified mathematical model of self-assembling DNA tiles.
Winfree demonstrated the computational universality of the aTAM by showing how to simulate an arbitrary cellular automaton with a tile assembly system.
Building on these connections to computability, Rothemund and Winfree~\cite{RotWin00} investigated a self-assembly resource bound known as \emph{tile complexity}, the minimum number of tile types needed to assemble a shape.
They showed that for most $n$, the problem of assembling an $n \times n$ square has tile complexity $\Omega(\frac{\log n}{\log \log n})$, and Adleman, Cheng, Goel, and Huang~\cite{AdChGoHu01} exhibited a construction showing that this lower bound is asymptotically tight.
%\opt{normal}{Under natural generalizations of the model such as those studied by Aggarwal, Cheng, Goldwasser, Kao, Moisset de Espan\'{e}s, and Schweller~\cite{AGKS05}, as well as Becker, Rapaport, and R{\'e}mila~\cite{BeckerRR06}, Kao and Schweller~\cite{KaoSchS08,KS06}, Demaine, Demaine, Fekete, Ishaque, Rafalin, Schweller, and Souvaine~\cite{DDFIRSS07}, Summers~\cite{Sum09}, Doty~\cite{RSAES-journal}, Chandran, Gopalkrishnan, and Reif~\cite{ChaGopRei09}, Abel, Benbernou, Damian, Demaine, Demaine, Flatland, Kominers, and Schweller~\cite{RNaseSODA2010}, and Demaine, Patitz, Schweller, and Summers~\cite{DemPatSchSum2010RNase}, tile complexity can be reduced for tasks such as square-building and assembly of more general shapes.}
Under natural generalizations of the model~\cite{AGKS05, BeckerRR06, KaoSchS08, KS06, DDFIRSS07, Sum09, Dot10, ChaGopRei09, RNaseSODA2010, DemPatSchSum2010RNase, SolWin07, ManStaSto09ISAAC}, tile complexity can be reduced for tasks such as square-building and assembly of more general shapes.
See~\cite{DotCACM, PatitzSurveyJournal, winslow2016brief} for more background.

The authors of~\cite{AdChGoHu01} also investigated assembly time for the assembly of $n \times n$ squares in addition to tile complexity.
They define a plausible model of assembly time based (implicitly) on the standard stochastic model of well-mixed chemical kinetics~\cite{Gillespie77,GibsonBruck00,Gillespie07} and show that under this model, an $n \times n$ square can be assembled in expected time $O(n)$, which is asymptotically optimal, in addition to having optimal tile complexity $O(\frac{\log n}{\log \log n})$.
Intuitively, the optimality of the $O(n)$ assembly time for an $n \times n$ square results from the following informal description of self-assembly.
The standard ``seeded'' aTAM stipulates that one tile type is designated as the seed from which growth nucleates, and all growth occurs by the accretion of a single tile to the assembly containing the seed.
The set of locations on an assembly $\alpha$ where a tile could attach is called the \emph{frontier}.
An assembly with a frontier of size $k$ could potentially have $\Theta(k)$ attachment events occur in parallel in the next ``unit'' of time, meaning that a speedup due to parallelism is possible in the seeded aTAM.
The geometry of 2D assembly enforces that any assembly with $N$ points has an ``average frontier size'' throughout assembly of size at most $O(\sqrt{N})$.\opt{normal}{\footnote{For intuition, picture the fastest growing assembly: a single tile type able to bind to itself on all sides, filling the plane starting from a single copy at the origin.  After $t$ ``parallel steps'', with high probability it has a circumference, and hence frontier size, of $O(t)$, while occupying area $O(t^2)$.}}
Therefore, the parallelism of the seeded aTAM grows at most linearly with time.
To create an $n \times n$ square of size $n^2$, the best parallel speedup that one could hope for would use an ``average frontier size'' of $O(n)$, which in $O(n)$ ``parallel steps'' of time assembles the entire square.
This is precisely the achievement of~\cite{AdChGoHu01}.

A variant of the aTAM known as the \emph{hierarchical} (a.k.a. \emph{two-handed}, \emph{recursive}, \emph{multiple tile}, \emph{$q$-tile}, \emph{aggregation}, \emph{polyomino}) aTAM allows non-seed tiles to aggregate together into an assembly, allows this assembly to then aggregate to other assemblies, and possibly (depending on the model) dispenses completely with the idea of a seed.
Variants of the hierarchical aTAM have recently received extensive theoretical study~\cite{AGKS05, DDFIRSS07, Winfree06, Luhrs08, AdlCheGoeHuaWas01, Adl00, RNaseSODA2010, DotPatReiSchSum10, PatSumIdentifyingJournal, DemPatSchSum2010RNase, PadLiuSee11FNANO, FuPatSchShe12, TwoHandedNotIntrinsicallyUniversalconf}.
It is intuitively conceivable that by allowing two large assemblies to form in parallel and combine in one step, it may be possible to recursively build an $n \times n$ square in $o(n)$ time, perhaps even $O(\log n)$ or $O(\mathrm{polylog}(n))$ time.
In the terminology of Reif \cite{ReifLocalParallel}, such parallelism is ``distributed'' rather than ``local.''
Determining the optimal time lower bound for uniquely self-assembling an $n \times n$ square in the hierarchical aTAM was stated as an open problem in~\cite{AdChGoHu01}.

We achieve three main results.
We prove that no ``partial order hierarchical system'' (defined below) can break the $\Omega(D)$ lower bound for assembling any shape of diameter $D$.
Next, we show that a hierarchical system violating the ``partial order'' property is able to assemble a rectangle of diameter $D$ in time $o(D)$.
Finally, we show a highly parallel (but surprisingly, slow) assembly of an $n \times n$ square in a hierarchical system.
We now discuss these results in more detail.

%We leave open the question of~\cite{AdChGoHu01}, although we do achieve a partial result toward showing a lower bound.

Section~\ref{sec-runtime} defines our model of assembly time for hierarchical tile systems.
To obtain a fair comparison between our main result, Theorem~\ref{thm-hier-partial-order-slow}, and the results for assembly time in the seeded model~\cite{AdChGoHu01}, it is necessary to introduce a definition of assembly time applicable to both seeded and hierarchical tile systems.
Defining this concept is nontrivial and constitutes one of the contributions of this paper.
We define such an assembly time model based on chemical kinetics.
When applied to seeded systems, the model results in (nearly) the same definition used in~\cite{AdChGoHu01}, in the limit of low concentration of seed tiles.\footnote{Low seed concentration is required to justify the assumption used in~\cite{AdChGoHu01} of constant concentration of non-seed tiles, so we are not ``cheating'' by using this assumption to argue that the models nearly coincide on seeded systems.  The one sense in which the models are different for seeded systems is that tile concentrations are allowed to deplete in our model. As we argue in Section~\ref{sec-time-complexity-hierarchical}, this difference does not account for our time lower bound.  Furthermore, this difference makes our model strictly more realistic than the model of~\cite{AdChGoHu01}. Tile systems in which this difference would be noticeable are those in which large assemblies not containing the seed can form, which are assumed away in the seeded model.  Such systems are precisely those for which the assumptions of the seeded model are not justified. This is the sense in which our model of assembly time coincides with that of~\cite{AdChGoHu01} when applied to the seeded model: it coincides with a slightly more realistic generalization of the model used in~\cite{AdChGoHu01}.}

Section~\ref{sec-time-complexity-lower-bound} shows our main result, Theorem~\ref{thm-hier-partial-order-slow}, a linear-time lower bound on a certain class of hierarchical tile systems.
In~\cite{AdChGoHu01} the authors define a class of deterministic seeded tile systems known as \emph{partial order systems}, which intuitively are those systems that enforce a precedence relationship (in terms of time of attachment) between any neighboring tiles in the unique terminal assembly that bind with positive strength.
We extend the definition of partial order systems in a natural way to hierarchical systems, and for this special case of systems, we answer the question of~\cite{AdChGoHu01} \emph{negatively}, showing that $\Omega(D)$ time is required to assemble any structure with diameter $D$.
Thus, for the purpose of speeding up self-assembly of partial order systems, the parallelism of the hierarchical assembly model is of no use whatsoever.

Section~\ref{sec-fast-rectangle} shows that the partial order hypothesis is necessary to obtain a linear-time lower bound.
There, we describe a hierarchical tile system that, according to our model of assembly time, can assemble a rectangle in time sublinear in its diameter.
More precisely, we show that for infinitely many $n$, there is a hierarchical tile system that assembles an $n \times n'$ rectangle, where $n > n'$, in time $O(n^{4/5} \log n)$.
The key idea is the use of both ``assembly parallelism'' and ``binding parallelism.''
By ``assembly parallelism,'' we mean the form of parallelism discussed above: the ability of the hierarchical model to form multiple large assemblies independently in parallel.
By ``binding parallelism,'' we mean the (much more modest) parallelism already present in the seeded model: the ability of a single tile or assembly to have multiple potential binding sites to which to attach on the ``main'' growing assembly.
If there are $k$ such binding sites, the \emph{first} such attachment will occur in expected time $\frac{1}{k}$ times that of the expected time for any \emph{fixed} binding site to receive an attachment, a fact exploited in our tile system to achieve a speedup.
We note that Theorem~\ref{thm-seeded-time-lower-bound} implies that ``binding parallelism'' alone --- i.e., the seeded model ---  cannot achieve assembly time sublinear in the diameter of the shape.

Finally, in Section~\ref{sec-hierarchical-square}, we show that in the hierarchical aTAM, it is possible to assemble an $n \times n$ square using nearly maximal ``parallelism,'' so that the full $n\times n$ square is formed from four $n/2 \times n/2$ sub-squares, which are themselves each formed from four $n/4 \times n/4$ sub-squares, etc.\footnote{If one were to assume a constant time for any two producible assemblies to bind once each is produced, this would imply a polylogarithmic time complexity of assembling the final square.
But accounting for the effect of assembly \emph{concentrations} on binding rates in our assembly time model, the construction takes superlinear time.
This is because some sub-square has concentration at most $\widetilde{O}(1/{n}^2)$, so the time for even a single step of hierarchical assembly is at least $\widetilde{\Omega}(n^2)$ by standard models of chemical kinetics.
We note, however, that there are other theoretical advantages to the hierarchical model, for instance, the use of steric hindrance to enable algorithmic fault-tolerance~\cite{DotPatReiSchSum10}.
For this reason, our highly parallel square construction may be of independent interest despite the fact that the parallelism does not confer a speedup.}
Informally, if tile system $\calT$ uniquely self-assembles a shape $S$, define $\Kddac(\calT)$ to be the worst-case ``number of parallel assembly steps'' (depth of the tree that decomposes the final assembly recursively into the subassemblies that combined to create it) required by the tile system to reach its final assembly.
\opt{normal}{(A formal definition is given in Section~\ref{sec-hierarchical-square}.)}
Clearly $\Kddac(\calT) \geq \log |S|$ if $S$ is the shape assembled by $\calT$.
Our construction is quadratically close to this bound in the case of assembling an $n \times n$ square $S_n$, showing that $\Kddac(\calT) \leq O(\log^2 n)$.
Furthermore, this is achievable using $O(\frac{\log n}{\log \log n})$ tile types, which is asymptotically optimal.\opt{normal}{\footnote{Without any bound on tile complexity, the problem would be trivialized by using a unique tile type for each position in the shape, each equipped with specially placed strength-1 bonds, similar to the ``inter-block'' bonds of Figure~\ref{fig:two-handed-square-high-level}, to ensure a logarithmic-depth assembly tree.}}
That is, not only is it the case that every producible assembly can assemble into the unique terminal assembly (by the definition of unique assembly), but in fact every producible assembly is at most $O(\log^2 n)$ attachment events from becoming the terminal assembly.

Section~\ref{sec-runtime} is required to understand Sections~\ref{sec-time-complexity-lower-bound} and~\ref{sec-fast-rectangle}, but Sections~\ref{sec-time-complexity-lower-bound} and~\ref{sec-fast-rectangle} can be read independently of each other.
Section~\ref{sec-hierarchical-square} can be read independently of Sections~\ref{sec-runtime},~\ref{sec-time-complexity-lower-bound}, and~\ref{sec-fast-rectangle}.

\section{Informal description of the abstract tile assembly model}
\label{sec-tam-informal}

This section gives a brief informal sketch of the seeded and hierarchical variants of the abstract Tile Assembly Model (aTAM).
%See~\cite{RotWin00} for a formal definition of the standard aTAM and~\cite{AGKS05,DotPatReiSchSum10} for a formal definition of the hierarchical aTAM.
\opt{normal}{See Section \ref{sec-tam-formal} for a formal definition of the aTAM.}

A \emph{tile type} is a unit square with four sides, each consisting of a \emph{glue label} (often represented as a finite string) and a nonnegative integer \emph{strength}.
We assume a finite set $T$ of tile types, but an infinite number of copies of each tile type, each copy referred to as a \emph{tile}. An \emph{assembly}
(a.k.a., \emph{supertile})
is a positioning of tiles on the integer lattice $\Z^2$; i.e., a partial function $\alpha:\Z^2 \dashrightarrow T$. %(more formally an equivalence class of all translations of a given assembly).
Write $\alpha \sqsubseteq \beta$ to denote that $\alpha$ is a \emph{subassembly} of $\beta$, which means that $\dom\alpha \subseteq \dom\beta$ and $\alpha(p)=\beta(p)$ for all points $p\in\dom\alpha$.
In this case, say that $\beta$ is a \emph{superassembly} of $\alpha$.
We abuse notation and take a tile type $t$ to be equivalent to the single-tile assembly containing only $t$ (at the origin if not otherwise specified).
%For a set of tile types $T$, let $\Lambda(T)$ denote the set of all glue labels of tile types in $T$.
%A \emph{strength function} is a function $g:\Lambda(T)\to\N$ indicating, for each glue label $\ell$, the strength $g(\ell)$ with which it binds.
Two adjacent tiles in an assembly \emph{interact} if the glue labels on their abutting sides are equal and have positive strength. %according to $g$.
Each assembly induces a \emph{binding graph}, a grid graph whose vertices are tiles, with an edge between two tiles if they interact.
The assembly is \emph{$\tau$-stable} if every cut of its binding graph has strength at least $\tau$, where the weight of an edge is the strength of the glue it represents.
That is, the assembly is stable if at least energy $\tau$ is required to separate the assembly into two parts.
The \emph{frontier} $\partial\alpha \subseteq \Z^2 \setminus \dom\alpha$ of $\alpha$ is the set of empty locations adjacent to $\alpha$ at which a single tile could bind stably.

A \emph{seeded tile assembly system} (seeded TAS) is a triple $\calT = (T,\sigma,\tau)$, where $T$ is a finite set of tile types, $\sigma:\Z^2 \dashrightarrow T$ is a finite, $\tau$-stable \emph{seed assembly}, %$g:\Lambda(T) \to \N$ is the \emph{strength function},
and $\tau$ is the \emph{temperature}.
%Given a seeded TAS $\calT = (T,\sigma,\tau)$,
An assembly $\alpha$ is \emph{producible} if either $\alpha = \sigma$ or if $\beta$ is a producible assembly and $\alpha$ can be obtained from $\beta$ by the stable binding of a single tile.
%placing a single tile type $t$ on
%empty space (a position $p\in\Z^2$ such that $\beta(p)$ is undefined), such that the resulting assembly $\alpha$ is $\tau$-stable.
In this case write $\beta\to_1 \alpha$ ($\alpha$ is producible from $\beta$ by the attachment of one tile), and write $\beta\to \alpha$ if $\beta \to_1^* \alpha$ ($\alpha$ is producible from $\beta$ by the attachment of zero or more tiles).
An assembly is \emph{terminal} if no tile can be $\tau$-stably attached to it.

A \emph{hierarchical tile assembly system} (hierarchical TAS) is a pair $\calT = (T,\tau)$, where $T$ is a finite set of tile types and $\tau \in \N$ is the temperature.
%Given a hierarchical TAS $\calT=(T,\tau)$,
An assembly is \emph{producible} if either it is a single tile from $T$, or it is the $\tau$-stable result of translating two producible assemblies without overlap.
An assembly $\alpha$ is \emph{terminal} if for every producible assembly $\beta$, $\alpha$ and $\beta$ cannot be $\tau$-stably attached.
The restriction on overlap is a model of a chemical phenomenon known as \emph{steric hindrance}~\cite[Section 5.11]{WadeOrganicChemistry91} or, particularly when employed as a design tool for intentional prevention of unwanted binding in synthesized molecules, \emph{steric protection}~\cite{HellerPugh1,HellerPugh2,GotEtAl00}.

In either the seeded or hierarchical model, let $\prodasm{\calT}$ be the set of producible assemblies of $\calT$, and let $\termasm{\calT} \subseteq \prodasm{\calT}$ be the set of producible, terminal assemblies of $\calT$.
A TAS $\calT$ is \emph{directed} (a.k.a., \emph{deterministic}, \emph{confluent}) if $|\termasm{\calT}| = 1$.
%Given a connected shape $X \subseteq \Z^2$,  a TAS $\calT$ \emph{strictly self-assembles $S$}  if every producible, terminal assembly places tiles exactly on those positions in $X$.
% (appropriately translated if necessary).
%We say that $\calT$ \emph{uniquely self-assembles} $X$ (a.k.a., \emph{produces $X$ uniquely}) if $\calT$ strictly self-assembles $X$ and $\calT$ is directed.

%\input{hierarchical}

\section{Time complexity in the hierarchical model}
\label{sec-runtime}

In this section we define a formal notion of time complexity for hierarchical tile assembly systems. %, in order to show that the subset of systems known as partial order systems cannot assemble any shape of $L_1$ diameter $D$ in faster than time $\Omega(D)$, and that there exist non-partial order systems that can assemble shapes with diameter $D$ in time $o(D)$.
The model we use applies to both the seeded aTAM and the hierarchical aTAM.

\newcommand{\unrealisticHierarchical}{
    For hierarchical systems, our assembly time model may not be completely suitable since we make some potentially unrealistic assumptions.
    In particular, we ignore diffusion rates of molecules based on size and assume that large assemblies diffuse as fast as individual tiles.
    We also assume that the binding energy $\tau$ necessary for a small tile $t$ to attach stably to an assembly $\alpha$ is the same as the binding energy required for a large assembly $\beta$ to attach stably to $\alpha$, even though one would expect such large assemblies to have a higher reverse rate of detachment (slowing the net rate of forward growth) if bound with only strength $\tau$.
    However, from the perspective of our lower bound on assembly time, Theorem~\ref{thm-hier-partial-order-slow}, these assumptions have the effect of making hierarchial self-assembly appear \emph{faster}.
    We show that even with these extra assumptions, the time complexity of hierarchical partial order systems is \emph{still} no better than the seeded aTAM.
    However, caution is warranted in interpreting the upper bound result, Theorem~\ref{thm-fast-rectangle}, of a sublinear time assembly of a shape.
    As we discuss in Section~\ref{sec-conclusion}, a plausible treatment of diffusion rates -- together with our lower bound techniques based on low concentrations of large assemblies -- may yield an absolute linear-time (in terms of diameter) lower bound on assembly time of hierarchical systems, so that Theorem~\ref{thm-fast-rectangle} may owe its truth entirely to the heavily exploited assumption of equally fast diffusion of all assemblies.
    A reasonable interpretation of Theorem~\ref{thm-fast-rectangle} is that the partial order assumption is necessary to prove Theorem~\ref{thm-hier-partial-order-slow} and that concentration arguments alone do not suffice to establish linear-time time lower bounds in general hierarchical systems.
    The techniques that weave together both ``assembly parallelism'' and ``binding parallelism'', as discussed in Section~\ref{sec-intro} and Section~\ref{sec-fast-rectangle}, may prove useful in other contexts, even though their attained speedup is modest.
}

\opt{normal}{\unrealisticHierarchical}

\subsection{Definition of time complexity of seeded tile systems}
\label{sec-seeded-time-defn}

We now review the definition of time complexity of seeded self-assembly proposed in~\cite{AdChGoHu01}.
A \emph{concentrations function} on a tile set $T$ is a subprobability measure $C:T \to [0,1]$ (i.e., $\sum_{r\in T} C(r) \leq 1$).
Each tile type $r$ is assumed to be held at a fixed concentration $C(r)$ throughout the process of assembly.\footnote{For singly-seeded tile systems in which the seed tile $s \in T$ appears only once at the origin, this assumption is valid in the limit of low seed concentration $C(s)$ compared to all other concentrations $C(r)$ for $r \in T \setminus \{s\}$.
\opt{normal}{This is because the number of terminal assemblies (if each is of size at most $K$) will be limited by $C(s)$, implying the percentage change in every other tile type $r$'s concentration is at most $K \cdot C(s) / C(r)$; therefore ``low'' seed concentration means setting $C(s) \ll C(r) / K$ for all $r \in T \setminus\{s\}$.
In fact, to obtain an assembly time asymptotically as fast, one need only ensure that for all $r$, $C(r) \geq 2 \#_{\ha}(r) C(s)$, where $\#_{\ha}(r)$ is the number of times $r$ appears in the terminal assembly $\ha$.
This guarantees that the concentration of $r$ is always at least half of its start value, which means that the assembly time, each step of which is proportional to the concentration of the tile type attaching at that step, is at most doubled compared to the case when the concentrations are held constant.}}
The assembly time for a seeded TAS $\calT=(T,\sigma,\tau)$ is defined by picking a copy of the seed arbitrarily and measuring the expected time before the seed grows into some terminal assembly, when assembly proceeds according to the following stochastic model.
The assembly process is described as a continuous-time Markov process in which each state represents a producible assembly, and the initial state is the seed assembly $\sigma$.
For each pair of producible assemblies $\alpha,\beta$ such that $\alpha \to_1 \beta$ via the addition of tile type $r$, there is a transition in the Markov process from state $\alpha$ to state $\beta$ with transition rate $C(r)$.\opt{normal}{\footnote{That is, the expected time until the next attachment of a tile to $\alpha$ is an exponential random variable with rate $\sum_{r \in T} \sum_{p\in\partial^r\alpha} C(r)$, where $\partial^r\alpha$ is the $r$-frontier of $\alpha$, the set of empty locations at which tile tile $r$ could stably attach to $\alpha$. Note that if $r$ could attach at more than one location then this corresponds to separate terms in the sum; similarly, if one location $p$ can have multiple tile types attach to it, these also correspond to separate terms in the sum.}}
The sink states of the Markov process are precisely the terminal assemblies.
%If $\calT$ is directed, then $\ha$ is the unique sink state of the Markov process.
The time to reach some terminal assembly from $\sigma$ is a random variable $\time_{\calT,C}$, and the assembly time complexity of the seeded TAS $\calT$ with concentrations $C$ is defined to be $\mathsf{T}(\calT,C) = \exp{\time_{\calT,C}}$.
%Given a finite shape $S\subset \Z^2$, the \emph{(directed) seeded assembly time complexity} of $S$ is $\timeS(S) = \inf_{\calT,C} \mathsf{T}(\calT,C)$, where the infimum is over all directed seeded TAS's $\calT$ that strictly self-assembly $S$ and tile concentration functions $C$ to $\calT$.
%The authors of~\cite{AdChGoHu01} show that $\timeS(S_n) = \Theta(n)$, where $S_n$ is the $n \times n$ square.

The requirement that the tile concentrations function $C$ be a subprobability measure, rather than an arbitrary measure taking values possibly greater than 1, reflects a physical principle known as the \emph{finite density constraint}, which stipulates that a given unit volume of solution may contain only a bounded number of molecules (if for no other reason than to avoid forming a black hole).
By normalizing so that one ``unit'' of volume is the volume required to fit one tile, the total concentration of tiles (concentration defined as  number or mass per unit volume) cannot exceed 1.\opt{normal}{\footnote{When our goal is to obtain only an asymptotic result concerning a family of tile systems assembling a family of assemblies of size/diameter $D$, we may relax the finite density constraint to the requirement that the concentrations sum to a constant $c\in\R_{\geq 0}$ independent of $D$, since these concentrations could be divided by $c$ to sum to 1 while affecting the assembly time results by the same constant $c$, leaving the asymptotic results unaffected.}}
%By letting their total concentration be exactly 1, we are essentially requiring that the assembly process is carried out in the smallest volume (which leads to the fastest reaction time) required to satisfy the finite density constraint.
%This equality normalizes different systems for fair comparison of their respective time complexities.

We have the following time complexity lower bound for seeded systems.
This theorem says that even for non-directed systems, a seeded TAS can grow its diameter only linearly with time.
It strengthens and implies Lemma 4.6 of the full version of~\cite{AdChGoHu01}, which applied only to directed systems.
\opt{submission}{
    %A proof of Theorem~\ref{thm-seeded-time-lower-bound} is contained in Section~\ref{app-sec-seeded-lower-bound}.
}

\newcommand{\seededLowerBoundFormal}{
    Let $d \in \Z^+$.
    Let $\calT=(T,\sigma,\tau)$ be a singly-seeded TAS (meaning $|\sigma|=1$), and let $C:T\to[0,1]$ be a concentrations function.
    Since it takes only constant time for the assembly to grow to any constant radius, restricting attention to singly-seeded systems does not asymptotically affect the result for tile systems with a finite seed assembly of size larger than 1.
    Assume below that with probability 1, $\calT$ eventually places a tile at distance $d$ (in the $L_1$ norm) from the seed.
    Define $\mathbf{D}(\calT,C,d)$ be the random variable representing the time that any tile is first placed at distance $d$.
    %, conditioned on the event that a tile is eventually placed at least distance $d$ from the seed, with $\mathbf{D}(\calT,C,d)=\infty$ if all producible, terminal assemblies of $\calT$ are completely contained in radius $d-1$ around the seed.
}

%\todo{DD: we need to redefine $\mathbf{D}(\calT,C,d)$ to remove the conditional expectation, for the same reason as in Section~\ref{sec-time-complexity-lower-bound}: the conditioning could artificially speed up assembly}

\opt{normal}{\seededLowerBoundFormal}

\newcommand{\seededLowerBoundFormalStatement}{For each $d \in \Z^+$, each singly-seeded TAS $\calT$, and each concentrations function $C:T\to[0,1]$, $\exp{\mathbf{D}(\calT,C,d)} = \Omega(d)$.}

\newcommand{\seededLowerBoundInformalStatement}{For any seeded TAS $\calT$ and any concentrations function on $\calT$, the expected time at which the assembly formed by $\calT$ grows to diameter $d$ is $\Omega(d)$.}

\begin{theorem}\label{thm-seeded-time-lower-bound}
\opt{normal}{\seededLowerBoundFormalStatement}
\opt{submission}{\seededLowerBoundInformalStatement}
\end{theorem}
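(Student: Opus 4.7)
The plan is to reduce the theorem to a standard first-passage percolation (FPP) bound on $\Z^2$, using only the finite density constraint $\sum_{t \in T} C(t) \leq 1$ and no directedness or other special structure of $\calT$. The key uniform observation is that at any frontier position $p$ of a currently producible assembly $\alpha$, the total attachment rate at $p$ is $\sum_{t \text{ matching}} C(t) \leq 1$, independent of $|\partial\alpha|$, of the history, and of which neighbors of $p$ currently carry tiles.

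The first step is a coupling. I would place at every lattice site $v \in \Z^2$ an independent rate-$1$ Poisson clock $\Pi_v$. The idealized process places a tile at $v$ at the first ring of $\Pi_v$ after $v$ first becomes adjacent to the growing region, ignoring glues entirely. The real process uses the same clocks but treats each ring as an attempted attachment, succeeding with the (time- and state-dependent) probability $r_v \leq 1$ equal to the real matching rate at $v$ in the current state, and otherwise ignoring the ring. Since the first ring of $\Pi_v$ after time $s$ is monotone nondecreasing in $s$, a short induction over events shows that the real-process domain is contained in the idealized-process domain at every time, so $\mathbf{D}$ stochastically dominates the analogous idealized hitting time $\mathbf{D}^\star$, and it suffices to prove $\exp{\mathbf{D}^\star} = \Omega(d)$.

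The idealized process is exactly vertex-site FPP on $\Z^2$ with i.i.d.\ $\mathrm{Exp}(1)$ weights $Y_v$ (the Poisson interarrival times): writing $T_v$ for the first time $v$ carries a tile, $T_v = Y_v + \min_{u \sim v} T_u$, which unwinds to $T_v = \min_{\pi:\, 0 \to v} \sum_{w \in \pi \setminus \{0\}} Y_w$ over lattice paths. For $|v|_1 = d$ every such $\pi$ has length $\ell \geq d$, contributing a $\mathrm{Gamma}(\ell,1)$ sum, for which $\Pr[\mathrm{Gamma}(\ell,1) \leq t] \leq (et/\ell)^\ell$ by Chernoff. Combining with the crude bound $4^\ell$ on length-$\ell$ paths from the origin gives $\Pr[T_v \leq t] \leq \sum_{\ell \geq d} (4et/\ell)^\ell$, which at $t = d/(8e)$ is geometric and bounded by $O(2^{-d})$ since each summand is at most $2^{-\ell}$. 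Union-bounding over the $4d$ vertices at distance $d$ then gives $\Pr[\mathbf{D}^\star \leq d/(8e)] \leq O(d \cdot 2^{-d}) \to 0$, yielding $\exp{\mathbf{D}^\star} = \Omega(d)$ for $d$ past an absolute constant; the small-$d$ case is trivial since even the first attachment takes expected time at least $1/4$.

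The step I expect to require the most care is the thinning coupling. The matching rate $r_v$ at a frontier vertex depends on which of $v$'s neighbors currently carry tiles and on which tile types the abutting glues admit, so $r_v$ evolves as the real assembly grows, and in particular one must gracefully handle $r_v = 0$ throughout (in which case the real process simply never attaches at $v$, consistent with conditioning on $\mathbf{D} < \infty$ in the theorem statement). The cleanest formalization processes clock rings in chronological order, advancing the real assembly at each ring with the matching probability then in force while maintaining the domain-inclusion invariant; once this is in place, the FPP tail estimate above is essentially a textbook computation.
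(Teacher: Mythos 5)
Your proposal is correct, and it reaches the conclusion by a genuinely different route than the paper. Both arguments rest on the single observation you isolate---the finite density constraint makes the total attachment rate at any fixed location at most $1$, independent of history and of the frontier size---but the paper exploits it analytically rather than by coupling: it halts the process at the first distance-$d$ attachment, sets $f_i(t)=\exp{\mathbf{X}_i(t)}$ for the expected number of tiles at $L_1$-distance exactly $i$, derives the differential inequalities $\frac{df_i}{dt}\le 3f_{i-1}+2f_{i+1}$ from the geometry of $L_1$ spheres, dominates the system by comparison functions (first $g_i$, then $h_i$ with $\frac{dh_i}{dt}=5h_{i-1}$ and $h_0\equiv 4d$), solves to get $f_d(t)\le\frac{4d}{d!}(5t)^d$, which is below $\frac{1}{2}$ at $t=\frac{d}{10e}$ by Stirling, and finishes with Markov's inequality. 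You instead uniformize (thinning is legitimate exactly because every per-site rate is bounded by the clock rate $1$) and stochastically dominate the process by the Richardson growth model, i.e.\ site FPP with i.i.d.\ $\mathrm{Exp}(1)$ weights; your domain-inclusion induction is sound, and after noting that the path minimum may be taken over self-avoiding paths (so the summands along a path are distinct i.i.d.\ weights, with $4\cdot 3^{\ell-1}\le 4^\ell$ such paths of length $\ell$), the Chernoff/path-counting estimate $\sum_{\ell\ge d}(4et/\ell)^\ell$ gives the same quantitative outcome as the paper's $4d/2^d$: exponentially small probability of a distance-$d$ tile by time $\Theta(d)$, hence an $\Omega(d)$ expectation; both arguments apply to nondeterministic systems because only the per-location rate bound is used. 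Your route buys a pathwise domination by a well-studied growth process and a reusable, explicit tail bound; the paper's route is self-contained calculus that avoids constructing any coupling. One shared caveat: like the paper (which simply identifies the event $\{\mathbf{X}_d(t_0)=0\}$ with $\{\mathbf{D}(\calT,C,d)>t_0\}$), you really bound the unconditional probability of an early distance-$d$ attachment and address the conditioning on the hitting event in the definition of $\mathbf{D}$ only in passing; your treatment of that point is no less rigorous than the paper's own.
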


\newcommand{\ProofSeededTimeLowerBound}{
    The intuition of the proof is as follows.
    We divide the plane into concentric ``layers'', with layer $i$ being the set of points at $L_1$-distance $i$ from the origin.
    We examine at the rate at which tiles are added to layer $i$, noting that such additions can only happen because of attachment to existing tiles in adjacent layers $i-1$ and $i+1$.
    Therefore, the rate of attachment in layer $i$ is proportional to the number of tiles in layers $i-1$ and $i+1$.
    This turns out to be a process with the property that the time at which layer $d$ gets its first tile is $\Omega(d)$, which we prove by solving some differential equations that bound the attachment process.

    Since we care only about the first time at which a tile is attached at distance $d$ (before which there are no tiles at distance $d'$ for any $d' \geq d$), we can restrict the assembly process to the region of radius $d$ around the seed.
    Therefore we model the assembly process as if it proceeds normally until the first tile attaches at distance $d$ from the seed, at which point all growth immediately halts.

    Define $\R_{\geq 0} = [0,\infty)$.
    Given $i \in \{0,\ldots,d\}$ and $t \in \R_{\geq 0}$, let $\mathbf{X}_i(t)$ be a random variable denoting the number of tiles attached at locations with distance exactly $i$ from the seed at time $t$, under the restriction stated above that all assembly halts the moment that a tile is placed at distance $d$.
    Then for all $t\in\R_{\geq 0}$, the event $\mathbf{X}_d(t) = 0$ (no tile is at distance $d$ by the time $t$) is equivalent to the event $\mathbf{D}(\calT,C,d) > t$ (the time of the first attachment at distance $d$ strictly exceeds $t$).

    In a seeded TAS, tiles can attach at a location only when there is another tile adjacent to the location.
    Locations at $L_1$-distance $i$ to the seed are only adjacent to locations at distance either $i+1$ or $i-1$ to the seed.
    Off the $x$- and $y$-axes, each location at distance $i$ has two neighbors at distance $i-1$ and two neighbors at distance $i+1$, and for the 4 locations at distance $i$ on either axis, every location has one neighbor at distance $i-1$ and three neighbors at distance $i+1$.
    Therefore, at time $t$, tiles are attachable to at most $2 \mathbf{X}_{i-1}(t) + 3 \mathbf{X}_{i+1}(t)$ different locations with distance $i$ to the seed.
    Since the total concentration of any single tile type is at most $1$, the rate at which tiles attach at any given location is at most $1$.
    For all $i\in\{0,\ldots,d\}$, define the function $f_i:\R_{\geq 0} \to \R_{\geq 0}$ for all $t\in\R_{\geq 0}$ by $f_i(t) = \exp{\mathbf{X}_i(t)}$.
    Then for $i\in\{1,\ldots,d-1\}$ and $t \in \R_{\geq 0}$,
    \begin{eqnarray*}
    \frac{df_i(t)}{dt} &\leq& 2f_{i-1}(t)\ +\ 3f_{i+1}(t),
    \\
    \frac{df_0(t)}{dt} &=& 0, \text{ and }
    \\
    \frac{df_d(t)}{dt} &\leq& 2f_{d-1}(t).
    \end{eqnarray*}
    The lack of a $3f_{d+1}(t)$ term in the latter inequality is due to our modification of the assembly process to immediately halt once the first tile attaches at distance $d$, which implies that $f_{d+1}(t) = 0$ for all $t \in \R_{\geq 0}$ since no tile is ever placed at distance $d+1$.
    Since the assembly process always starts with a single seed tile, $f_0(t)=1$ for all $t\in\R_{\geq 0}$, and $f_i(0) = 0$ for all $i \in \{1,\ldots,d\}$.
    For all $t\in\R_{\geq 0}$ and all $i\in\{1,\ldots,d\}$, $f_i(t) \leq 4i$ since there are exactly $4i$ locations at distance exactly $i$ to the seed.

    Let $t_0 \in \R_{\geq 0}$ be the unique time at which $f_d(t_0) = \frac{1}{2}.$
    This time is unique since $f_d$ is monotonically increasing (since tiles cannot detach).
    Since $\exp{\mathbf{X}_d(t_0)} = f_d(t_0) = \frac{1}{2}$, by Markov's inequality, $\prob[\mathbf{X}_d(t_0) \geq 1] \leq \frac{1}{2}$, implying that $\prob[\mathbf{X}_d(t_0) < 1] > \frac{1}{2}$.
    Since $\mathbf{X}_d$ is integer-valued and nonnegative, this is equivalent to stating that $\prob[\mathbf{X}_d(t_0) = 0] > \frac{1}{2}$.
    Recall that $\mathbf{X}_d(t_0) = 0 \iff \mathbf{D}(\calT,C,d) > t_0$,  whence $\prob[\mathbf{D}(\calT,C,d) > t_0] > \frac{1}{2}$.
    By Markov's inequality, $\exp{\mathbf{D}(\calT,C,d)} > \frac{t_0}{2}$.
    Thus it suffices to prove that $t_0 \geq \Omega(d)$.
    To do this, we define a simpler function that is an upper bound for $f_d$ and solve its differential equations.

    For all $i\in\{0,\ldots,d\}$, define the function $g_i:\R_{\geq 0} \to \R_{\geq 0}$ (which will serve as an upper bound for $f_i$) as follows.
    For all $1\in\{1,\ldots,d-1\}$ and $t \in \R_{\geq 0}$,
    \begin{eqnarray*}
    \frac{dg_i(t)}{dt} &=& 2g_{i-1}(t)\ +\ 3g_{i+1}(t),\ \mbox{when $g_i(t) < 4d$},
    \\
    \frac{dg_d(t)}{dt} &=& 2g_{d-1}(t),\ \mbox{when $g_d(t) < 4d$},
    \\
    \frac{dg_0(t)}{dt} &=& 0,
    \end{eqnarray*}
    and for all $i\in\{1,\ldots,d\}$,
    \[
    \frac{dg_i(t)}{dt}\ =\ 0,\ \mbox{when $g_i(t) = 4d$},
    \]
    with the boundary conditions $g_0(t)=1$ for all $t \in \R_{\geq 0}$, $g_i(0) = 0$ for all $i \in \{1,\ldots,d\}$.
    Other than the inequalities governing $f_i$ being changed to equality with $g_i$, the other difference with $f_i$ is that $g_i$ is allowed to grow larger than $4i$ (but no larger than $4d$, which applies also to $f_i$ whenever $i \leq d$).
    As a result, $g_i(t) \geq f_i(t)$ for all $i\in\{0,\ldots,d\}$ and $t\in\R_{\geq 0}$.

    Furthermore, if $g_i(t_0) > g_{i+1}(t_0)$ for all $i\in\{0,\ldots,d-1\}$ at some time $t_0 \in \R_{\geq 0}$, then
    \[
    \frac{dg_i(t)}{dt}\ \geq\ \frac{dg_{i+1}(t)}{dt}\ \mbox{at time $t_0$}.
    \]

    Since $g_i(0) \geq g_{i+1}(0)$ for all $i\in\{0,\ldots,d\}$ by definition, the above inequality implies that $g_i(t) \geq g_{i+1}(t)$ for all $i\in\{0,\ldots,d\}$ and all $t\in\R_{\geq 0}$.
    Using the fact that $g_{i-1}(t) \geq g_{i+1}(t)$, we have $2 g_{i-1}(t) + 3 g_{i+1}(t) \leq 5 g_{i-1}(t)$.
    Thus, we can define a set of functions $h_i(t)$ that are upper bounds for $g_i(t)$ by the following:
    \[
    \frac{dh_i(t)}{dt}\ =\ 5h_{i-1}(t) \mbox{, for all $i\in\{1,\ldots,d\}$}, \text{ and } \frac{dh_0(t)}{dt} = 0,
    \]
    with boundary conditions $h_0(t)=1$ for all $t\in\R_{\geq 0}$, $h_i(0) = 0$ for all $i\in\{1,\ldots,d\}$. Solving these differential equations, we obtain $h_d(t)=\frac{1}{d!}(5t)^d$.  Letting $t'=\frac{d}{10e}$, by Stirling's inequality $d! > \sqrt{2 \pi d} \left( \frac{d}{e} \right)^d e^{1 / (12 d + 1)} > \left( \frac{d}{e} \right)^d$, we have
    \[
    f_d\left(t'\right)
    \leq
    g_d\left(t'\right)
    \leq
    h_d\left(t'\right)
    =
    \frac{1}{d!} \cdot \left(5t'\right)^d
    =
    \frac{1}{d!} \cdot \left( \frac{d}{2 e} \right)^d
    <
    \frac{1}{\left( \frac{d}{e} \right)^d} \cdot \left( \frac{d}{2 e} \right)^d
    =
    \frac{1}{2^d}.
    \]
    %The last inequality follows from Stirling's inequality $d! > \sqrt{2 \pi d} \left( \frac{d}{e} \right)^d e^{1 / (12 d + 1)}$.
    Since $f_d$ is monotonically increasing, $f_d(t_0) = \frac{1}{2}$ by definition, and $\frac{1}{2^d} \leq \frac{1}{2}$ for $d \geq 1$, this implies that $t_0 \geq t' = \frac{d}{10e}$.
}

\opt{normal}{
    \begin{proof}
      \ProofSeededTimeLowerBound
    \end{proof}
}

\subsection{Definition of time complexity of hierarchical tile systems}
\label{sec-time-complexity-hierarchical}

\subsubsection{Issues with defining hierarchical time complexity}
To define time complexity for hierarchical systems, we employ more explicitly the chemical kinetics that implicitly underlie the time complexity model for seeded systems stated in Section~\ref{sec-seeded-time-defn}.
We treat each assembly as a single molecule. If two assemblies $\ta$ and $\tb$ can attach to create an assembly $\tg$, then we model this as a chemical reaction $\ta + \tb \to \tg$, in which the rate constant is assumed to be equal for all reactions (and normalized to 1).
In particular, if $\ta$ and $\tb$ can be attached in two different ways, this is modeled as two different reactions, even if both result in the same assembly.\opt{normal}{\footnote{The fact that some directed systems may not require at least one of these attachments to happen in every terminal assembly tree is the reason we impose the partial order requirement when proving our time complexity lower bound.}}

At an intuitive level, the model we define can be explained as follows.
We imagine dumping all tiles into the solution at once, and at the same time, we grab one particular tile and dip it into the solution as well, pulling it out of the solution when it has assembled into a terminal assembly.
Under the seeded model, the tile we grab will be a seed, assumed to be the only copy in solution (thus requiring that it appear only once in any terminal assembly).
In the seeded model, no reactions occur other than the attachment of individual tiles to the assembly we are holding.
In the hierarchical model, other reactions are allowed to occur in the background (we model this using the standard mass-action model of chemical kinetics~\cite{epstein1998introduction}), but only those reactions with the assembly we are ``holding'' move it closer to completion.
The other background reactions merely change concentrations of other assemblies (although these indirectly affect the time it will take our chosen assembly to complete, by changing the rate of reactions with our chosen assembly).

\newcommand{\modelJustification}{
    We now discuss some intuitive justification of our model of assembly time.
    One reason for choosing this model is that we would like to analyze the assembly time in such a way as to facilitate direct comparison with the results of~\cite{AdChGoHu01}.
    In particular, we would like the assembly time model proposed in~\cite{AdChGoHu01} to be derived as a special case of the model we propose, when only single-tile reactions with the seed-containing assembly are allowed.\footnote{As discussed in Section~\ref{sec-intro}, the model of~\cite{AdChGoHu01} is not \emph{exactly} a special case of our model, since we assume tile concentrations deplete.  However, the assumption of constant tile concentrations is itself a simplifying assumption of~\cite{AdChGoHu01} that is approximated by a more realistic model in which tile concentrations deplete, but seed tile types have very low concentration compared to other tile types, implying that non-seed concentrations do not deplete too much.  Under this more realistic assumption, if attachments not involving the seed are disallowed, then our definition of assembly time coincides with that of~\cite{AdChGoHu01}.}
    With a model such as Gillespie's algorithm~\cite{Gillespie77,GibsonBruck00,Gillespie07} using finite molecular counts, it is possible that no copy of the terminal assembly forms, so it is not clear how to sensibly ask how long it takes to form.\footnote{This problem is easily averted in a seeded system by setting the seed count sufficiently low to ensure that the terminal assembly is guaranteed to form at least one copy.  In a hierarchical system it is not clear how to avoid this problem.}
    The mass-action model of kinetics~\cite{epstein1998introduction} describes concentrations as a dynamical system that evolves continuously over time according to ordinary differential equations derived from reaction rates.
    This is an accepted model of kinetics when molecular counts are very large, which is already an implicit assumption in the standard aTAM.
    In the mass-action model, all possible terminal assemblies (assuming there are a finite number of different terminal assemblies) are guaranteed to form, which solves one issue with the purely stochastic model.
    But the solution goes too far: some (infinitesimal) concentration of all terminal assemblies form in the first infinitesimal amount of time, making the first appearance of a terminal assembly a useless measure of the time required to produce it.
    A sensible way to handle this may be to measure the time to half-completion (time required for the concentration of a terminal assembly to exceed half of its steady-state concentration).
    But this model is potentially subject to ``cheats'' such as systems that ``kill'' all but the fastest growing assemblies, so as to artificially inflate the average time to completion of those that successfully assemble into the terminal assembly.
    Furthermore, it would not necessarily be fair to directly compare such a deterministic model with the stochastic model of~\cite{AdChGoHu01}.
    %This is like moving the finish line closer so that it takes less time to get halfway to it.

    The model of assembly time that we define is a continuous-time, discrete-state stochastic model similar to that of~\cite{AdChGoHu01}.
    However, rather than fixing transition rates at each time $t\in\R_{\geq 0}$ as constant, we use mass-action kinetics to describe the evolution over time of the concentration of producible assemblies, including individual tile types, which in turn determine transition rates.
    To measure the time to complete a terminal assembly, we use the same stochastic model as~\cite{AdChGoHu01}, which fixes attention on one particular tile\footnote{In the seeded model, the seed tile is the only tile to receive this attention. In our model, the tile to choose is a parameter of the definition.} and asks what is the expected time for it to grow into a terminal assembly, where the rate of attachment events that grow it are time-dependent, governed by the continuous mass-action evolution of concentration of assemblies that could attach to it.
    Unlike the seeded model, we allow the tile concentrations to deplete, since it is no longer realistic (or desirable for nontrivial hierarchical constructions) to assume that individual tiles do not react until they encounter an assembly containing the seed.\footnote{However, this depletion of individual tiles is \emph{not} the source of our time lower bound.  Suppose that we used a transition rate of 1 for each attachment of an individual tile (which is an upper bound on the attachment rate even for seeded systems due to the finite density constraint) and dynamic transition rates only for attachment of larger assemblies.  Then the assembly of hierarchical partial order systems \emph{still} would proceed asymptotically no faster than if single tile attachments were the only reactions allowed (as in the seeded assembly case), despite the fact that all the single-tile reactions at the intersection of the seeded and hierarchical model would be at least as fast in the modified hierarchical model as in the seeded model.}
}

\opt{normal}{\modelJustification}

\subsubsection{Formal definition of hierarchical time complexity}
We first formally define the dynamic evolution of concentrations by mass-action kinetics.
Let $\calT=(T,\tau)$ be a hierarchical TAS, and let $C:T\to[0,1]$ be a concentrations function.
Let $\R_{\geq 0} = [0,\infty)$, and let $t\in\R_{\geq 0}.$
For $\alpha\in\prodasm{\calT}$, let $[\alpha]_C(t)$ (abbreviated $[\alpha](t)$ when $C$ is clear from context) denote the concentration of $\alpha$ at time $t$ with respect to initial concentrations $C$, defined as follows.\opt{normal}{\footnote{More precisely, $[\alpha](t)$ denotes the concentration of the equivalence class of assemblies that are equivalent to $\alpha$ up to translation.  We have defined assemblies to have a fixed position only for mathematical convenience in some contexts, but for defining concentration, it makes no sense to allow the concentration of an assembly to be different from one of its translations.}}
We often omit explicit mention of $C$ and use the notation $[r](0)$ to mean $C(r)$, for $r\in T$, to emphasize that the concentration of $r$ is not constant with respect to time.
Given two assemblies $\alpha$ and $\beta$ that can attach to form $\gamma$, we model this event as a chemical reaction $R: \alpha + \beta \to \gamma$.
Say that a reaction $\alpha+\beta\to\gamma$ is \emph{symmetric} if $\alpha=\beta$.
Define the \emph{propensity} (a.k.a., \emph{reaction rate}) of $R$ at time $t\in\R_{\geq 0}$ to be $\rho_R(t) = [\alpha](t) \cdot [\beta](t)$ if $R$ is not symmetric, and $\rho_R(t) = \frac{1}{2} \cdot [\alpha](t)^2$ if $R$ is symmetric.\opt{normal}{\footnote{That is, all reaction rate constants are equal to 1.  To the extent that a rate constant models the ``reactivity'' of two molecules (the probability that a collision between them results in a reaction), it seems reasonable to model the rate constants as being equal.
To the extent that a rate constant also models diffusion rates (and therefore rate of collisions), this assumption may not apply; we discuss the issue in Section~\ref{sec-conclusion}.
Since we are concerned mainly with asymptotic results, if rate constants are assumed equal, it is no harm to normalize them to be 1.}}

If $\alpha$ is consumed in reactions $\alpha + \beta_1 \to \gamma_1, \ldots, \alpha + \beta_n \to \gamma_n$ and produced in asymmetric reactions $\beta'_1 + \gamma'_1 \to \alpha, \ldots, \beta'_m + \gamma'_m \to \alpha$ and symmetric reactions $\beta''_1 + \beta''_1 \to \alpha, \ldots, \beta''_p + \beta''_p \to \alpha$, then the concentration $[\alpha](t)$ of $\alpha$ at time $t$ is described by the differential equation
\begin{equation}\label{eq-conc-diff}
    \frac{d [\alpha](t)}{d t} =
    \sum_{i=1}^m [\beta'_i](t) \cdot [\gamma'_i](t)
    + \sum_{i=1}^p \frac{1}{2} \cdot  [\beta''_i](t)^2
    - \sum_{i=1}^n [\alpha](t) \cdot [\beta_i](t),
\end{equation}
with boundary conditions $[\alpha](0) = C(r)$ if $\alpha$ is an assembly consisting of a single tile $r$, and $[\alpha](0) = 0$ otherwise.
In other words, the propensities of the various reactions involving $\alpha$ determine its rate of change, negatively if $\alpha$ is consumed, and positively if $\alpha$ is produced.
\opt{submission}{
    %The above definitions of propensities are intuitively justified in Section \ref{sec-propensity-explanation}.
}

\newcommand{\propensityExplanation}{
    The definitions of the propensities of reactions deserve an explanation.
    Each propensity is proportional to the average number of collisions between copies of reactants per unit volume per unit time.
    For a symmetric reaction $\beta'' + \beta'' \to \alpha$, this collision rate is half that of the collision rate compared to the case where the second reactant is a distinct type of assembly, assuming it has the same concentration as the first reactant.\footnote{For intuition, consider finite counts: with $n$ copies of $\gamma$ and $n$ copies of $\beta \neq \gamma$, there are $n^2$ distinct pairs of molecules of respective type $\gamma$ and $\beta$, but with only $n$ copies of $\gamma$, there are $\frac{n(n-1)}{2}$ distinct pairs of molecules of type $\gamma$, which approaches $\frac{1}{2}n^2$ as $n \to \infty$.}
    Therefore the amount of $\alpha$ produced per unit volume per unit time is half that of a corresponding asymmetric reaction.
    The reason that terms of symmetric reactions $\alpha+\alpha\to\gamma$ that \emph{consume} $\alpha$ are not corrected by factor $\frac{1}{2}$ is that, although the number of such reactions per unit volume per unit time is half that of a corresponding asymmetric reaction, each such reaction consumes two copies of $\alpha$ instead of one.
    This constant 2 cancels out the factor $\frac{1}{2}$ that would be added to correct for the symmetry of the reaction.
    Therefore, the term $[\alpha](t) \cdot [\beta_i](t)$ representing the rate of consumption of $\alpha$ is the proper value whether or not $\alpha=\beta_i$.
}
\opt{normal}{\propensityExplanation}

%\opt{normal}{
%    Of course, many hierarchical tile assembly systems will have the property that no producible assembly is attachable to itself.
%    In this case, equation \eqref{eq-conc-diff} reduces to
%    \begin{equation}\label{eq-conc-diff-unique-reactants}
%        \frac{d [\alpha](t)}{d t} =
%        \sum_{i=1}^m [\beta'_i](t) \cdot  [\gamma'_i](t)
%        - \sum_{i=1}^n [\alpha](t) \cdot  [\beta_i](t),
%    \end{equation}
%}

This completes the definition of the dynamic evolution of concentrations of producible assemblies; it remains to define the time complexity of assembling a terminal assembly.
%Suppose $\calT$ is directed.
%Then let $\ha\in\termasm{\calT}$ be its unique producible, terminal assembly.
Although we have distinguished between seeded and hierarchical systems, for the purpose of defining a model of time complexity in hierarchical systems and comparing them to the seeded system time complexity model of~\cite{AdChGoHu01}, it is convenient to introduce a seed-like ``timekeeper tile'' into the hierarchical system, in order to stochastically analyze the growth of this tile when it reacts in a solution that is itself evolving according to the continuous model described above.
The seed does not have the purpose of nucleating growth, but is introduced merely to focus attention on a single molecule that has not yet assembled anything, in order to ask how long it will take to assemble into a terminal assembly.\footnote{For our lower bound result, Theorem~\ref{thm-hier-partial-order-slow}, it will not matter which tile type is selected as the timekeeper, except in the following sense.
We define partial order systems, the class of directed hierarchical TAS's to which the bound applies, also with respect to a particular tile type in the unique terminal assembly.
A TAS may be a partial order system with respect to one tile type but not another, but for all tile types $s$ for which the TAS is a partial order system, the time lower bound of Theorem~\ref{thm-hier-partial-order-slow} applies when $s$ is selected as the timekeeper.
The upper bound of Theorem~\ref{thm-fast-rectangle} holds with respect to only a single tile type.}
The choice of which tile type to pick will be a parameter of the definition, so that a system may have different assembly times depending on the choice of timekeeper tile.

%Let $\ha$ be the unique terminal assembly of the hierarchical TAS.
%Each producible, terminal assembly $\ha\in\termasm{\calT}$ will be the sink state of a Markov process.
Fix a copy of a tile type $s$ to designate as a ``timekeeper seed''.
The assembly of $s$ into some terminal assembly $\ha$ is described as a time-dependent continuous-time Markov process in which each state represents a producible assembly containing $s$, and the initial state is the size-1 assembly with only $s$.
For each state $\alpha$ representing a producible assembly with $s$ at the origin, and for each pair of producible assemblies $\beta,\gamma$ such that $\alpha + \beta \to \gamma$ (with the translation assumed to happen only to $\beta$ so that $\alpha$ stays ``fixed'' in position), there is a transition in the Markov process from state $\alpha$ to state $\gamma$ with transition rate $[\beta](t)$.\opt{normal}{\footnote{That is, for the purpose of determining the continuous dynamic evolution of the concentration of assemblies, including $\alpha$,  in solution at time $t$, the rate of the reaction $\alpha + \beta \to \gamma$ at time $t$ is assumed to be proportional to $[\alpha](t)[\beta](t)$ (or half this value if the reaction is symmetric).
However, for the purpose of determining the stochastic dynamic evolution of one particular copy of $s$, the rate of this reaction at time $t$ is assumed to be proportional only to $[\beta](t)$.
This is because we want to describe the rate at which \emph{this particular copy} of $\alpha$, the one containing the copy of $s$ that we fixed at time 0, encounters assemblies of type $\beta$.
This instantaneous rate is independent of the number of \emph{other} copies of $\alpha$ at time $t$ (although after $\epsilon$ seconds the rate will change to $[\beta](t+\epsilon)$, which of course \emph{will} depend on $[\alpha]$ over that time interval).}}
Unlike the seeded model, the transition rates vary over time since the assemblies (including assemblies that are individual tiles) with which $\alpha$ could interact are themselves being produced and consumed.

We define $\time_{\calT,C,s}$ to be the random variable representing the time taken for the copy of $s$ to assemble into a terminal assembly via some sequence of reactions as defined above.
We define the time complexity of a directed hierarchical TAS $\calT$ with concentrations $C$ and timekeeper $s$ to be $\mathsf{T}(\calT,C,s) = \exp{\time_{\calT,C,s}}$, and the time complexity with respect to $s$ as $\mathsf{T}(\calT,s) = \min_C \mathsf{T}(\calT,C,s)$.\footnote{It is worth noting that this expected value could be infinite.
This would happen if some partial assembly $\alpha$, in order to complete into a terminal assembly, requires the attachment of some assembly $\beta$ whose concentration is depleting quickly.}

%We leave the tile type $s$ as a parameter of the definition of time complexity (as well as the concentrations function $C$), which could be different for different choices of $s$.
%It is tempting to define the absolute time complexity of a hierarchical tile system $\calT=(T,\tau)$ as $\mathsf{T}(\calT) = \max_{s \in T} \mathsf{T}(\calT,s)$.
%However, our lower bound result, Theorem~\ref{thm-hier-partial-order-slow}, applies to a class of hierarchical tile systems known as \emph{partial order systems}, which are also parameterized by a particular tile type.
%Besides, it is not clear whether defining $\mathsf{T}(\calT) = \min_{s \in T} \mathsf{T}(\calT,s)$ would make more sense.

We note in particular that our construction of Theorem~\ref{thm-hierarchical-square} is composed of $(\frac{n}{\log n})^2$ different types of $O(\log n) \times O(\log n)$ ``blocks'' that can each grow via only one reaction.
At least one of these blocks $\beta$ must obey $[\beta](t) \leq \frac{\log^2 n}{n^2}$ for all $t\in\R_{\geq 0}$ (this can be seen by applying Lemma~\ref{lem-conserve-mass}, proven in Section~\ref{subsec-lower-bound}).
This implies that the rate of the slowest such reaction has rate at most $\frac{\log^2 n}{n^2}$, hence expected time at least the inverse of that quantity.
Thus our square construction assembles in at least $\Omega(\frac{n^2}{\log^2 n})$ time, slower than the optimal seeded time of $O(n)$ \cite{AdChGoHu01}.
Proving this formally requires more details that we omit.
However, it is simple to modify the system to have a tile type appearing in exactly one position in the terminal assembly, for example, by attaching such a tile type only to the block at coordinate $(0,0)$.
It is routine to check that this would make the system a partial order system with respect to that tile type as defined in Section~\ref{sec-defn-partial-order-systems}.
Then Theorem~\ref{thm-hier-partial-order-slow} implies a time complexity lower bound of $\Omega(n)$, much slower than the polylogarithmic time one might na\"{i}vely expect due to the polylogarithmic depth of the assembly tree.

\section{Time complexity lower bound for hierarchical partial order systems}
\label{sec-time-complexity-lower-bound}

In this section we show that the subset of hierarchical TAS's known as \emph{partial order systems} cannot assemble any shape of diameter $D$ in faster than time $\Omega(D)$.

\subsection{Definition of hierarchical partial order systems}
\label{sec-defn-partial-order-systems}

%We will show a lower bound on time complexity for a special class of hierarchical directed TAS's known as \emph{partial order systems}.
Seeded partial order systems were first defined by Adleman, Cheng, Goel, and Huang~\cite{AdChGoHu01} for the purpose of analyzing the running time of their optimal square construction.
Intuitively, a seeded directed TAS with unique terminal assembly $\ha$ is a partial order system if every pair of adjacent positions $p_1$ and $p_2$ in $\ha$ that interact with positive strength have the property that either $p_1$ always receives a tile before $p_2$, or vice versa.
We extend the definition of partial order systems to hierarchical systems in the following way.

Let $\calT=(T,\tau)$ be a hierarchical directed TAS with unique terminal assembly $\ha\in\termasm{\calT}$.
A \emph{terminal assembly tree} of $\ha$ is a full binary tree with $|\ha|$ leaves, in which each leaf is labeled with an assembly consisting of a single tile, the root is labeled with $\ha$, and each internal node is labeled with an assembly producible in one step from the $\tau$-stable attachment of its two child assemblies.\footnote{Note that even a directed hierarchical TAS may have more than one assembly tree of the terminal assembly $\ha$.}
Let $\Upsilon$ be any terminal assembly tree of $\calT$.
Let $p\in\dom\ha$ and let $s=\ha(p)$.
The \emph{assembly sequence with respect to $\Upsilon$ starting at $p$} is the sequence of assemblies $\vec{\alpha}_{p,\Upsilon} = (\alpha_1,\ldots,\alpha_k)$ that represent the path from the leaf corresponding to $p$ to the root of $\Upsilon$, so that $\alpha_1$ is the single tile $s$ at position $p$, and $\alpha_k = \ha$.\footnote{That is, $\vec{\alpha}$ is like a seeded assembly sequence in that each $\alpha_i$ is a \emph{subassembly} of $\alpha_{i+1}$ (written $\alpha_i \sqsubseteq \alpha_{i+1}$, meaning $\dom\alpha_i \subseteq \dom\alpha_{i+1}$ and $\alpha_{i}(p) = \alpha_{i+1}(p)$ for all $p \in \dom\alpha_{i}$).
The difference is that $\alpha_i$ and $\alpha_{i+1}$ may differ in size by more than one tile, since $\dom \alpha_{i+1} \setminus \dom\alpha_i$ will consist of all points in the domain of $\alpha_i$'s sibling in $\Upsilon$.}
An \emph{assembly sequence starting at $p$} is an assembly sequence with respect to $\Upsilon$ starting at $p$, for some valid assembly tree $\Upsilon$.

An \emph{attachment quasiorder}  with respect to $p\in\dom\ha$ is a quasiorder (a reflexive, transitive relation) $\preceq$ on $\dom \ha$ such that the following holds:
\begin{enumerate}

  \item For every $p_1,p_2\in\dom\ha$, $p_1 \preceq p_2$ if and only if for every assembly sequence $\vec{\alpha}=(\alpha_1,\ldots,\alpha_k)$ starting at $p$, for all $1 \leq i \leq k$, $\alpha_i(p_2)$ is defined $\implies$ $\alpha_i(p_1)$ is defined.
  In other words, $p_1$ must always have a tile by the time $p_2$ has a tile. (Perhaps they arrive at the same time, if they are both part of some assembly that attaches in a single step.)

  \item For every pair of adjacent positions $p_1,p_2 \in \dom\ha$, if the tiles at positions $p_1$ and $p_2$ interact with positive strength in $\ha$, then $p_1 \preceq p_2$ or $p_2 \preceq p_1$ (or both).
\end{enumerate}

%TODO: add figures showing intuition of partial order systems

If two tiles always arrive at the same time to the assembly containing $p$, then they will be in the same equivalence class induced by $\preceq$.
Given an attachment quasiorder $\preceq$, we define the \emph{attachment partial order $\prec$ induced by $\preceq$} to be the strict partial order on the quotient set of equivalence classes induced by $\preceq$.
In other words, if some subassembly $\alpha \sqsubseteq \ha$ always attaches to the assembly containing $p$ all at once, then all positions $p' \in \dom\alpha$ will be equivalent under $\preceq$.\footnote{More generally, if there is a subset $X\subset \dom\ha$ such that all assemblies $\alpha$ attaching to the assembly containing $p$ have the property that $X \cap \dom\alpha \neq \emptyset \implies X \subseteq \dom\alpha$; i.e., any position in $X$ attaching implies all positions in $X$ attach with it, then this implies all positions in $X$ are equivalent under $\preceq$.}
It is these equivalence classes of positions that are related under $\prec$.
Each attachment partial order $\prec$ induces a directed acyclic graph $G=(V,E)$, where $V = \{\beta_1,\ldots,\beta_k\}$, each $\beta_i$ represents the subassembly corresponding to some equivalence class (under $\preceq$) of positions in $\dom\ha$, and $(\beta_i,\beta_j) \in E$ if $\dom\beta_i \prec \dom\beta_j$.\footnote{i.e., if the positions are nodes on a directed graph with an edge from $p_1$ to $p_2$ if $p_1 \preceq p_2$, then each equivalence class is a strongly connected component of the graph, and $\prec$ describes the \emph{condensation} directed acyclic graph obtained by contracting each strongly connected component into a single vertex.}
Note that the first assembly $\alpha_1$ of the assembly sequence containing $p$ is always size 1, since by definition $p$ is the only position with a tile at time 0.

We say that a directed hierarchical TAS $\calT$ with unique terminal assembly $\ha$ is a \emph{hierarchical partial order system with respect to $p$} if it has an attachment quasiorder with respect to $p$.
Given a tile type $s\in T$ that appears exactly once in the terminal assembly $\ha$ at position $p$ (i.e., $\ha(p)=s$ and $(\forall q \in \dom \ha \setminus \{p\})\ \ha(q) \neq s$), we say that $\calT$ is a \emph{hierarchical partial order system with respect to $s$} if $\calT$ is a hierarchical partial order system with respect to $p$. %, for all $p\in\dom\ha$ such that $\ha(p)=s$, $\calT$ is a hierarchical partial order system with respect to $p$.

\begin{remark}
The condition that $s$ appears exactly once in $\ha$ allows us to talk interchangeably of a partial order with respect to a \emph{position} and a partial order with respect to a \emph{tile type}.
If instead we allowed $s$ to appear in multiple positions in $\ha$, then the sequence of attachments would nondeterministically choose one of them.
Since the partial order imposed on $\dom \ha$ is different depending on the position chosen to define the partial order, we would lose the ability to focus on ``the'' partial order imposed by $s$ (since there would be more than one possible).
It is conceivable that a tile type $s$ could appear in multiple positions in $\ha$, and $\calT$ could be a partial order with respect to all of these positions.
It is an open question whether Theorem~\ref{thm-hier-partial-order-slow} would apply to such a system.
Our proof technique relies fundamentally on $s$ appearing in a single fixed position in $\ha$ and defining a single partial order on $\dom \ha$ that is then used to establish the assembly time lower bound.
The definition of seeded partial order systems~\cite{AdChGoHu01,ACGHKMR02} has a similar requirement, that the seed tile type appear exactly once in the terminal assembly.
\end{remark}

In the case of seeded assembly, in which each attachment is of a ``subassembly'' containing a single tile to a subassembly containing the seed, this definition of partial order system is equivalent to the definition of partial order system given in~\cite{AdChGoHu01}.
In the seeded case, since no tiles may attach simultaneously, the attachment quasiorder we have defined induces equivalence classes that are singletons, and the resulting induced strict partial order is the same as the strict partial order used in~\cite{AdChGoHu01}.
%\opt{normal}{It is routine to check that the TAS described in Section~\ref{sec-hierarchical-square} is a partial order system, using one of the orange tiles in the upper-left of Figure~\ref{fig:two-handed-square-block} as a ``seed.''}

\subsection{Repetitious assemblies}

This section shows that hierarchical partial order systems are well-behaved in a certain technical sense that will be useful in the proof of Theorem~\ref{thm-hier-partial-order-slow}.

\begin{definition}
  Two overlapping assemblies $\alpha$ and $\beta$ are \emph{consistent} if $\alpha(p)=\beta(p)$ for every $p\in\dom\alpha\cap\dom\beta$.
  If $\alpha$ and $\beta$ are consistent, define their \emph{union} $\alpha \cup \beta$ to be the assembly with $\dom (\alpha \cup \beta) = \dom \alpha \cup \dom \beta$ defined by $(\alpha \cup \beta)(p) = \alpha(p)$ if $p \in \dom \alpha$ and $(\alpha \cup \beta)(p) = \beta(p)$ if $p \in \dom \beta$.
  Let $\alpha \cup \beta$ be undefined if $\alpha$ and $\beta$ are not consistent.
\end{definition}

\begin{definition}
  Let $\alpha$ be a producible assembly, let $\vv \in \Z^2$ be a vector, and let $\alpha+\vv$ denote the translation of $\alpha$ by $\vv$, i.e., an assembly $\beta$ such that $\dom\beta=\dom\alpha+\vv$ and $\beta(p)=\alpha(p-\vv)$ for all $p\in\dom\beta$.
  We say that assembly $\alpha$ is \emph{repetitious} if there exists a nonzero
  vector $\vv \in \Z^{2}$ such that $\dom \alpha \cap \dom (\alpha  + \vec
  v)\ne \emptyset$ and $\alpha$ and $\alpha  + \vec v$ are consistent.
\end{definition}

The following theorem shows an important property of hierarchical systems that will be useful.

\begin{theorem}[\cite{poirghtsJoCG}]\label{thm-repetitious}
  Let $\calT$ be a hierarchical tile assembly system.
  If $\calT$ has a producible repetitious assembly, then arbitrarily large assemblies are producible in $\calT$.
\end{theorem}

Let $\alpha,\beta$ be producible assemblies that can attach.
Imagine fixing the position of $\alpha$ so that $\beta$ must be translated to attach to $\alpha$.
Let $V_{\alpha,\beta}$ be the set of all vectors $\vv$ such that $\dom \alpha \cap (\dom \beta + \vv) = \emptyset$ and $\alpha \cup (\beta + \vv)$ is a stable assembly; i.e., $V_{\alpha,\beta}$ describes the set of all ways to attach $\beta$ to $\alpha$.
We say that $\alpha$ and $\beta$ are \emph{disjointly attachable} if, for every $\vv_1,\vv_2 \in V_{\alpha,\beta}$ such that $\vv_1 \neq \vv_2$, it holds that $\dom(\beta+\vv_1) \cap \dom(\beta+\vv_2) = \emptyset$; in other words, no two translations of $\beta$ that allow it to attach to $\alpha$ overlap each other.

\begin{corollary}\label{cor-disjointly-attachable}
  Let $\calT=(T,\tau)$ be a hierarchical partial order system with respect to tile type $s \in T$ appearing at position $p \in \dom \ha$ in the terminal assembly $\ha$, and let $(\alpha_1,\ldots,\alpha_k)$ be an assembly sequence starting at $p$.
  Then for every $i \in \{1,\ldots,k\}$, every producible assembly $\beta$ that can attach to $\alpha_i$ is disjointly attachable to $\alpha_i$.
\end{corollary}

\begin{proof}
  Because $s$ appears only at position $p \in \dom\ha$ and $\ha$ is the unique terminal assembly, every position relative to the position of $s$ has a fixed tile type appearing there.

  Suppose for the sake of contradiction that there is some producible assembly $\beta$ that is attachable to $\alpha_i$, but not disjointly, so that there are two vectors $\vv_1 \neq \vv_2$ such that, defining $\beta_1 = \beta + \vv_1$ and $\beta_2 = \beta+\vv_2$, we have that $\alpha \cup \beta_1$ and $\alpha \cup \beta_2$ are both stable assemblies (and $\dom\alpha \cap \dom\beta_1 = \emptyset$ and $\dom\alpha \cap \dom\beta_2 = \emptyset$), but $\dom \beta_1 \cap \dom \beta_2 \neq \emptyset$.
  Therefore, for every position in the overlap $q \in \dom \beta_1 \cap \dom \beta_2$, $\beta_1(q) = \beta_2(q)$, otherwise this would imply one terminal assembly producible from $\alpha_i$ that has one tile type at position $q$ and another that has a different tile type at position $q$.
  But this is precisely what it means for $\beta$ to be a repetitious assembly.
  Theorem~\ref{thm-repetitious} then implies that arbitrarily large assemblies are producible in $\calT$, contradicting the fact that it produces a unique terminal assembly.
\end{proof}

\subsection{Linear time lower bound for partial order systems}
\label{subsec-lower-bound}

Theorem~\ref{thm-hier-partial-order-slow} establishes that hierarchical partial order systems, like their seeded counterparts, cannot assemble a shape of diameter $D$ in less than $\Omega(D)$ time.
This is potentially counterintuitive, since an attaching assembly of size $K$ is able to increase the size of the growing assembly by $K$ tiles in a single attachment step.
Intuitively, the time lower bound is proven by using the fact that such an assembly can have concentration at most $\frac{1}{K}$ by conservation of mass, slowing down its rate of attachment (compared to the rate of a single tile) by factor at least $K$, precisely enough to cancel out the potential speedup over a single tile due to its size.

This simplistic argument is not quite accurate and must be amortized --- using our Conservation of Mass Lemma (Lemma~\ref{lem-conserve-mass}) --- over all assemblies that could extend the growing assembly.
The growing assembly may be extended at a single attachment site by more than one assembly.
However, by Lemma~\ref{lem-conserve-mass}, these assemblies must \emph{collectively} have limited total concentration.
Intuitively, the property of having a partial order on binding subassemblies ensures that the assembly of each path in the partial order graph proceeds by a series of rate-limiting steps.
We prove upper bounds on each of these rates using this concentration argument.\opt{normal}{\footnote{The same assembly $\alpha$ could attach to many locations $p_1,\ldots,p_n$.
In a TAS that is not a partial order system, it could be the case that there is not a fixed attachment location that is necessarily required to complete the assembly.
In this case completion of the assembly might be possible even if only one of $p_1,\ldots,p_n$ receives the attachment of $\alpha$.
Since the minimum of $n$ exponential random variables with rate $1/K$ is itself exponential with rate $n/K$, the very first attachment of $\alpha$ to any of $p_1,\ldots,p_n$ happens in expected time $K/n$, as opposed to expected time $K$ for $\alpha$ to attach to a \emph{particular} $p_i$.
This prevents our technique from applying to such systems, and it is the fundamental speedup technique in our proof of Theorem~\ref{thm-fast-rectangle}.}}
Since the rate-limiting steps must occur in order, we can then use linearity of expectation to bound the total expected time.
%It is open whether TAS's that are not partial order systems are subject to the same time complexity lower bound.}

The following is a ``conservation of mass lemma'' that will be helpful in the proof of Theorem~\ref{thm-hier-partial-order-slow}.
Note that it applies to any hierarchical system.
%Define $\R_{\geq 0} = [0,\infty)$.

\begin{lemma}[{\bf Conservation of Mass Lemma}]\label{lem-conserve-mass}
  Let $\calT=(T,\tau)$ be a hierarchial TAS and let $C:T\to[0,1]$ be a concentrations function. Then for all $t \in \R_{\geq 0}$,
  \[
  \sum_{\alpha \in \prodasm{\calT}} [\alpha](t) \cdot |\alpha| = \sum_{r \in T} C(r) \ \ \ \ (\leq 1).
  \]
  %where $[\alpha](t)$ is the concentration of assembly $\alpha$ at time $t$.
\end{lemma}

\begin{proof}
  For all $t\in\R_{\geq 0}$, define $f(t) = \sum_{\alpha \in \prodasm{\calT}} [\alpha](t) \cdot |\alpha|.$
  According to our model, $[\alpha](0) = C(r)$ if $\alpha$ consists of a single tile type $r$ and $[\alpha](0)=0$ otherwise, so $f(0) = \sum_{r \in T} C(r)$.
  Therefore it is sufficient (and necessary) to show that $\frac{df}{dt} = 0$.
  For all $\alpha \in \prodasm{\calT}$ and $t\in\R_{\geq 0}$, define $f_\alpha(t) = [\alpha](t) \cdot |\alpha|$.
  Then by equation~\eqref{eq-conc-diff}, and recalling from that equation the definitions of $m$, $n$, $p$, $\beta'_i$, $\beta''_i$, $\gamma'_i$, and $\beta_i$, annotated as $m(\alpha), n(\alpha)$, etc. to show their dependence on $\alpha$, we have
  \[
    \frac{d f_\alpha}{dt} = |\alpha| \cdot \left(
    \sum_{i=1}^{m(\alpha)} [\beta'_i(\alpha)](t) \cdot [\gamma'_i(\alpha)](t) +
    \sum_{i=1}^{p(\alpha)} \frac{1}{2} \cdot [\beta''_i(\alpha)](t)^2 -
    \sum_{i=1}^{n(\alpha)} [\alpha](t) \cdot [\beta_i(\alpha)](t)
    \right).
  \]
  Then
  \begin{eqnarray*}
    \frac{df}{dt}
    &=&
    \frac{d}{dt} \sum_{\alpha \in \prodasm{\calT}} f_\alpha(t)
    %\\&=&
    =
    \sum_{\alpha \in \prodasm{\calT}} \frac{d f_\alpha}{dt}
    \\&=&
    %=
    \sum_{\alpha \in \prodasm{\calT}} \left(\begin{array}{l}
    \displaystyle\sum_{i=1}^{m(\alpha)} |\alpha| \cdot [\beta'_i(\alpha)](t) \cdot [\gamma'_i(\alpha)](t) +
    \displaystyle\sum_{i=1}^{p(\alpha)} |\alpha| \cdot \frac{1}{2} \cdot [\beta''_i(\alpha)](t)^2
    \\
    - \displaystyle\sum_{i=1}^{n(\alpha)} |\alpha| \cdot [\alpha](t) \cdot [\beta_i(\alpha)](t)
    \end{array}\right).
  \end{eqnarray*}
  Let $\mathcal{R}$ denote the set of all attachment reactions of $\calT$, writing $R(\alpha,\beta,\gamma)$ to denote the reaction $\alpha + \beta \to \gamma$.
  For each such reaction, $|\alpha| + |\beta| = |\gamma|$.
  In particular, if $\alpha=\beta$, then $|\gamma|=2|\alpha|$.
  Each such asymmetric reaction contributes precisely three unique terms in the right hand side above: two negative (of the form $- |\alpha| \cdot [\alpha](t) \cdot [\beta](t)$ and $-|\beta| \cdot [\alpha](t) \cdot [\beta](t)$) and one positive (of the form $|\gamma| \cdot [\alpha](t) \cdot [\beta](t)$).
  Each such symmetric reaction contributes two unique terms: one negative (of the form $- |\alpha| \cdot [\alpha](t)^2$) and one positive (of the form $|\gamma| \cdot \frac{1}{2} \cdot [\alpha](t)^2$).

  \newcommand{\asymrxn}{{\stack{ R(\alpha,\beta,\gamma) \in \mathcal{R} }{ \alpha \neq \beta }}}
  \newcommand{\symrxn}{{R(\alpha,\alpha,\gamma) \in \mathcal{R}}}

  Then we may rewrite the above sum as
  \begin{align*}
    \frac{df}{dt}
    &=
    \sum_\asymrxn
    \left(|\gamma| \cdot [\alpha](t) \cdot [\beta](t) - |\alpha| \cdot [\alpha](t) \cdot [\beta](t) - |\beta| \cdot [\alpha](t) \cdot [\beta](t) \right)
    \\
    & \ \ \ \ + \sum_\symrxn
    \left(\frac{1}{2} |\gamma| \cdot [\alpha](t)^2 - |\alpha| \cdot [\alpha](t)^2 \right)
    \\&=
    \sum_\asymrxn (|\gamma| - |\alpha| - |\beta|) \cdot [\alpha](t) \cdot [\beta](t)
    +
    \sum_\symrxn
    \left( \frac{1}{2} |\gamma| - |\alpha| \right) \cdot [\alpha](t)^2
    \\&=
    \sum_\asymrxn 0 \cdot [\alpha](t) \cdot [\beta](t)
    +
    \sum_\symrxn
    0 \cdot [\alpha](t)^2
    =
    0. \qedhere
  \end{align*}
\end{proof}

The following is the main theorem of this paper, and it shows that hierarchical partial order systems require time linear in the diameter of a shape in order to produce it.

\begin{theorem} \label{thm-hier-partial-order-slow}
  Let $\calT=(T,\tau)$ be a hierarchial partial order system with respect to $s \in T$, with unique terminal assembly $\ha$ of $L_1$ diameter $D$.
  Then $\mathsf{T}(\calT,s) = \Omega(D)$.
\end{theorem}

\newcommand{\ProofHierPartialOrderSlow}{
  Let $C:T\to[0,1]$ be a concentrations function.
  Let $\ha\in\termasm{\calT}$ be the unique terminal assembly of $\calT$, and let $p\in\dom\ha$ be the unique position such that $\ha(p)=s$.
  Let $\preceq$ be the attachment quasiorder testifying to the fact that $\calT$ is a partial order system with respect to $p$.
  Let $\prec$ be the strict partial order induced by $\preceq$.
  Let $G=(V,E)$ be the directed acyclic graph induced by $\prec$.
  Assign weights to the edges of $E$ by $w(\alpha_i,\alpha_j) = |\alpha_j|$.

  If $P' = (\alpha'_1,\ldots,\alpha'_l)$ is any path in $G$, define the \emph{(weighted) length} of $P$ to be $w(P') = \sum_{i=1}^{l-1} w(\alpha'_i,\alpha'_{i+1})$.
  Let $q,r\in\dom\ha$ be two points at $L_1$ distance $D$, which must exist since the diameter of $\dom\ha$ is $D$.
  The distance $d$ from $p$ to one of these points --- without loss of generality, call it $q$ --- in $\dom\ha$ is at least $D/2$ by the triangle inequality.
  Let $\alpha_n \in V$ be such that $q\in\dom\alpha_n$, and let $P=(\alpha_1,\ldots,\alpha_n)$ be any path in $G$ starting with $\alpha_1$ and ending with $\alpha_n$, where $\alpha_1$ is the assembly consisting just of the tile at position $p$.
  The weight of each edge in $P$ is an upper bound on the diameter of the assembly it represents, and by the triangle inequality, the sum of these diameters for all $\alpha_i$ in $P$ is itself an upper bound on the distance from $p$ to $q$.
  Therefore $w(P) \geq D/2$.

  Let $\mathbf{T}_P$ be the random variable representing the time taken for the complete path $P$ to assemble.
  Since the tiles on $P$ represent a subassembly of $\ha$, $\ha$ cannot completely form until the path $P$ forms.
  Therefore $\mathbf{T}_P \leq \time_{\calT,C,p}$.
  Since $\mathsf{T}(\calT,C,p) = \exp{\time_{\calT,C,p}}$, it suffices to show that $\exp{\mathbf{T}_P} = \Omega(w(P))$.

  Because of the precedence relationship described by $\prec$, no portion of the path $P$ can form until its immediate predecessor on $P$ is present.
  After some amount of time, some prefix $P'$ of the path $P$ has assembled (possibly with some other portions of $\ha$ not on the path $P$).
  Given $t\in\R_{\geq 0}$, let $\mathbf{L}(t)$ be the random variable indicating the weighted length of this prefix after $t$ units of time.

  We claim that for all $t\in\R_{\geq 0}$, $\exp{\mathbf{L}(t)} \leq t$; this claim is proven below.
  Assuming the claim, by Markov's inequality, $\prob[\mathbf{L}(t) \geq 2t] \leq \frac{1}{2}$.
  Letting $t=w(P)/2$, the event $\mathbf{L}(w(P)/2) \geq w(P)$ is equivalent to the event $\mathbf{T}_{P} \leq w(P)/2.$
  Thus $\prob[\mathbf{T}_{P} \leq w(P)/2] \leq \frac{1}{2}.$
  By Markov's inequality, $\exp{\mathbf{T}_{P}} \geq w(P) / 4 \geq D/8 = \Omega(D),$ which proves the theorem, assuming that $\exp{\mathbf{L}(t)} \leq t$.

  The remainder of the proof shows the claim that for all $t\in\R_{\geq 0}$, $\exp{\mathbf{L}(t)} \leq t$.
  Define the function $f:\R_{\geq 0} \to \R_{\geq 0}$ for all $t\in\R_{\geq 0}$ by $f(t) = \exp{\mathbf{L}(t)}$, noting that $f(0)=0$.
  Let $f' = \frac{d f}{d t}$.
  Let $P' = (\alpha_1, \ldots, \alpha_{m})$ be the prefix of $P$ formed after $t$ seconds.  Let $\beta_1, \beta_2, \ldots, \beta_{k}$, with $m+k=n$, be the individual subassemblies remaining on the path, in order, so that $P=(\alpha_1, \ldots, \alpha_m,\beta_1,\ldots,\beta_k)$.
  For all $1 \leq i \leq k$, let $\gamma_i = \bigcup_{j=1}^i \beta_j$ be the union of the next $i$ such subassemblies on the path (representing each of the amounts by which $P$ could grow in the next attachment event).
  Let $s_i = |\gamma_i|$ be the size of the $i^\text{th}$ subassembly, and let $c_i(t) = \sum_{\alpha \in A_i(t)} [\alpha](t)$, where $A_i(t)$ is the set of subassemblies (possibly containing tiles not on the path $P$) at time $t$ that contain $\gamma_i$ but do not contain $\gamma_{i+1}$.
  $A_i(t)$ represents the set of all assemblies that could attach to grow $P$ by exactly the tiles in $\gamma_i$.

  However, the set of \emph{reactions} that could grow $P$ is what matters, but Corollary~\ref{cor-disjointly-attachable} implies that no assembly extending $P'$ could attach via two different reactions that both intersect $P$ at the location directly succeeding $P'$.
  Thus, summing over assemblies is equivalent to summing over reactions.
  Our argument uses the conservation of mass property (Lemma~\ref{lem-conserve-mass}) to show that no matter the concentration of assemblies in each $A_i(t)$, the rate of growth is at most one tile per unit of time.

  For each $1 \leq i \leq k$, in the next instant $dt$, with probability $c_i dt$ the prefix will extend by total weighted length $s_i$ by attachment of (a superassembly containing) $\gamma_i$.
  This implies that $f'(t) \leq \sum_{i=1}^k c_i(t) \cdot s_i$.
  Invoking Lemma~\ref{lem-conserve-mass}, it follows that for all $t\in\R_{\geq 0}$,
  \[
    f'(t)
    \leq
    \sum_{i=1}^k c_i(t) \cdot s_i
    \leq
    \sum_{\alpha \in \prodasm{\calT}} [\alpha](t) \cdot |\alpha|
    =
    \sum_{r \in T} C(r)
    \leq 1.
  \]
  Since $f(0)=0$, this implies that $f(t) \leq t$ for all $t\in\R_{\geq 0}$, which completes the proof of the claim that $\exp{\mathbf{L}(t)} \leq t$.
}

\opt{normal}{
    \begin{proof}
    \ProofHierPartialOrderSlow
    \end{proof}

    Although our assembly time model describes concentrations as evolving according to the standard mass-action kinetic differential equations, Lemma~\ref{lem-conserve-mass} is the only property of this model that is required for our proof of Theorem~\ref{thm-hier-partial-order-slow}.
    Even if concentrations of attachable assemblies (and thus their associated attachment rates in the Markov process defining assembly time) could be magically adjusted throughout the assembly process so as to minimize the assembly time, so long as the concentrations obey Lemma~\ref{lem-conserve-mass} at all times, Theorem~\ref{thm-hier-partial-order-slow} still holds.

    For example, staged assembly~\cite{DDFIRSS07} is a relaxation of the mass-action model that obeys Lemma~\ref{lem-conserve-mass}, in which certain assemblies are artificially prevented from interacting by being kept in separate bins before being mixed.
    Theorem~\ref{thm-hier-partial-order-slow} implies that staged assembly gives no time speedup on partial order systems if the completion time in each bin is taken into account in measuring the time complexity.
}

\section{Assembly of a shape in time sublinear in its diameter}
\label{sec-fast-rectangle}

\opt{submission}{This section states and outlines the ideas of the proof of the following theorem, which shows that relaxing the partial order assumption on hierarchical tile systems allows for assembly time sublinear in the diameter of the shape.
%Section~\ref{app-sec-fast-rectangle} expands on these ideas in more detail.
}

\opt{normal}{This section is devoted to proving the following theorem, which shows that relaxing the partial order assumption on hierarchical tile systems allows for assembly time sublinear in the diameter of the shape.}

\begin{theorem}\label{thm-fast-rectangle}
  For infinitely many $n \in \N$, there is a (non-directed) hierarchical TAS $\calT=(T,2)$ that strictly self-assembles an $n \times n'$ rectangle, where $n' = o(n)$, such that $|T|=O(\log n)$ and there is a tile type $s \in T$ such that $\mathsf{T}(\calT,s) = O(n^{4/5} \log n)$.
\end{theorem}

\opt{normal}{
In our proof, $n' \approx n^{3/5}$, but we care only that $n' \leq n$ so that the diameter of the shape is $\Theta(n)$.
As discussed in Section~\ref{sec-runtime}, we interpret the upper bound of Theorem~\ref{thm-fast-rectangle} more cautiously than the lower bound of Theorem~\ref{thm-hier-partial-order-slow}, since some of our simplifying assumptions concerning diffusion rates and binding strength thresholds, discussed in Section~\ref{sec-conclusion}, may cause the assembly time to appear artificially faster in our model than in reality.
A reasonable conclusion to be drawn from Theorem~\ref{thm-fast-rectangle} is that concentration arguments alone do not suffice to show a linear-time lower bound on assembly time in the hierarchical model.}

\opt{submission}{
    \begin{figure}[htb]%
    \centering
    \parbox{3.1in}{%
    \includegraphics[width=3.1in]{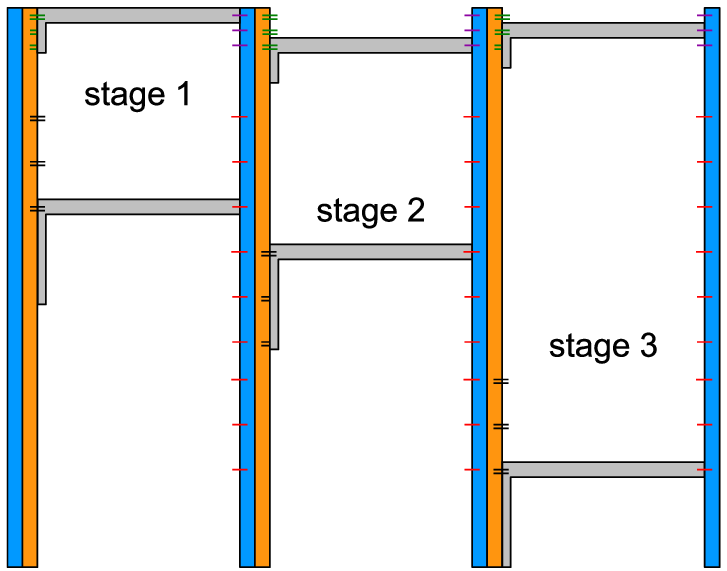}
    \caption{\figuresize
      High-level overview of interaction of ``vertical bars'' and ``horizontal bars'' to create the rectangle of Theorem~\ref{thm-fast-rectangle}.
      Filler tiles fill in the empty regions.
      If glues overlap two regions, then they represent a formed bond.
      If glues overlap one region but not another, then they are glues from the former region but are mismatched (and thus ``covered'') by the latter region.
      This is intended to ensure that only one horizontal bar can attach in each of the top and bottom regions of a vertical bar.}%
    \label{fig:vertical-bars-overview-submission}}%
    \qquad
    \begin{minipage}{3in}%
    \includegraphics[width=3in]{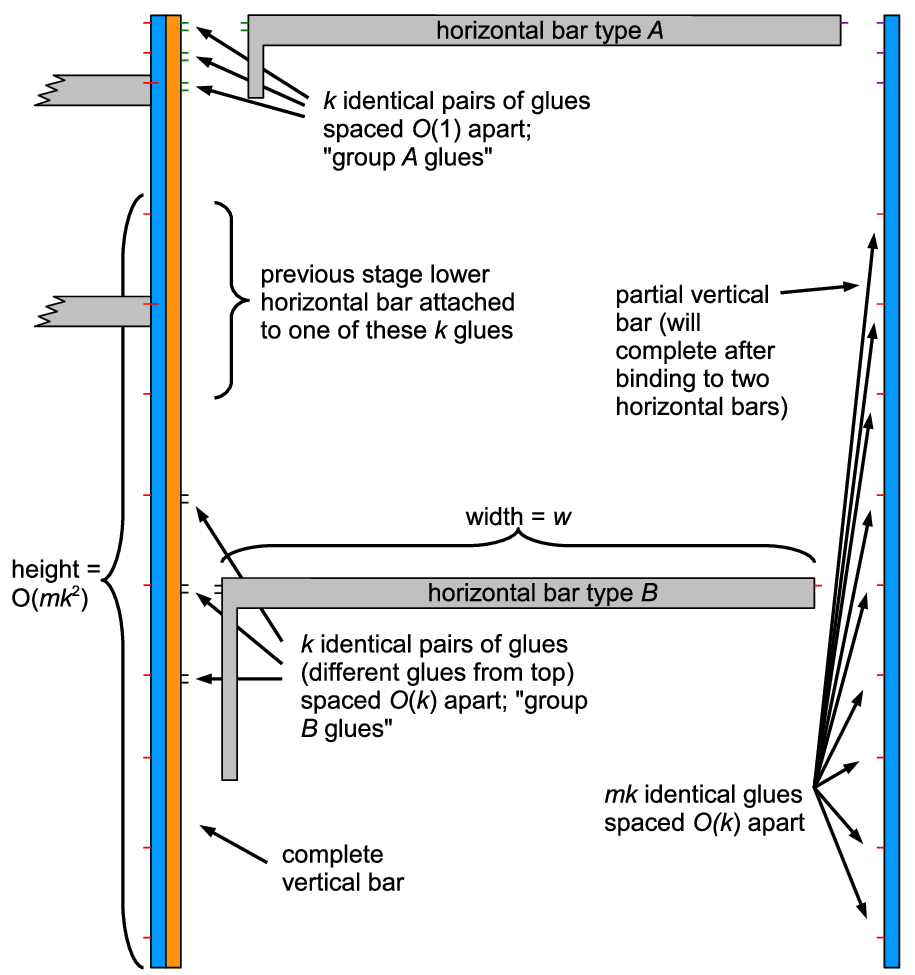}
    \caption{\figuresize
      Vertical bars and horizontal bars for the construction of a fast-assembling square.
      ``Type $B$'' horizontal bars have a longer vertical arm than ``Type $A$'' since the glues they must block are farther apart.}%
    \label{fig:vertical-bars-submission}%
    \end{minipage}%
    \end{figure}%

    A high-level overview of the construction is shown in Figure~\ref{fig:vertical-bars-overview-submission}.
    The interaction of the components of this figure are shown in more detail in Figure~\ref{fig:vertical-bars-submission}.
    Let $k=m=n^{1/5}$, and let $w=n^{4/5}$.
    The rectangle grows rightward in $m$ ``stages'', each stage of width $w$ and height $h = O(m k^2)$.\footnote{In this section, we use the term ``stage'' merely to mean different regions of the final assembly.  This is a different usage of ``stage'' than used in~\cite{DDFIRSS07}.}
    Each stage consists of the attachment of two ``horizontal bars'' to the right, which in turn cooperate to place a single ``vertical bar'', which (after some additional growth not shown) will contain glues for binding of horizontal bars in the next stage.
    The speedup is obtained by using ``binding parallelism'': the ability of a single (large) assembly $\beta$ to bind to multiple sites on another assembly $\alpha$.
    Think of $\alpha$ as the structure built so far, with a vertical bar on its right end, and think of $\beta$ as one of the horizontal bars shown in Figures~\ref{fig:vertical-bars-submission} and \ref{fig:vertical-bars-overview-submission}.
    This ``binding parallelism'' is in addition to ``assembly parallelism'': the ability for $\alpha$ to assemble in parallel with $\beta$ so that (a large concentration of) $\beta$ is ready to bind as soon as $\alpha$ is assembled.
    The number $k$ controls the amount of ``binding parallelism'': $k$ is the number of binding sites on $\alpha$ to which $\beta$ may bind, the \emph{first} of which binds in expected time $\frac{1}{k}$ times that of the expected time before any \emph{fixed} binding site binds (since the minimum of $k$ exponential random variables of expected value $t$ has expected value $\frac{t}{k}$).
    More precisely, two different versions of $\beta$ bind to one of two different regions on $\alpha$, each region having $k$ binding sites.
    The choice of spacing between binding sites is to ensure that each pair of vertical positions where the two horizontal bars could go are separated by a unique distance.
    This means that the vertical bar, despite having all possible binding sites available on its left side, will always bind in the same vertical position relative to the vertical bar to its left, ensuring that the \emph{shape} assembled is always a rectangle.
    These distances are also used to communicate and increment the current stage, encoded in the vertical position of the bottom horizontal bar, since the glues on its right are not specific to the stage.
    Because assembly may proceed as soon as each of the two regions has a $\beta$ bound (so that no individual binding site is required before assembly can proceed), the system is not a partial order system;
    in fact it is not even directed since different filler tiles will fill in the other $k-1$ regions where copies of $\beta$ could have gone but did not.
}

    Although we use mass-action kinetics to model changing concentrations, we occasionally use discrete language to describe the intuition behind reactions -- e.g., ``a copy of $A$ is consumed and two copies of $B$ are produced'' -- despite the fact that concentrations model continuously evolving real-valued concentrations.

    The hierarchical model will permit a speed-up over the seeded model.
    However, when viewed as a \emph{programming language} for tile assembly, the hierarchical model is more unwieldy to program and to analyze.
    Therefore we prove a number of lemmas showing that careful design of hierarchical tiles will cause them to ``behave enough like'' seeded tiles to remain tractable for analysis, and to ensure that the assembly proceeds sufficiently quickly.
    Much like parallel programming, in which critical regions are segregated into a few well-characterized parts of the program, we largely employ ``seeded-like assembly'' for most subcomponents of the construction, combining them using hierarchical parallelism at a small number of well-understood points.

    \subsection{Warm-up: A thin bar}

    We first ``warm up'' by analyzing in detail the assembly time of a simple but non-trivial system.
    The following lemma, Lemma~\ref{lem-thin-bar-mass-action} (more precisely, its corollary, Corollary~\ref{cor-thin-bar-mass-action-delta-1-over-n} that assigns concrete concentrations to the tile types), shows that it is possible to grow a ``substantial'' concentration of a ``hard-coded thin bar'' in quadratic time under the mass-action model.
    This structure will be the first subassembly formed in many other subcomponents of the tile system of Theorem~\ref{thm-fast-rectangle}.
    Furthermore, the proof of Lemma~\ref{lem-thin-bar-mass-action} will illustrate several techniques for analyzing hierarchical assembly time (and ``programming tricks'' to ensure that this time is fast).
    Section~\ref{subsec-techniques-bounding-assembly-time} generalizes these techniques to apply to more complex tile systems used in the full construction.
    However, these techniques are easier to understand by first reading this section with its simple, concrete tile system.

    To achieve quadratic time we use ``polyomino-safe'' tiles that grow a $2 \times n$ bar in a zig-zag fashion to enforce that no substantial growth nucleates except at the ``seed'', similar to the zig-zag tile set described by Schulman and Winfree~\cite{SchWin09,SchWin07} (which prevented spurious nucleation with high probability under the more permissive \emph{kinetic tile assembly model}~\cite{Winfree98simulationsof} that allows reversible attachments and strength-1 attachments).
    This enforces that no large overlapping subassemblies grow that would compete to consume tiles without being able to attach to each other (which happens with the $n$ tile types required to grow a $1 \times n$ bar).\footnote{Assembling a $1 \times n$ bar provably requires $\Theta(n \log n)$ time to reach half of its steady-state concentration~\cite{AdlCheGoeHuaWas01}, even when all $n$ tile types are allowed to have concentration as high as 1, exceeding the bound of the finite density constraint.  Enforcing the finite density constraint and assigning each tile type a concentration of $1/n$ gives a time bound of $\Theta(n^2 \log n)$.}

    \begin{lemma}\label{lem-thin-bar-mass-action}
      Let $\calT=(T,2)$ be the hierarchical TAS shown in Figure~\ref{fig:thin-bar}, and let $\ha$ be its unique terminal assembly of a $2 \times n$ bar.
      Let the initial concentrations be defined by $[s_1](0) = \delta > 0$, $[r_i](0)=2 \delta$ for all $i \in \{1,\ldots,n\}$, and $[s_i](0) = 3 \delta$ for all $i \in \{2,\ldots,n\}$.
      Then for all $t \geq \frac{4n}{\delta}$, $[\ha](t) \geq \frac{\delta}{2}$.
    \end{lemma}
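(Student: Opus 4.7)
The strategy is to exploit the polyomino-safe zig-zag design to reduce the hierarchical process to a seeded-style chain of single-tile attachments, lower-bound the free-tile concentrations uniformly via Lemma~\ref{lem-conserve-mass}, and then apply Markov's inequality to a sum of exponential waiting times. First I would characterize the set of $s_1$-containing producible assemblies: the polyomino-safe zig-zag construction forces every such assembly to lie on a deterministic chain $s_1 = \alpha_1, \alpha_2, \ldots, \alpha_{2n} = \ha$, where each $\alpha_{i+1}$ arises from $\alpha_i$ by attachment of a uniquely determined tile $t_i \in T \setminus \{s_1\}$ at a uniquely determined frontier site. In particular, the only reaction that consumes $\alpha_i$ for $i < 2n$ is $\alpha_i + t_i \to \alpha_{i+1}$, and the only way $\ha$ is produced is by completing this chain.

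Next I would lower-bound the free tile concentrations. By Lemma~\ref{lem-conserve-mass}, for each $t \in T \setminus \{s_1\}$ the sum $\sum_{\alpha} [\alpha](s) \cdot \#_t(\alpha)$ is conserved at its initial value ($2\delta$ for $r_i$, $3\delta$ for $s_i$), where $\#_t(\alpha)$ counts copies of $t$ in $\alpha$. Since every producible assembly is a subassembly of $\ha$, we have $\#_t(\alpha) \leq 1$ always, and since the total concentration of $s_1$-containing assemblies is at most $\delta$, at most $\delta$ of each tile $t$ can ever be tied up in an $s_1$-containing complex. Combined with the fact that the polyomino-safe design permits only dead-end stray complexes of constant size whose bimolecular formation contributes only a lower-order correction over the time window $[0, 4n/\delta]$, this yields $[t](s) \geq \delta$ throughout.

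Third, I would analyze the completion time of a single tagged copy of $s_1$. Under the assembly time model of Section~\ref{sec-time-complexity-hierarchical}, the tagged copy evolves as a continuous-time Markov process traversing the states $\alpha_1, \alpha_2, \ldots, \alpha_{2n}$, with transition rate $[t_i](s)$ from $\alpha_i$ to $\alpha_{i+1}$. Because $[t_i](s) \geq \delta$ throughout the relevant window, each transition time is stochastically dominated by an exponential of rate $\delta$, and thus has expectation at most $1/\delta$. Summing over the $2n-1$ attachments gives $\exp{\time_{\calT,C,s_1}} \leq (2n-1)/\delta < 2n/\delta$. Applying Markov's inequality to the completion time $X$ yields $\prob[X > 4n/\delta] \leq (2n/\delta)/(4n/\delta) = 1/2$, so at time $4n/\delta$ at least half of the initial $\delta$ of $s_1$ has been converted to $\ha$, giving $[\ha](4n/\delta) \geq \delta/2$.

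The main obstacle I anticipate is the second step: rigorously controlling free-tile depletion. Mass conservation on $s_1$ immediately caps depletion into $s_1$-containing complexes at $\delta$ per tile type, but in a hierarchical irreversible model any two tile types sharing a strength-$2$ glue can in principle form stray dimers, and the naive bound on their accumulation is not automatically negligible. I expect to close this gap by combining the polyomino-safe glue discipline, which restricts stray complexes to a finite list of dead-end shapes, with a direct ODE bound showing that the total stray mass of each tile over $[0,4n/\delta]$ stays well below $\delta$, preserving the uniform lower bound used in the waiting-time analysis.
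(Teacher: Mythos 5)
Your plan breaks at the second step, and the obstacle you flag at the end is not closable the way you propose. The uniform bound $[t](s)\geq\delta$ for every tile type is simply false for the $r_i$'s: each unattached $r_i$ can (and, over the window $[0,4n/\delta]$, essentially will) be sequestered into the dimer $d_i$ consisting of $r_i$ and $s_{i+1}$, since the dimerization timescale is of order $1/\delta$, far shorter than $4n/\delta$. These dimers are neither ``dead-end'' nor a ``lower-order correction'': the caption of Figure~\ref{fig:thin-bar} notes explicitly that they attach to the growing assembly just as the monomers would, via the reaction $\beta_i+d_i\to\beta_{i+1}$. Consequently your Step 1 claim that the only reaction consuming a chain assembly is a single-tile attachment is also wrong (dimer attachments skip a link in your chain), and any attempt to prove that stray dimer mass stays ``well below $\delta$'' must fail. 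The paper's proof turns this apparent nuisance into the argument: the concentrations $2\delta$ and $3\delta$ are chosen so that the invariants $[r_i](t)+[d_i](t)\geq\delta$ and $[s_i](t)\geq\delta$ hold for all $t$ (the latter uses that any seed-containing assembly holding $s_{i+1}$ also holds $r_i$, so the $s_{i+1}$ lost to seeds and to free dimers together is at most $2\delta-[r_i](t)$), and then the reaction $\beta_i+d_i\to\beta_{i+1}$ is counted as productive growth. With only your bound $[s_{i+1}](t)\geq$ something and $[r_i](t)\geq\delta$ unavailable, your per-step rate bound of $\delta$ out of the odd states collapses.

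Two further points, less fatal but worth noting. First, the lemma is a statement about the deterministic mass-action quantity $[\ha](t)$, while your Step 3 argues about the stochastic timekeeper copy and then asserts that completion probability $\geq 1/2$ gives $[\ha](4n/\delta)\geq\delta/2$. That identification is true here (the mass-action equations for the seed-containing species are linear with time-varying coefficients equal to the partner concentrations, i.e., they are exactly the forward equations of the tagged-copy inhomogeneous Markov process scaled by $\delta$), but it is a bridge you would have to state and justify; the paper avoids it entirely by working with the deterministic quantity $S(t)=\sum_i 2i[\alpha_i](t)+\sum_i(2i-1)[\beta_i](t)$, lower-bounding $dS/dt$ by $\delta^2-\delta[\alpha_n](t)$ using the invariants above, and deriving a contradiction with $S(t)\leq 2n\delta$ if $[\alpha_n](4n/\delta)\leq\delta/2$. (The stochastic per-step/linearity-of-expectation argument you sketch is essentially the paper's separate Lemma~\ref{lem-thin-bar-discrete}.) Second, Lemma~\ref{lem-conserve-mass} as stated conserves total mass $\sum_\alpha[\alpha](t)\,|\alpha|$, not per-tile-type mass; the per-type conservation you invoke is true but needs its own (easy) justification. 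To repair your proof, replace the false monomer lower bound by the invariant $[r_i](t)+[d_i](t)\geq\delta$, prove $[s_i](t)\geq\delta$ by the accounting above, allow dimer attachments as transitions, and either supply the stochastic-to-mass-action bridge or switch to the ODE bookkeeping the paper uses.
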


    \begin{figure}[htb]
    \begin{center}
      \includegraphics[width=4in]{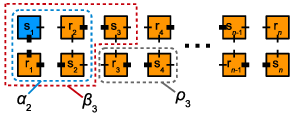}
      \caption{\label{fig:thin-bar} \figuresize
      Tile types that grow a $2 \times n$ bar.  Thick lines represent strength-2 glues, and thin lines represent strength-1 glues.
      Glues are not labeled, but each is hard-coded to represent its position in the final assembly.
      Zig-zag growth enforces that no overlapping subassemblies larger than size 2 can occur.
      All growth other than size-2 dimers (which will attach just as correctly as if they had stayed monomers) must nucleate from the ``seed'' labeled $s_1$.
      Examples are shown of subassemblies $\alpha_i$, $\beta_i$, and $\rho_i$, as defined in the proof of Lemma~\ref{lem-thin-bar-mass-action}.
      }
    \end{center}
    \end{figure}

    Intuitively, the reason for the choice of concentrations is to approximate the speed of seeded single-tile addition assembly, by enforcing that the concentrations of individual tiles (or dimers) other than $s_1$ remain for all time above at least a fixed constant $\delta$, to keep high their rate of reaction with a larger assembly containing ``preceding'' tile types.

    \begin{proof}
        For $i \in \{1,\ldots,n\}$, there are two types of producible assemblies containing $s_1$: the assembly with exactly $2i$ tiles, which we call $\alpha_i$ (its single frontier location is where $s_{i+1}$ binds), and the assembly with exactly $2i-1$ tiles, which we call $\beta_i$ (its single frontier location is where either $r_i$ or $\rho_i$ can bind, where $\rho_i$ is the assembly consisting of $r_i$ and $s_{i+1}$).
        Hence $\beta_1$ contains only $s_1$, and $\alpha_n = \ha$.

      %For all $i \in \{1,\ldots,n\}$ and $t \in \R_{\geq 0}$, let $A_i(t) = \sum_{\alpha_i \sqsubseteq \gamma} [\gamma](t)$ denote the total concentration at time $t$ of assemblies containing $\alpha_i$ as a subassembly.
    %  %We prove by induction on $i$ that for all $t \geq i / \delta$, $A_i(t) \geq \frac{\delta}{2 i}$, which proves the lemma.
    %  Since each each $A_i$ is monotonically increasing with $t$, it suffices show that there exists a $t \leq i / \delta$ such that $A_i(t) \geq \frac{\delta}{2 i}$.

        For all $i \in \{1,\ldots,n-1\}$, let $\rho_i$ denote the dimer (2-tile assembly) consisting of just $r_i$ and $s_{i+1}$.
        Since $\beta_i$ contains $s_1$, for all $t \in \R_{\geq 0}$, $[\beta_i](t) \leq [s_1](0) = \delta$.
        Thus at most $\delta$ of the individual $r_i$'s can bind to $\beta_i$.
        The remainder
        %-- which are at least concentration $\delta$ at all times, since $[r_i](0) = 2 \delta$ --
        must stay unbound or bind to $s_{i+1}$ to form $\rho_i$.
        For all $i \in \{1,\ldots,n-1\}$ and all $t \in \R_{\geq 0}$, by the fact that $[r_i](0)=2\delta$,
        \begin{equation}\label{ineq-dr}
        [\rho_i](t) + [r_i](t) \geq \delta,
        \end{equation}
        Since there is no $\rho_n$, we have
        \begin{equation}\label{ineq-rn}
        [r_n](t) \geq \delta
        \end{equation}
        for all $t\in\R_{\geq 0}$ by the same reasoning.
        By similar reasoning, for $i \in \{2,\ldots,n\}$, since $[s_i](0) = 3 \delta$, no more than $\delta$ of the individual $s_i$'s can bind to $\alpha_{i-1}$ to form $\beta_i$, and no more than $\delta$ of the remaining can bind to copies of $r_{i-1}$ that never attach to $\beta_{i-1}$.
        Thus for all $i \in \{2,\ldots,n\}$ and all $t \in \R_{\geq 0}$,
        \begin{equation}\label{ineq-s}
        [s_i](t) \geq \delta.
        \end{equation}

        Let
        \begin{eqnarray*}
        S(t)
        &=&
        \sum_{i=1}^{n} |\alpha_i| \cdot  [\alpha_i](t)\ +\ \sum_{i=1}^{n} |\beta_i| \cdot [\beta_i](t)
        \\&=&
        \sum_{i=1}^{n} 2i [\alpha_i](t)\ +\ \sum_{i=1}^{n} (2i-1)[\beta_i](t).
        \end{eqnarray*}
        $S(t)$ can be thought of as the total ``mass'' of tiles that belong to an assembly containing $s_1$ at time $t$.
        Observe that
        \begin{equation}\label{ineq-max-S-thin-bar}
            \sup_{t\in\R_{\geq 0}} S(t) = 2 n \delta,
        \end{equation}
        with the supremum $2 n \delta$ attained only in the limit as $t\to\infty$, when all $s_1$ belong to terminal assembly $\alpha_n$.

        The reactions $r_i + s_{i+1} \to \rho_i$ do not change $S(t)$.
        All other reactions increase $S(t)$.
        Each reaction that increases $S(t)$ is of the form $\alpha_i + s_{i+1} \to \beta_{i+1}$, $\beta_i + r_{i} \to \alpha_i$ (each of which increases $S(t)$ by 1 per unit concentration of the product produced), or $\beta_i + \rho_{i} \to \beta_{i+1}$ (which increases $S(t)$ by 2 per unit of product).
        Therefore, summing the propensities of all these reactions, we obtain
        \begin{equation*}
        \frac{dS(t)}{dt}  =  \sum_{i=1}^{n-1} [\alpha_i](t) \cdot [s_{i+1}](t)
        %\\ && \ +\
        + \sum_{i=1}^{n} [\beta_i](t) \cdot [r_{i}](t)
        %\\ && \ +\
        + \sum_{i=1}^{n-1} 2 \cdot [\beta_i](t) \cdot [\rho_{i}](t).
        \end{equation*}
        %By \eqref{ineq-s}, for all $t\in\R_{\geq 0}$, the propensity of reaction $\big( \alpha_i + s_{i+1} \to \beta_{i+1}\big)$ is at least $\delta [\alpha_i](t)$ for all $i \in \{1,\ldots,n-1\}$. By \eqref{ineq-dr} and \eqref{ineq-rn}, for all $t\in\R_{\geq 0}$, the propensity of reaction $\big( \beta_i + r_{i} \to \alpha_i\big) $ plus the propensity of reaction $\big( \beta_i + \rho_{i} \to \beta_{i+1}\big)$ is at least $\delta [\beta_i](t)$ for all $i \in \{1,\ldots,n-1\}$.
        %By \eqref{ineq-rn}, for all $t\in\R_{\geq 0}$, the propensity of reaction $\big( \beta_n + r_{n} \to \alpha_n\big)$ is at least $\delta [\beta_n](t)$.

        Note that for all $t \in\R_{\geq 0}$,
        \begin{equation} \label{eq-s1-sum-alphai-betai}
         [s_1](0) = \sum_{i=1}^{n} \left( [\alpha_i](t) + [\beta_i](t) \right).
        \end{equation}
        Each right-hand side term represents an assembly that could contain a copy of $s_1$.
        Then
        \begin{eqnarray}
        \frac{dS(t)}{dt}
        & = & \nonumber
        \sum_{i=1}^{n-1} [\alpha_i](t) \cdot [s_{i+1}](t)
        %\\ && \ +\
        + \sum_{i=1}^{n} [\beta_i](t) \cdot [r_{i}](t)
        %\\ && \ +\
        + \sum_{i=1}^{n-1} 2 \cdot [\beta_i](t) \cdot [\rho_{i}](t)
        \\& = & \nonumber
        [\beta_n](t) \cdot [r_n](t)
        + \sum_{i=1}^{n-1} [\alpha_i](t) \cdot [s_{i+1}](t)
        + \sum_{i=1}^{n-1} [\beta_i](t) \cdot ([r_{i}](t) + 2 \cdot [\rho_{i}](t))
        \\& \geq & \nonumber
        [\beta_n](t) \cdot [r_n](t)
        + \sum_{i=1}^{n-1} [\alpha_i](t) \cdot [s_{i+1}](t)
        + \sum_{i=1}^{n-1} [\beta_i](t) \cdot ([r_{i}](t) + [\rho_{i}](t))
        \\& \geq & \nonumber
        [\beta_n](t) \cdot \delta
        + \sum_{i=1}^{n-1} [\alpha_i](t) \cdot \delta
        + \sum_{i=1}^{n-1} [\beta_i](t) \cdot \delta
        \ \ \ \ \text{by \eqref{ineq-dr}, \eqref{ineq-rn}, and \eqref{ineq-s}}
        \\& = & \nonumber
        \delta \left( [\beta_n](t) + \sum_{i=1}^{n-1} ( [\alpha_i](t) + [\beta_i](t) ) \right)
        \\& = & \nonumber
        \delta ([s_1](0) - [\alpha_n](t))
        \ \ \ \ \ \ \ \ \ \ \ \ \ \ \ \ \ \ \ \ \ \ \ \ \ \ \ \ \ \ \ \text{by \eqref{eq-s1-sum-alphai-betai}}
        \\& = & \delta^2 - \delta [\alpha_n](t). \label{dS-lower-bound}
        \end{eqnarray}
        Since $\alpha_n$ is not a reactant in any reaction, $[\alpha_n](t)$ is monotonically increasing.
        Thus it suffices to prove that $[\alpha_n](\frac{4n}{\delta}) > \frac{\delta}{2}$.
        Suppose for the sake of contradiction that $[\alpha_n](\frac{4n}{\delta}) \leq \frac{\delta}{2}$.
        Then $[\alpha_n](t) \leq \frac{\delta}{2}$ for all $t \leq \frac{4n}{\delta}$ by the monotonicity of $[\alpha_n](t)$.
        By this bound and \eqref{dS-lower-bound}, $\frac{dS(t)}{dt} \geq \frac{\delta^2}{2}$ for all $t \leq \frac{4n}{\delta}$.
        Since $S(0) = \delta$, this means that $S(\frac{4n}{\delta}) \geq \delta + \frac{\delta^2}{2} \cdot \frac{4n}{\delta} = (2n+1)\delta$, which contradicts \eqref{ineq-max-S-thin-bar}.
    \end{proof}

    The following corollary shows that if we pick the initial concentrations to be maximal subject to the finite density constraint and the constraints of Lemma~\ref{lem-thin-bar-mass-action}, then quadratic time is sufficient to obtain terminal assembly concentration that is at least inversely linear.
    By Lemma~\ref{lem-conserve-mass}, any producible assembly $\alpha$ obeys $[\alpha](t) \leq \frac{1}{|\alpha|}$ for all $t\in\R_{\geq 0}$, so this concentration bound is optimal to within a constant factor.
    The time bound is asymptotically suboptimal\footnote{We assign concentrations of $\Theta(\frac{1}{n})$ to obey the finite density constraint since there are $\Theta(n)$ distinct tile types.  However, only $O(\sqrt{n})$ tile types are required to assemble a $2 \times n$ bar~\cite[Theorem 3.2]{AGKS05}.  The concentrations of these tile types could be set to $\Theta(\frac{1}{\sqrt{n}})$, lowering the half-completion time from $O(n^2)$ to $O(n^{1.5})$, if our goal in this section were to assemble a $2 \times n$ bar as quickly as possible (which it isn't).  However, since we later use the length-$n$ bar to encode $n$ bits, necessitating that each tile type on the top row be unique, we could not use $O(\sqrt{n})$ tile types anyway.} but sufficient for our purposes, since we only use hard-coded thin bars that are logarithmically smaller than the final assembly; hence their contribution to the assembly time is negligible.

    \begin{corollary}\label{cor-thin-bar-mass-action-delta-1-over-n}
      Let $\calT=(T,2)$ be the hierarchical TAS shown in Figure~\ref{fig:thin-bar}, and let $\ha$ be its unique terminal assembly of a $2 \times n$ bar.
      Let the initial concentrations be defined by $[s_1](0) = \delta = \frac{1}{c n}$ for some constant $c \in \R_{\geq 0}$, $[r_i](0)=2 \delta$ for all $i \in \{1,\ldots,n\}$, and $[s_i](0) = 3 \delta$ for all $i \in \{2,\ldots,n\}$.\footnote{We must choose $\delta \leq \frac{1}{5n-1}$ to obey the finite density constraint, hence $c \approx 5$.}
      Then for all $t \geq 4 c n^2$, $[\ha](t) \geq \frac{1}{2 c n}$.
    \end{corollary}

    The next lemma is a discrete version of Corollary~\ref{cor-thin-bar-mass-action-delta-1-over-n}, which shows that selecting $s_1$ as the timekeeper results in quadratic-time assembly of the bar under our stochastic assembly time model.
    %Since the TAS of Figure~\ref{fig:thin-bar} is a partial order system with respect to $s_1$, by Theorem~\ref{thm-hier-partial-order-slow} the time bound is optimal to within a constant factor.

    \begin{lemma}\label{lem-thin-bar-discrete}
      Let $\calT=(T,2)$ be the hierarchical TAS shown in Figure~\ref{fig:thin-bar}, and define a concentrations function $C:T \to [0,1]$ as in the statement of Corollary~\ref{cor-thin-bar-mass-action-delta-1-over-n}, setting $C(s_1) = \delta = \frac{1}{c n}$ for some constant $c \in\R_{\geq 0}$, $C(r_i)=2\delta$ for $i \in \{1,\ldots,n\}$, and $C(s_i)=3 \delta$ for $i \in \{2,\ldots,n\}$.
      Then $\mathsf{T}(\calT,C,s_1) \leq 2 c n^2$.
    \end{lemma}

    \begin{proof}
      %The proof of Lemma~\ref{lem-thin-bar-mass-action} shows that for all $i \in \{1,\ldots,n\}$ and $t \in \R_{\geq 0}$, $[r_i](t) + [\rho_i](t) \geq \delta$, and for all $i \in \{2,\ldots,n\}$ and $t \in \R_{\geq 0}$, $[s_i](t) \geq \delta$.
      For all $j \in \{1,\ldots,2n-1\}$, let $t_j$ denote the expected time until the assembly containing $s_1$ grows by at least one tile, conditioned on the event that the current assembly is size at least $j$ but less than $2n$.
      %By the above bounds on $[r_i](t) + [\rho_i](t)$ and $[s_i](t)$
      By \eqref{ineq-dr}, \eqref{ineq-rn}, and \eqref{ineq-s}
      and the model of Markov process transition rates we employ to determine $\mathsf{T}(\calT,C,s_1)$, it holds that $t_j \leq 1/\delta = cn$ for all $j \in \{1,\ldots,2n-1\}$.
      %Since $\delta = \frac{1}{c n}$, this implies $t_j \leq c n$.
      By linearity of expectation,
      $
        \mathsf{T}(\calT,C,s_1)
        \triangleq
        \exp{\time_{\calT,C,s_1}}
        \leq
        \sum_{j=1}^{2n-1} t_j
        \leq
        (2n-1) c n
        <
        2 c n^2. %\qedhere
      $
    \end{proof}

    \newcommand{\rs}{{s_1}}

    \subsection{General techniques for bounding assembly time}
    \label{subsec-techniques-bounding-assembly-time}

    Techniques from the proofs of Lemmas~\ref{lem-thin-bar-mass-action} and~\ref{lem-thin-bar-discrete} can be generalized in the following way to ease analysis of assembly time of well-behaved hierarchical systems.
    The results of this section will be our main technical tools used to bound the assembly time of the shape of Theorem~\ref{thm-fast-rectangle}.
    Intuitively, if the tile system is ``polyomino-robust'' (defined below), in the sense that the seeded and hierarchical models result in essentially the same producible assemblies and are well-behaved in other ways, then we can bound the hierarchical assembly time in terms of the size of the structure and the number of tile types needed to assemble it, if concentrations are set appropriately.

    The TAS of Figure~\ref{fig:thin-bar} has the following useful properties:
    \begin{enumerate}[1)]
      \item \label{prop-poly-robust-0} It is directed.

      \item \label{prop-poly-robust-1} There is a constant $q$ ($q=2$ in Figure~\ref{fig:thin-bar}) such that producible assemblies not containing the ``seed'' $s_1$ are of size at most $q$.
          We term such assemblies \emph{polyominos}.
          For mathematical convenience, we treat individual tile types that are not part of any polyomino as if they are polyominos of size 1, and we call larger polyominos \emph{nontrivial polyominos}.

      \item \label{prop-poly-robust-2} The set of producible assemblies containing $s_1$ is precisely the same in the seeded model as in the hierarchical model, and furthermore every terminal producible assembly contains $s_1$. (The tile system is \emph{polyomino-safe}, in the sense defined by Winfree \cite{Winfree06}.)
          In particular this implies that $\termasm{\calT} = \termasm{\calT_{s_1}}$, where $\calT_{s_1}=(T,\sigma,\tau)$ is the seeded version of $\calT=(T,\tau)$ with $\sigma$ containing only $s_1$.

      \item \label{prop-poly-robust-3} The polyominos that attach to (an assembly containing) $s_1$ are a ``total order (sub)system with respect to assemblies containing the seed''.
          More formally, define a \emph{maximal} polyomino $\alpha$ to be a polyomino such that is not attachable to any assembly not containing $s_1$.
          For each maximal polyomino $\alpha$, there is a strict total order $\prec$ on $\dom\alpha$ such that if $p_1 \prec p_2$, then the tile at position $p_1$ always attaches to an assembly containing $s_1$ by at least the time that the tile at position $p_2$ attaches.\footnote{Each polyomino is a ``chain'' with a well-defined tile ``closest'' to $s_1$.  Therefore, while $p_1$ and $p_2$ may attach at the same time, because the order is strict (implying $p_2 \not\prec p_1$) it is always possible for the tile at $p_1$ to attach strictly sooner.  In particular hierarchical growth is not \emph{required} for assembly to proceed.}
          %Furthermore, each polyomino is ``locally directed'' in the sense that, given the identity of the tile at the minimal position $p$ (under $\prec$) in the polyomino, the remaining tile types are determined. %\footnote{Intuitively, though the entire tile system may be non-directed in general, the sub-system consisting of the polyomino is deterministic, so that it makes sense to speak of the unique tile type at a given position in the polyomino, although other non-polyomino positions in an assembly formed by the whole TAS may not have a fixed identity.}
          %Furthermore, if the tiles at $p_1$ and $p_2$ bind with positive strength, then $p_1 \prec p_2$ or $p_2 \prec p_1$.
          %$p_1$ and $p_2$ could be part of the same polyomino but incomparable, in this case, the polyomino can be described as a tree whose nodes are equivalence classes under $\preceq$, and $p_1$ and $p_2$ are in nodes in separate subtrees.

      \item \label{prop-poly-robust-4} Every tile type belongs to at most one type of maximal polyomino (which may appear in multiple locations in the terminal assembly), and appears exactly once in the polyomino.
          Here we include maximal polyominos of size 1, which means any tile type in a polyomino does not appear outside of the polyomino.
          More formally, for each maximal polyomino $\alpha$, $|T(\alpha)| = |\alpha|$, where $T(\alpha)$ is the set of tile types in $\alpha$, and for each pair of maximal polyominos $\alpha$ and $\beta$, $\alpha \neq \beta \implies T(\alpha) \cap T(\beta) = \emptyset$.
          Given a tile type $r$, we write $\rho(r)$ to denote the unique maximal polyomino in which $r$ is contained.
          This implies in particular that any tile type in a non-trivial polyomino appears in the terminal assembly equally often as any other tile type in the same polyomino.

      %\item Every tile type in the same type of maximal polyomino appears in the terminal assembly equally often.
    \end{enumerate}

    Say that a tile system (possibly a subset of a larger tile system) that satisfies these properties is \emph{polyomino-robust}.\footnote{Our full tile system assembling a rectangle is not directed, hence it does not satisfy Property~\ref{prop-poly-robust-0}. However, we will apply the lemmas proven in this section to subsets of the full tile system that \emph{are} directed, and in fact that satisfy all of the properties of polyomino-robustness.}
    Most useful seeded tile systems, when analyzed in the hierarchical model, tend to have these properties or are easily modified to have them.
    The two main tile subsystems that we analyze, shown in Figures~\ref{fig:horizontal-bar-with-arms} and~\ref{fig:vertical-bar}, can be verified by inspection to obey these constraints.

    The property of polyomino-robustness allows us to reason about the system, in certain senses, as if it were a seeded system.
    Properties \eqref{prop-poly-robust-1} and \eqref{prop-poly-robust-3}, in particular, allow us to set concentrations in such a way that we may assume that the concentration of individual tiles or polyominos that can extend an intermediate assembly are always at least a certain value bounded away from 0 ($\delta$ in Lemma~\ref{lem-thin-bar-mass-action}, and $\delta_1$ in Lemma~\ref{lem-polyonimo-fast-as-seeded-mass-action}).
    The trick is that tiles ``further from the seed'' (under the ordering $\prec$) are always at least $\delta$ greater concentration than tiles ``closer to the seed'', so that there will always be at least a $\delta$ excess of them in solution, no matter what combinations of partial polyominos form before attaching to the seed.
    Property~\eqref{prop-poly-robust-1} implies that we may use a bounded interval of concentrations (from $\delta$ to $3 \delta$ in Figure~\ref{fig:thin-bar}) to achieve this.
    Given a maximal polyomino $\alpha$ and a tile type $r$ in $\alpha$, define $\mathrm{dist}_\alpha(r)$ to be the distance of $r$ from the minimal position (under $\prec$) in the polyomino. (Since the polyomino is a linear chain, this number is well-defined).
    In Figure~\ref{fig:thin-bar}, for example, $\mathrm{dist}_{\rho_i}(0,0)=0$ and $\mathrm{dist}_{\rho_i}(1,0)=1$, where the polyomino $\rho_i$ is defined as in the proof of Lemma~\ref{lem-thin-bar-mass-action}: a size-2 polyomino containing $r_i$ at position $(0,0)$ (within the polyomino, assuming it is translated to the lower-left corner of the first quadrant) and $s_i$ at position $(1,0)$.

    %Since Property~\eqref{prop-poly-robust-4} guarantees that each tile type $r$ belongs to exactly one maximal polyomino, let $\rho(r)$ denote this polyomino (note that possibly $\rho(r) = r$).

    Given an assembly $\beta$ and a tile type $r\in T$, define $\#_\beta(r) = |\setr{p \in \dom\beta}{\beta(p)=r}|$ as the number of times $r$ appears in $\beta$.
    If $\beta,\zeta$ are assemblies such that $\beta\sqsubseteq \zeta$, define $\zeta\setminus\beta$ to be the unique assembly $\gamma$ such that $\dom\gamma=\dom\zeta \setminus \dom\beta$ and $\gamma\sqsubseteq\zeta$.
    If $\beta$ is an assembly and $T$ is a tile set, define $T(\beta) = \mathrm{range}\ \beta = \setr{r \in T}{(\exists p\in\dom\beta)\ \beta(p)=r}$ to be the set of tile types in $\beta$.
    For any polyomino-robust TAS $\calT=(T,\tau)$ and $s_1 \in T$, let $\calT_\rs=(T,\sigma,\tau)$ denote the seeded version of the hierarchical system $\calT$, where $\sigma$ is the single-tile initial assembly consisting of only the tile $s_1$.
    For any producible (in the seeded model) assembly $\beta\in\prodasm{\calT_\rs}$, let $\prodasm{\beta} = \setr{\alpha\in\prodasm{\calT_\rs}}{\beta \sqsubseteq \alpha}$ denote the set of producible (in the seeded model) superassemblies of $\beta$, and let $[\prodasm{\beta}](t) = \sum_{\alpha \in \prodasm{\beta}} [\alpha](t) \cdot \#_\alpha(\beta) $, where $\#_\alpha(\beta)$ denotes the number of times that $\beta$ appears as a subassembly of $\alpha$.
    That is, $[\prodasm{\beta}](t)$ is the total concentration of $\beta$ or of assemblies containing $\beta$, where each duplicate appearance of $\beta$ in a single superassembly contributes to the concentration separately.\footnote{For this definition to make sense, we must weight the sum by the number of times $\beta$ appears in $\alpha$ because each time an assembly containing $\beta$ binds to another assembly containing $\beta$, the number of assemblies containing $\beta$ decreases by one, even though the total concentration of ``completed $\beta$'s'' has stayed the same.  However, whenever we actually apply this definition, it will be the case that $\#_\alpha(\beta)=1$ for any producible $\alpha$ such that $\beta \sqsubseteq \alpha$.}
    Note that, since assemblies can attach but not detach and $[\prodasm{\beta}](t)$ takes into account not only the assembly $\beta$ but any superassembly of it, $[\prodasm{\beta}](t)$ is monotonically nondecreasing with $t$: assemblies can attach to create new copies of $\beta$, but once formed $\beta$ cannot be broken apart.

    The next lemma shows conditions under which a partial assembly $\beta$ grows into a superassembly $\zeta \sqsupseteq \beta$.
    Informally, the lemma says that if we have ``substantial'' (at least $\delta_0$) concentration of $\beta$ (and its superassemblies), but the total concentration of seeds is ``small'' (at most $\delta_{s_1}$) compared to tile types that assemble an extension $\gamma$ of $\beta$ (to create a superassembly of $\beta$ called $\zeta$), and if the concentrations of those tile types are set to ensure that all individual tile types have ``excess'' concentration (at least $\delta_1$), then a ``substantial'' concentration of $\zeta$ will assemble in time linear in $|\gamma|$.

    %TODO: here is a way we can study non-directed systems: just require that given beta and zeta, the tiles that go in gamma are determined; i.e., there is only one way to fill in gamma, *given* that beta is already present.  It doesn't matter that more than one beta might form before the using the lemma, or that tiles outside of zeta might be binding nondeterministically.

%     \todo{Make this figure and explain the proof of Lemma \ref{lem-polyonimo-fast-as-seeded-mass-action} better!}
%     \begin{figure}[htb]
%     \begin{center}
%       \includegraphics[width=6in]{}
%       \caption{\label{fig:polyonimo-fast-as-seeded-mass-action} \figuresize
%       Example of the hypothesis of Lemma~\ref{lem-polyonimo-fast-as-seeded-mass-action}.
%       }
%     \end{center}
%     \end{figure}

    \begin{lemma}\label{lem-polyonimo-fast-as-seeded-mass-action}
        Let $\calT=(T,\tau)$ be a polyomino-robust hierarchical TAS with unique terminal assembly $\ha$.
        Let $\rs \in T$.
        Let $\beta,\zeta\in\prodasm{\calT_\rs}$ such that $\beta\sqsubseteq \zeta$.
        Let $\gamma=\zeta \setminus \beta$, and suppose that $T(\gamma) \cap T(\ha \setminus \gamma) = \emptyset$. (i.e., tile types within $\gamma$ appear only within $\gamma$), and that all polyominos contained in $\zeta$ are completely contained in $\beta$ or completely contained in $\gamma$.
        Suppose also that for all $\alpha \in \prodasm{\beta}$, $\#_\alpha(\beta) = 1$, and for all $\alpha \in \prodasm{\zeta}$, $\#_\alpha(\zeta) = 1$.
        Suppose that there exist $t_0,c,\delta_0,\delta_{s_1} \in \R_{\geq 0}$ such that the following hold.

        \begin{itemize}
          \item $[s_1](0) \leq \delta_{s_1}$.

          \item $[\prodasm{\beta}](t_0) \geq \delta_0$.

          \item $\delta_{s_1} \leq \delta_0 c$.
        \end{itemize}
        Set initial concentrations of all $r \in T(\gamma)$ as follows.
        Let $\delta_1 > 0$.\footnote{Think of $\delta_1 = \frac{1}{|T(\gamma)|}$ for common usage of the lemma.  Intuitively, by our choice to concentrations, $\delta_1$ excess of each tile type is ensured even after they have been maximally consumed in attachment events, meaning we can think of $\delta_1$ as a lower bound on the concentration of tile types in the seeded model.}
        Set $[r](0) = \delta_{s_1} \cdot \#_\gamma(r) + (\mathrm{dist}_{\rho(r)}(r) + 1) \cdot \delta_1$.

        Then for all $t \geq \frac{2 |\gamma| c}{\delta_1} + t_0$, $[\prodasm{\zeta}](t) \geq \frac{\delta_0}{2}$.
    \end{lemma}

    \renewcommand{\poly}[2]{\rho_{#2}^{#1}}

    \begin{proof}
        To prove the lemma, we will first argue that for every producible assembly containing $\beta$ and containing a frontier location within $\gamma$, the total concentration of producible assemblies that could attach to this location is at least $\delta_1$.
        This provides a lower bound on the rate of reactions that grow the assembly into $\zeta$ (or a superassembly of $\zeta$).

        %By Property \eqref{prop-poly-robust-2} of polyomino-robustness, $\prodasm{\calT_\rs} = \setr{\alpha \in \prodasm{\calT}}{\alpha \text{ contains } s_1}$.
        Let $\rho \sqsubseteq \gamma$ be a maximal polyomino consisting of tile types $r_1,r_2,\ldots,r_{k}$ at positions $p_1,\ldots,p_{k}$, with $p_1 \prec p_2 \prec \ldots \prec p_{k}$.
        Then we have an increasing sequence of polyominos $\rho_1 \sqsubseteq \rho_2 \sqsubseteq \ldots \sqsubseteq \rho_k = \rho$, where each $\rho_i$ is $\rho_{i-1}$ with the tile $r_i$ added at position $p_i$.

        Let $j \in \{1,\ldots,k\}$, and consider the polyominoes $\poly{j}{1} \sqsubseteq \poly{j}{2} \sqsubseteq \ldots \sqsubseteq \poly{j}{k-j+1} \in \prodasm{\calT}$, with $|\poly{j}{i}| = |\poly{j}{i-1}|+1$, that all have $r_j$ at their minimal position under $\prec$.
        In other words, $\poly{j}{i}$ contains exactly the tiles $r_j,\ldots,r_{j+i-1}$.
        In particular, for all $i \in \{1,\ldots,k\}$, $\rho_i = \poly{1}{i}$.
        All of these are attachable to some producible assembly containing $s_1$ via the tile type $r_j$.
        We allow $k=1$ so that $\poly{1}{1}$ can also represent any individual tile type $r \in T(\gamma)$ that is not part of a nontrivial polyomino.

        Call such a sequence $(\poly{j}{1},\ldots,\poly{j}{k-j+1})$ a \emph{polyomino attachment class}.
        Each such polyomino $\poly{j}{i}$ is consumed in a reaction in one of only two ways:
        \begin{enumerate}
        %\item in a reaction with an assembly containing $s_1$ (which have concentration at most $\delta_{s_1}$, therefore use up concentration at most $\delta_{s_1} \#_\gamma(r_j)$ of $r_j$'s), or

        \item in a reaction that binds $r_j$ to $r_{j-1}$ (which have concentration at most $\delta_{s_1} \cdot \#_\gamma(r_{j-1}) + \mathrm{dist}_{\rho}(r_{j-1}) \cdot \delta_1 = \delta_{s_1} \cdot \#_\gamma(r_{j-1}) + (j-1) \cdot \delta_1$, which equals $\delta_{s_1} \cdot \#_\gamma(r_{j}) + (j-1) \cdot \delta_1$ by our assumption of equal counts of tiles that are part of the same polyomino, which is less than $[r_j](0)$ by $\delta_1$), or

        \item in a reaction producing another member of the same polyomino attachment class (thus not altering the sum of \eqref{ineq-r-hier}, just shifting its terms).  This corresponds to the attachment of tiles ``further from $\beta$ under $\prec$''.
        \end{enumerate}
        Therefore, there will always be at least an excess of $\delta_1$ concentration of polyominos in the attachment class $(\poly{j}{1},\ldots,\poly{j}{k-j+1})$, i.e., we have the following for all $t \in \R_{\geq 0}$:

        \begin{equation}\label{ineq-r-hier}
            \sum_{i=1}^{k-j+1} [\poly{j}{i}](t) \geq \delta_1.
        \end{equation}

        In particular, for any producible assembly $\alpha$ containing $\beta$, with frontier location $p \in \dom \gamma$ at which some tile type $r_j$ can attach in the seeded model, \eqref{ineq-r-hier} ensures that the total concentration of polyominos that can attach to position $p$ in the hierarchical model is at least $\delta_1$.

        Now we must argue that this lower bound on polyomino concentration implies that the claimed lower bound on concentration of assemblies containing $\zeta$ (i.e., $[\prodasm{\zeta}]$).
        The next inequality we derive (inequality~\eqref{eq-sum-assemblies-with-beta}) places a lower bound on the concentration of the \emph{other reactant} that reacts with the polyomino (the other reactant could be any assembly in the sum on the left side of~\eqref{eq-sum-assemblies-with-beta}) of the reactions that increase $[\prodasm{\zeta}]$.

        %Since $\alpha \in \prodasm{\beta}$ if and only if $\alpha$ contains $\beta$,
        Recall that for all $\alpha \in \prodasm{\beta}$, $\#_\alpha(\beta) = 1$ and for all $\alpha \in \prodasm{\zeta}$, $\#_\alpha(\zeta) = 1$.
        Then for all $t\in\R_{\geq 0}$, $[\prodasm{\beta}](t) = \sum_{\alpha \in \prodasm{\beta}} [\alpha](t)$ and $[\prodasm{\zeta}](t) = \sum_{\alpha \in \prodasm{\zeta}} [\alpha](t)$.
        Since $[\prodasm{\beta}](t)$ is monotonically increasing, for all $t \geq t_0$,
        \begin{eqnarray*}
            [\prodasm{\beta}](t_0)
            &=& \sum_{\alpha \in \prodasm{\beta}} [\alpha](t_0)
            \nonumber
            \\&\leq& \sum_{\alpha \in \prodasm{\beta}} [\alpha](t)
            \nonumber
            \\&=&
            \sum_{\alpha \in \prodasm{\zeta}} [\alpha](t)
            + \sum_{\alpha \in \prodasm{\beta} \setminus \prodasm{\zeta}} [\alpha](t)
            \nonumber
            \\&=& [\prodasm{\zeta}](t) + \sum_{\alpha \in \prodasm{\beta} \setminus \prodasm{\zeta}} [\alpha](t).
            %\label{eq-sum-assemblies-with-beta}
        \end{eqnarray*}
        Recall that the hypothesis of the lemma supposes that $[\prodasm{\beta}](t_0) \geq \delta_0$.
        Combined with the previous inequality, this gives us for all $t \geq t_0$,
        \begin{equation} \label{eq-sum-assemblies-with-beta}
          \sum_{\alpha \in \prodasm{\beta} \setminus \prodasm{\zeta}} [\alpha](t)  \geq \delta_0 - [\prodasm{\zeta}](t).
        \end{equation}

        For all $\alpha\in\prodasm{\calT_\rs}$, define $|\alpha|_\gamma = |\dom\alpha \cap \dom\gamma|$ to be the number of tiles in $\alpha$ that are part of $\gamma$.
        For all $t\in\R_{\geq 0}$, define
        \[
            S_\beta^\gamma(t) = \sum_{\alpha \in \prodasm{\beta}} |\alpha|_\gamma \cdot [\alpha](t).
        \]
        Think of $S_\beta^\gamma(t)$ as the total ``mass within $\gamma$'' (concentration $\cdot$ (size within $\dom\gamma$)) of assemblies that contain $\beta$, noting that $|\alpha|_\gamma \leq |\gamma|$ for all $\alpha\in\prodasm{\calT_\rs}$.
        %Since every seed tile ends up in a superassembly of $\zeta$ eventually (given the choice of low concentration of seed tiles),
        Since $\beta$ contains $s_1$,
        this implies that $[\prodasm{\beta}](t) \leq \delta_{s_1}$ for all $t\in\R_{\geq 0}$ by the bound on $[s_1](0)$.
        $S_\beta^\gamma(t)$ is maximized when all seed tiles have been incorporated into $\zeta$ or one of its superassemblies, which implies that
        \begin{equation} \label{ineq-max-S}
          \max_{t\in\R_{\geq 0}} S_\beta^\gamma(t)
          = \max_{t\in\R_{\geq 0}} \sum_{\alpha \in \prodasm{\beta}} |\alpha|_\gamma \cdot [\alpha](t)
          \leq |\gamma| \max_{t\in\R_{\geq 0}} \sum_{\alpha \in \prodasm{\beta}} [\alpha](t)
          = |\gamma| \max_{t\in\R_{\geq 0}} [\prodasm{\beta}](t)
          \leq \delta_{s_1} |\gamma|.
        \end{equation}

        Since every incomplete (with respect to completing $\zeta$) assembly $\alpha \in \prodasm{\beta} \setminus \prodasm{\zeta}$ that contains $\beta$ can react with at least one polyomino attachment class as defined in \eqref{ineq-r-hier} (in the inequality below, let $(\poly{j,\alpha}{1},\ldots,\poly{j,\alpha}{k-j+1})$ be the polyomino attachment class associated to $\alpha$), increasing the number of tiles occupying $\dom\gamma$ by at least 1, it follows that for all $t \geq t_0$,
        \begin{eqnarray}
            \frac{d S_\beta^\gamma(t)}{dt}
            &\geq& \nonumber
            \sum_{\alpha \in \prodasm{\beta} \setminus \prodasm{\zeta}} \left( \sum_{i=1}^{k-j+1} [\poly{j,\alpha}{i}](t) \right) \cdot [\alpha](t)
            \\&\geq& \nonumber
            \delta_1 \left( \sum_{\alpha \in \prodasm{\beta} \setminus \prodasm{\zeta}} [\alpha](t) \right) \ \ \ \ \text{by \eqref{ineq-r-hier}}
            \\&\geq& \label{ineq-dSdt-size-lemma}
            \delta_1 \delta_0  - \delta_1 [\prodasm{\zeta}](t).  \ \ \ \ \ \ \ \ \ \ \text{by \eqref{eq-sum-assemblies-with-beta}}
        \end{eqnarray}

	The first inequality above is an equality if there is only a single frontier location of $\alpha$ within $\dom \gamma$, to which any polyonimo in $(\poly{j,\alpha}{1},\ldots,\poly{j,\alpha}{k-j+1})$ could attach, but if there are multiple frontier locations in $\dom \alpha$, then $\frac{d S_\beta^\gamma(t)}{dt}$ would be even larger.

        It suffices to show that for $t^* = \frac{2 |\gamma| c}{\delta_1}+t_0$, $[\prodasm{\zeta}](t^*) \geq \frac{\delta_0}{2}$.
        (By the monotonicity of $[\prodasm{\zeta}](t)$, this will imply $[\prodasm{\zeta}](t) \geq \frac{\delta_0}{2}$ for all $t \geq t^*$ as well.)
        Suppose for the sake of contradiction that $[\prodasm{\zeta}](t^*) < \frac{\delta_0}{2}$.
        Then by \eqref{ineq-dSdt-size-lemma}, for all $t \in [t_0,t^*]$,
        $
            \frac{d S_\beta^\gamma(t)}{dt}
            \geq \delta_1 \delta_0  - \delta_1 [\prodasm{\zeta}](t)
            > \frac{\delta_1 \delta_0}{2}.
        $
        This implies that
        $
            S_\beta^\gamma(t^*)
            >  \frac{\delta_1 \delta_0}{2} (t^* - t_0)
            = \frac{\delta_1 \delta_0}{2} \frac{2 |\gamma| c}{\delta_1}
            = \delta_0 c |\gamma|
            \geq \delta_{s_1} |\gamma|,
        $
        which contradicts \eqref{ineq-max-S}.
    \end{proof}

    Intuitively, Lemma~\ref{lem-polyonimo-fast-as-seeded-mass-action} shows that one ``stage'' of assembly (the stage that goes from $\beta$ to $\zeta$) is ``fast.''
    The following lemma extends the analysis of Lemma~\ref{lem-polyonimo-fast-as-seeded-mass-action} to the case where we want to analyze multiple stages of assembly ($p$ is the number of stages in the statement of Lemma~\ref{lem-polyonimo-fast-multi-stage-mass-action}), where each stage may involve the addition of tile types of asymptotically different concentrations than the other stages.
    Intuitively, this is required to prove fast assembly time because some stages (such as hard-coding the seed row of a counter with $\omega(1)$ tile types) proceed slowly relative to their size since they require many tile types (hence each tile type has low concentration).
    This is not a problem since the size of such stages is small, but it implies that we cannot apply the ``average attachment time per tile'' of such a slow stage uniformly across the entire assembly.
    Some stages (such as completing a counter with a complete seed row with $O(1)$ tile types) proceed very quickly, and such stages account for ``most'' of the size of the final assembly, so that the total assembly speed is fast despite a few ``bottleneck stages'' in which the assembly process slows down for a short time.
    The statement of the lemma is quite intricate, but the proof is simple, setting up each stage of growth to match the hypothesis of Lemma~\ref{lem-polyonimo-fast-as-seeded-mass-action} and then applying that lemma iteratively.

    \begin{lemma}\label{lem-polyonimo-fast-multi-stage-mass-action}
        Let $\calT=(T,\tau)$ be a polyomino-robust hierarchical TAS with unique terminal assembly $\ha$.
        Let $\beta_1, \beta_2, \dots, \beta_p \in\prodasm{\calT_\rs}$ such that $\beta_{i-1}\sqsubseteq \beta_i$ for all $i\leq p$, where $\beta_1$ is the assembly with a single seed $s_1$.
        Let $\gamma_i=\beta_{i+1} \setminus \beta_i$ for each $i < p$, and supposed that $T(\gamma_i) \cap T(\gamma_j) = \emptyset$ for all $i, j < p$, and that all polyominos contained in $\beta_p$ are completely contained in $\gamma_i$ for some $i$. (i.e., suppose that each $\beta_p$ and $\beta_{p+1}$ satisfy the hypothesis of Lemma~\ref{lem-polyonimo-fast-as-seeded-mass-action}, where $\beta_{p+1}$ in this lemma is interpreted to be $\zeta$ in Lemma~\ref{lem-polyonimo-fast-as-seeded-mass-action}.)
        Suppose also that $\delta_{s_1} \leq \frac{1}{|\beta_p|}$.
        Set initial concentrations of all $r \in T(\gamma_i)$, where $i \leq p$ as $[r](0) = \delta_{s_1} \cdot \#_{\gamma_i}(r) + (\mathrm{dist}_{\rho(r)}(r) + 1) \cdot \frac{1}{|T(\gamma_i)|}$.

        Then for all $t \geq \sum_{i=1}^{p-1}
        2^i |\gamma_i| |T(\gamma_i)|$, $[\prodasm{\beta_p}](t) \geq \frac{\delta_{s_1}}{2^{p-1}}$.
        Furthermore, $\sum_{r \in T} [r](0) \leq 1 + p(q+1)$, where $q$ is the maximum size of any polyomino as in the definition of polyomino-robustness.
     \end{lemma}

    \begin{proof}
     We will show the statement $[\prodasm{\beta_m}](t) \geq \frac{\delta_{s_1}}{2^{m-1}}$ for all $t \geq \sum_{i=1}^{m-1}
     2^i |\gamma_i| |T(\gamma_i)|$ is true for all $m \leq p$ by induction on $m$.

    This statement is trivially true for $m=1$ since $[\prodasm{\beta_1}](t) = [\prodasm{s_1}](t) = \delta_{s_1}$ for all $t$.

    Assume that the statement is true for $m = m_0$. In other words, at time $t = \sum_{i=1}^{m_0-1} 2^i |\gamma_i| |T(\gamma_i)|$, $[\prodasm{\beta_{m_0}}](t) \geq \frac{\delta_{s_1}}{2^{m_0-1}}$. Invoking Lemma~\ref{lem-polyonimo-fast-as-seeded-mass-action} with $\beta = \beta_{m_0}$, $\gamma = \gamma_{m_0}$, $t_0 =  \sum_{i=1}^{m_0-1} 2^i |\gamma_i| |T(\gamma_i)|$, $c = 2^{m_0-1}$, and $\delta_1 = \frac{1}{|T(\gamma_i)|}$, we know that $[\prodasm{\beta_{m_0+1}}](t) \geq \frac{\delta_{s_1}}{2^{m_0}}$ at $ t\ =\ t_0 +  2^{m_0} |\gamma_{m_0}| |T(\gamma_{m_0})|\ =\ \sum_{i=1}^{m_0} 2^i |\gamma_i| |T(\gamma_i)|$. Therefore, the statement is true for $m = m_0 +1$.

    The total concentration of tiles whose type is in $T(\beta_p)$ is
    \begin{eqnarray*}
    \sum_{r\in T(\beta_p)} [r](0) & = & \sum_{i=1}^p \sum_{r\in T(\gamma_i)} [r](0) \\ \nonumber
    & = &  \sum_{i=1}^p \left( \delta_{s_1} \sum_{r\in T(\gamma_i)} \#_{\gamma_i}(r) + \sum_{r\in T(\gamma_i)} (\mathrm{dist}_{\rho(r)}(r) + 1) \cdot \frac{1}{|T(\gamma_i)|}   \right)  \\ \nonumber
    & \leq &  \sum_{i=1}^p \delta_{s_1} |\gamma_i| +\ \sum_{i=1}^p \max_{r \in T(\gamma_i)} [\mathrm{dist}_{\rho(r)}(r) + 1] \\ \nonumber
    & \leq &  \delta_{s_1} |\beta_p|\ +\ p \max_{r \in T(\gamma_i)} [\mathrm{dist}_{\rho(r)}(r)+1].
    \end{eqnarray*}
    Since $\delta_{s_1} \leq \frac{1}{|\beta_p|}$, and all polyominos have size at most $q$, the total initial concentration of all tile types is at most $1 + p (q+1)$.
    \end{proof}

    In particular, if the number of stages $p$ and polyomino size bound $q$ are constant with respect to the size of the terminal assembly (call this parameter $n$; as in our diameter $\Theta(n)$ rectangle of Theorem~\ref{thm-fast-rectangle}), then the total initial concentration of all tile types is constant with respect to $n$.  In particular, we can scale these concentrations to obey the finite density constraint without affecting the asymptotic time and concentration bounds derived in Lemma~\ref{lem-polyonimo-fast-multi-stage-mass-action}.

    The following is a discrete version of Lemma~\ref{lem-polyonimo-fast-as-seeded-mass-action} that can be used to analyze polyomino-robust systems in the discrete assembly time model.

    \begin{lemma}\label{lem-polyonimo-fast-as-seeded-discrete}
        Let $\calT=(T,\tau)$ be as in Lemma~\ref{lem-polyonimo-fast-as-seeded-mass-action}, and let $\alpha\in\prodasm{\beta}$ be the current state of the Markov process defining the assembly time of $\calT$, and suppose that the current time is $t_0$ and that $\beta \sqsubseteq \alpha$.
        Define $\time_{\alpha,\zeta}$ to be the random variable representing the first time at which $\alpha$ grows into a superassembly of $\zeta$.
        Define $\mathsf{T}(\alpha,\zeta) = \exp{\time_{\alpha,\zeta}}$.
        Then $\mathsf{T}(\alpha,\zeta) \leq \frac{|\gamma|}{\delta_1} + t_0$.
    \end{lemma}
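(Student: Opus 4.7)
The plan is to give a discrete-time analogue of the mass-action argument in Lemma~\ref{lem-polyonimo-fast-as-seeded-mass-action}, recycling the same lower bound $\delta_1$ on the total concentration of each polyomino attachment class. The idea is that although individual tile concentrations fluctuate, the concentration of polyominos that can extend an intermediate assembly at any given frontier position is uniformly bounded below for all $t \in \R^+$, which directly bounds each transition rate of the Markov process.

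First I would argue that whenever the current assembly $\alpha'\sqsupseteq\alpha$ is not yet a superassembly of $\zeta$, there is at least one frontier position $p \in \dom\gamma \setminus \dom\alpha'$ at which a polyomino attachment class can attach to $\alpha'$ in the seeded model. This uses Property~\eqref{prop-poly-robust-2} of polyomino-robustness, plus the hypothesis $\zeta\in\prodasm{\calT_\rs}$, which together guarantee a seeded assembly sequence extending $\alpha'$ to a superassembly of $\zeta$, and thus at least one valid next seeded attachment. By inequality~\eqref{ineq-r-hier} from the proof of Lemma~\ref{lem-polyonimo-fast-as-seeded-mass-action}, the total concentration of assemblies in the polyomino attachment class associated with $p$ is at least $\delta_1$ at every time.

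Next I would translate this into a bound on the waiting time for ``progress events''. Since the summed rate of attachments at $p$ that place at least one tile of $\gamma$ is always at least $\delta_1$, the time from any such state $\alpha'$ until the next attachment that deposits a tile in $\dom\gamma$ is stochastically dominated by an exponential random variable with mean $1/\delta_1$, regardless of whatever non-progress reactions may occur in the background (by the memoryless property, these do not increase the expected time to the next progress event). Each progress event adds at least one new tile in $\dom\gamma$, so after at most $|\gamma|$ progress events the assembly contains all of $\dom\gamma$ and thus is a superassembly of $\zeta$.

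Finally I would conclude by linearity of expectation: the sum of at most $|\gamma|$ expected progress-event waiting times contributes at most $|\gamma|/\delta_1$, and adding the starting time $t_0$ yields $\mathsf{T}(\alpha,\zeta) \leq |\gamma|/\delta_1 + t_0$. The main subtlety to be careful about is precisely the clean separation of ``progress'' attachments from other concurrent attachments, which the uniform-in-time lower bound $\delta_1$ on the progress-event propensity handles automatically; everything else is just reading off the Markov process transition rates already defined in Section~\ref{sec-time-complexity-hierarchical}.
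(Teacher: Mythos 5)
Your proposal is correct and is essentially the paper's own argument: the paper likewise defines, for each $j$, the expected time $t_j$ until at least one more tile is added to $\dom\gamma$ given that $j$ tiles of $\gamma$ are present, bounds $t_j \leq 1/\delta_1$ via the uniform-in-time concentration bound \eqref{ineq-r-hier}, and sums at most $|\gamma|$ such waits by linearity of expectation to get $\mathsf{T}(\alpha,\zeta) \leq t_0 + |\gamma|/\delta_1$. Your added justification that an available frontier attachment in $\dom\gamma$ always exists, and that background reactions do not slow the progress events, is exactly the content the paper leaves implicit.
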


    \begin{proof}
        For all $j \in \{0,\ldots,|\gamma|\}$, let $t_j$ denote the expected time until the assembly adds at least one more tile to $\dom\gamma$, conditioned on the event that current size of the assembly within $\dom\gamma$ is $j$.
        By \eqref{ineq-r-hier}
        and the model of Markov process transition rates we employ to determine $\mathsf{T}(\calT,C,s_1)$, it holds that $t_j \leq 1/\delta_1$ for all $j \in \{0,\ldots,|\gamma|-1\}$.
        By linearity of expectation,
        $
        \mathsf{T}(\alpha,\zeta)
        \triangleq
        \exp{\time_{\alpha,\zeta}}
        \leq
        t_0 + \sum_{j=0}^{|\gamma|-1} t_j
        \leq
        t_0 + \frac{|\gamma|}{\delta_1}. %\qedhere
        $
    \end{proof}

    \subsection{Construction of a fast-assembling shape}

    This section describes the main components of the construction of Theorem~\ref{thm-fast-rectangle}.

    \begin{figure}[htb]
    \begin{center}
      \includegraphics[width=6in]{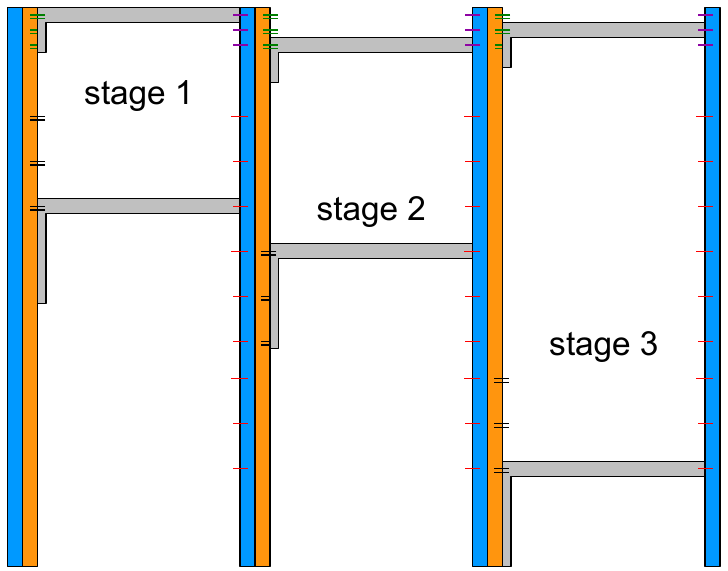}
      \caption{\label{fig:vertical-bars-overview} \figuresize
      High-level overview of interaction of ``vertical bars'' and ``horizontal bars'' to create the rectangle of Theorem~\ref{thm-fast-rectangle}.
      Filler tiles fill in the empty regions as shown in Figure~\ref{fig:vertical-bar-detailed}.
      If glues overlap two regions then represent a formed bond.
      If glues overlap one region but not another, they are glues from the former region but are mismatched (and thus ``covered and protected'') by the latter region.
      }
    \end{center}
    \end{figure}

    Figure~\ref{fig:vertical-bars-overview} shows an overview of the assembly described in Theorem~\ref{thm-fast-rectangle}.
    It consists of an initial (blue) ``vertical bar'', and $m$ copies of another type of (blue) vertical bar, each adjacent pair connected by a pair of (two different types of gray) ``horizontal bars''.
    The leftmost vertical bar forms, then the two horizontal bars attach, after which their right-side single strength glues cooperate to attach a new vertical bar to the right.
    This continues until the entire $m$ ``stages'' are complete.
    In the meantime, filler tiles fill in the gaps to complete the rectangle.
    Figures~\ref{fig:horizontal-bar-with-arms} and~\ref{fig:vertical-bar} show some more detail of the tile types that create the horizontal and vertical bars, and Lemmas~\ref{lem-horizontal-bar-mass-action} and~\ref{lem-vertical-bar-mass-action} respectively show that these subassemblies have ``substantial'' concentration ``quickly enough'' to be useful to prove the time bound of Theorem~\ref{thm-fast-rectangle}.

    %\begin{figure}[htb]
%    \begin{center}
%      \includegraphics[width=6.5in]{}
%      \caption{\label{fig:counter} \figuresize
%      A counter ($w \times \approx \log w$ rectangle) used to grow long vertical and horizontal bars in the construction of a fast-assembling square.
%      Tile types (shown by example of the assembly that they grow, so not all tile types above are unique) that grow the $2 \times w'$ (for $w' = \log' w$, where the horizontal width of the terminal assembly is $w$) bar from Figure~\ref{fig:thin-bar} (rotated 90 degrees counterclockwise to make the left two columns) as a hard-coded initial value of an ``optimal'' counter from~\cite{AdChGoHu01}.
%      Details of the glues to implement the counter are given in~\cite{AdChGoHu01}.
%      In this example, the start value is 3, with the LSB on the top of the column.
%      Blue tiles represent the position of the least significant 0, which initiates grows of the next column using a strength-2 glue.
%      As with Figure~\ref{fig:thin-bar}, all growth other than small dimers (which will attach just as correctly as if they had stayed monomers) must nucleate from the ``seed'' labeled $s$.
%      Dotted lines around groups of tiles represent the various settings of $\beta$ and $\zeta$ as Lemma~\ref{lem-polyonimo-fast-as-seeded-mass-action} is applied repeatedly to bound the assembly time.
%      }
%    \end{center}
%    \end{figure}

    Figures~\ref{fig:horizontal-bar-with-arms} and~\ref{fig:vertical-bar} show details of the tile types that assemble the ``horizontal bars'' and ``vertical bars'' of Figure~\ref{fig:vertical-bars-overview}, and Lemmas~\ref{lem-horizontal-bar-mass-action} and~\ref{lem-vertical-bar-mass-action} bound their assembly time.
    \opt{normal}{It may be beneficial for the reader first to skim the proof of Theorem~\ref{thm-fast-rectangle}, prior to examining Figures~\ref{fig:horizontal-bar-with-arms} and~\ref{fig:vertical-bar} in detail, in order to understand the intuitive purpose of the shape and outer glue placement of the horizontal and vertical bars.}

    For all $n\in\Z^+$, define $\log' n = \floor{\log n}+1$, the number of bits required to represent $n$ in binary, so that $2^{\log' n}$ is the next power of 2 greater than $n$.

    \begin{figure}[htb]
    \begin{center}
      \includegraphics[width=6.5in]{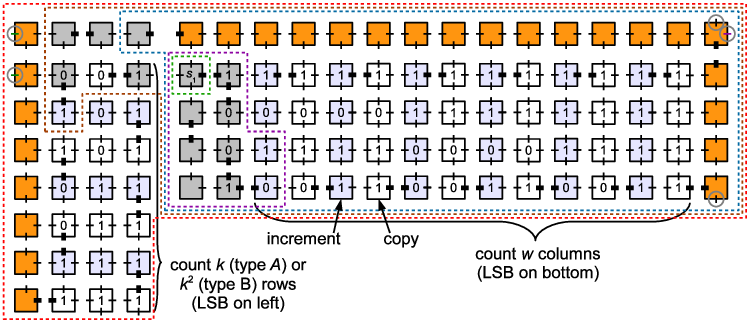}
      %\opt{normal}{\includegraphics{figures/horizontal-bar-with-arms-detailed}}
      %\opt{submission}{\includegraphics{}}
      \caption{\label{fig:horizontal-bar-with-arms} \figuresize
      ``Horizontal bar with a vertical arm'' for the construction of a fast-assembling square.
      The arm (assembled by a downward-growing counter similar to the horizontal counter, and set up to grow only after the horizontal counter is complete) is intended to block other horizontal bars from binding after this one has bound to a vertical bar.
      The arm has height either $k$ (for type ``$A$'' horizontal bars of Figure~\ref{fig:vertical-bars}) or $k^2$ (for type ``$B$'' horizontal bars).
      Since we choose $w = k^4$, in either case the tile complexity and assembly time are dominated by the horizontal bar.
      The circles on the left and right indicate single-strength glues that are used when and after the bar binds hierarchically.
      The two west-facing single-strength glues cannot cooperate until the entire assembly is complete, so in particular the right glue must already be present.
      The east-facing single-strength glue is designed to cooperate with a glue from a different horizontal bar to control placement of a vertical bar, as in Figure~\ref{fig:vertical-bars}.
      The north and south facing single-strength glues are designed to help with filler tiles or stage-counting as in Figure~\ref{fig:vertical-bar-detailed}.
      }
    \end{center}
    \end{figure}

    %Figure~\ref{fig:horizontal-bar-with-arms} shows details for the tile types that assemble the ``horizontal bars'' of Figure~\ref{fig:vertical-bars-overview}.
%    The next lemma bounds their assembly time.

    \begin{lemma}\label{lem-horizontal-bar-mass-action}
      Let $\calT=(T,2)$ be the ``horizontal bar with an arm'' hierarchical TAS shown (by example) in Figure~\ref{fig:horizontal-bar-with-arms}, and let $\ha$ be its unique terminal assembly of an $O(w) \times O(\log w)$ horizontal bar with an $O(\log k) \times O(k)$ ``vertical arm'' on its left, where $w$ and $k$ are as in the proof of Theorem~\ref{thm-fast-rectangle} (so that $w \geq k$ in particular).
      Then there is an assignment of concentrations obeying $\sum_{r \in T} [r](0) = O(1)$ (with respect to $k$ and $w$) such that
      for all $t \geq O(w \log w)$, $[\ha](t) \geq \Omega(\frac{1}{w \log w})$.
    \end{lemma}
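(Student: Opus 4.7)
The plan is to apply the multi-stage polyomino-robustness machinery (Lemma~\ref{lem-polyonimo-fast-multi-stage-mass-action}) to the horizontal-bar-with-arm construction, after first verifying that its tile set is polyomino-robust. Polyomino-robustness requires: (i) a constant bound $q$ on the size of polyominos not containing the seed, (ii) polyomino-safety so the hierarchical and seeded assemblies with the seed coincide, (iii) a total-order ``linear chain'' structure inside each polyomino, and (iv) uniqueness of tile types within and across polyominos. By inspection of Figure~\ref{fig:horizontal-bar-with-arms}, the horizontal and vertical counters are designed in a zig-zag style analogous to the $2\times n$ bar of Figure~\ref{fig:thin-bar}, with one-to-one strength-2 output glues, so every off-seed assembly has size at most $q=2$; glue uniqueness can be enforced tile-by-tile, so (iv) holds. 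This is the main design-verification step; once it is in place the analysis is mechanical.

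Next I would partition the growth into a constant number $p$ of stages $\beta_1\sqsubseteq\beta_2\sqsubseteq\cdots\sqsubseteq\beta_p=\ha$, each satisfying the hypothesis of Lemma~\ref{lem-polyonimo-fast-as-seeded-mass-action}. Stage~1 is the hard-coded ``seed row'' encoding $w$ in binary, of size $|\gamma_1|=O(\log w)$ using $|T(\gamma_1)|=O(\log w)$ unique tile types; stage~2 is the horizontal counter that fills the $O(w)\times O(\log w)$ rectangle by polyomino-safe zig-zag growth, with $|\gamma_2|=O(w\log w)$ but only $|T(\gamma_2)|=O(1)$; stage~3 is the hard-coded top row of the downward arm encoding its height ($k$ or $k^2$), contributing $|\gamma_3|=O(\log k)$ and $|T(\gamma_3)|=O(\log k)$; and stage~4 is the arm counter itself, with $|\gamma_4|=O(k\log k)$ and $|T(\gamma_4)|=O(1)$. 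Uniqueness of tile types across the $\gamma_i$'s follows from hard-coding distinct glue labels for the seed-row positions, the counter body, and the arm.

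Setting $\delta_{s_1}=\Theta(1/|\ha|)=\Theta(1/(w\log w))$, Lemma~\ref{lem-polyonimo-fast-multi-stage-mass-action} then directly yields $[\ha](t)\geq \delta_{s_1}/2^{p-1}=\Omega(1/(w\log w))$ once
\[
t\ \geq\ \sum_{i=1}^{p-1} 2^i |\gamma_i|\,|T(\gamma_i)|
\ =\ O(\log^2 w)+O(w\log w)+O(\log^2 k)+O(k\log k)
\ =\ O(w\log w),
\]
where the dominant term is the horizontal counter and we used $w\geq k$. Since $p$ and $q$ are absolute constants, the same lemma gives $\sum_{r\in T}[r](0)\leq 1+p(q+1)=O(1)$, and after rescaling by this constant to obey the finite density constraint the asymptotic time and concentration bounds are unchanged.

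The only step I expect to require care is the polyomino-robustness verification for the ``corner'' where the arm branches off from the main horizontal bar: one must ensure that the arm's hard-coded top row cannot nucleate growth before the horizontal counter has deposited its triggering strength-2 glue, and that no two-tile polyomino straddling the junction is producible. This is handled in the same style as the zig-zag design of Figure~\ref{fig:thin-bar}, by ensuring every strength-2 glue entering the arm has a unique ``input'' tile that appears only after the horizontal counter completes, so the arm is effectively a separate stage whose seed row is delayed until $\beta_3$. Given this, the remaining bookkeeping is routine substitution into the stage formula.
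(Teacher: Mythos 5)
Your proposal is correct and follows essentially the same route as the paper's proof: decompose the growth into the seed row, horizontal counter, arm seed row, and arm counter stages with exactly the $|\gamma_i|$ and $|T(\gamma_i)|$ bounds the paper uses, apply Lemma~\ref{lem-polyonimo-fast-multi-stage-mass-action} with $\delta_{s_1} = \Theta(1/(w\log w))$, and conclude $[\ha](t) \geq \delta_{s_1}/2^{p-1}$ for $t = O(w\log w)$ together with the $1+p(q+1)=O(1)$ total-concentration bound. The only (immaterial) discrepancy is your claim that the polyomino size bound is $q=2$, whereas the paper takes $q=4$ for this tile set; either constant suffices.
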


    \begin{proof}
        Let $\beta_1 \sqsubseteq \beta_2 \sqsubseteq \beta_3 \sqsubseteq \beta_4 \sqsubseteq \beta_5$ respectively represent the assemblies encircled by dotted lines of Figure~\ref{fig:horizontal-bar-with-arms}, so that $\beta_1$ is just the tile $s_1$, and $\beta_5 = \ha$.
        Set the initial concentrations of tile types in $T$ as in the statement of Lemma~\ref{lem-polyonimo-fast-multi-stage-mass-action}.
        Defining $\gamma_1,\ldots,\gamma_5$ as in Lemma~\ref{lem-polyonimo-fast-multi-stage-mass-action}, note that $|T(\gamma_1)| = O(\log w)$,
        $|T(\gamma_2)| = O(1)$,
        $|T(\gamma_3)| = O(\log k)$,
        $|T(\gamma_4)| = O(1)$,
        $|\gamma_1| = O(\log w)$,
        $|\gamma_2| = O(w \log w)$,
        $|\gamma_3| = O(\log k)$, and
        $|\gamma_4| = O(k \log k)$.
        Therefore Lemma~\ref{lem-polyonimo-fast-multi-stage-mass-action} tells us that for all
        $t \geq 2 \cdot O(\log^2 w) + 4 \cdot O(w \log w) + 8 \cdot O(\log^2 k) + 16 \cdot O(k \log k) = O(w \log w)$
        (since $w \geq k$), $[\prodasm{\ha}](t) \geq \delta_{s_1} / 16$.  Setting $\delta_{s_1} = \frac{1}{2 w \log' w}$ satisfies $\delta_{s_1} \leq \frac{1}{|\beta_5|}$, so that the total concentration of tile types is at most $1 + p(q+1)$, where $p=5$ and $q=4$ in the tile system of Figure~\ref{fig:horizontal-bar-with-arms}, where $q$ is defined to be the maximum size of any polyomino as in the definition of polyomino-robustness, giving the required constant concentration bound.
    \end{proof}

    %Figure~\ref{fig:vertical-bar} shows details for the tile types that assemble the ``vertical bars'' of Figure~\ref{fig:vertical-bars-overview}.
%    The next lemma bounds their assembly time.

    \begin{figure}[htb]
    \begin{center}
      \includegraphics[height=5.3in]{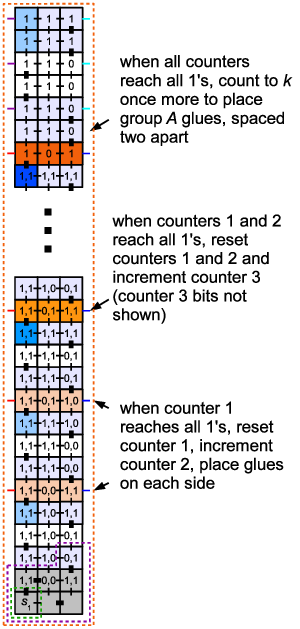}
      \caption{\label{fig:vertical-bar} \figuresize
      Tiles to assemble a ``vertical bar'' as in Figures~\ref{fig:vertical-bars-overview}, \ref{fig:vertical-bars}, and \ref{fig:vertical-bar-detailed} for the fast self-assembly of a rectangle.
      There are two types of vertical bars, the leftmost one with no glues on its west side (not shown), and the other vertical bars with glues each side (shown).
      }
    \end{center}
    \end{figure}

    %\newpage
%    \begin{wrapfigure}{r}{3in}
%    %\vspace{-30pt}
%      \begin{center}
%      \includegraphics[width=3in]{}
%      \caption{\label{fig:vertical-bar} \figuresize
%      Tiles to assemble a ``vertical bar'' as in Figures~\ref{fig:vertical-bars-overview}, \ref{fig:vertical-bars}, and \ref{fig:vertical-bar-detailed} for the fast self-assembly of a rectangle.
%      }
%      \end{center}
%    \end{wrapfigure}

    \begin{lemma}\label{lem-vertical-bar-mass-action}
      Let $\calT=(T,2)$ be the vertical bar hierarchical TAS shown (by example) in Figure~\ref{fig:vertical-bar},
      and let $\ha$ be its unique terminal assembly of an $O(m k^2) \times O(\log k)$ rectangle, where $m$ and $k$ are as in the proof of Theorem~\ref{thm-fast-rectangle}.
      $\ha$ places $m k^2$ single strength ``type $B$'' (as in Figure~\ref{fig:vertical-bars}) glues on its left and right (all glues on the left identical, and all glues on the right identical to each other but different from the left glues), spaced every $k$ vertical rows, and another $k$ ``type $A$'' glues (as in Figure~\ref{fig:vertical-bars}) spaced 2 rows apart on the top left and right.
      Also, there is an assignment of concentrations obeying $\sum_{r \in T} [r](0) = O(1)$ (with respect to $k$ and $w$), such that for all
      %TODO: I replaced the old time and concentration bounds because the proof seems to give a bit tighter bounds, in particular I don't see a need for log^2 in the time bound -- Dave: $t \geq \Omega(m k^2 \log^2(m k^2))$, $[\ha](t) \geq \Omega(\frac{1}{m k^2 \log(m k^2)})$.
      $t \geq \Omega(m k^2 \log(mk))$, $[\ha](t) \geq \Omega(\frac{1}{m k^2 \log(mk)})$.
    \end{lemma}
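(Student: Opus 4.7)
The plan is to mirror the proof of Lemma~\ref{lem-horizontal-bar-mass-action}: decompose the vertical bar assembly into a constant number $p$ of nested stages $\beta_1 \sqsubseteq \beta_2 \sqsubseteq \cdots \sqsubseteq \beta_p = \ha$ corresponding to the dotted regions of Figure~\ref{fig:vertical-bar}, verify polyomino-robustness of the tile system, and invoke Lemma~\ref{lem-polyonimo-fast-multi-stage-mass-action} with $\delta_{s_1}$ set to $\Theta\!\left(1/(m k^2 \log(mk))\right)$. Because $k = m = n^{1/5}$ in the proof of Theorem~\ref{thm-fast-rectangle}, $\log k$ and $\log(mk)$ differ only by a constant factor, so the target bounds can be read interchangeably.

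A natural decomposition is: (i) $\beta_1 = s_1$; (ii) $\beta_2$ adds the hard-coded seed row/column that initializes the counter, with $|T(\gamma_1)| = O(\log k)$ unique tile types and $|\gamma_1| = O(\log k)$ tiles; (iii) $\beta_3$ adds the body of the zig-zag counter filling the $O(mk^2) \times O(\log k)$ rectangle, with $|T(\gamma_2)| = O(1)$ and $|\gamma_2| = O(mk^2 \log k)$, emitting the identical strength-$1$ ``type $B$'' left and right glues every $k$ rows; and (iv) $\beta_4 = \ha$ adds the top region that emits the $k$ ``type $A$'' glues spaced two rows apart, with $|T(\gamma_3)| = O(1)$ and $|\gamma_3| = O(k \log k)$. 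Each stage is designed by inspection of Figure~\ref{fig:vertical-bar} so that its tile types are disjoint from every other stage's tile types, each tile type is unique within its stage, and the strength-$2$ ``output'' glues in the direction of growth are one-to-one, exactly as in Figures~\ref{fig:thin-bar} and~\ref{fig:horizontal-bar-with-arms}.

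Before the main lemma can be applied, the subsystem must be shown to be polyomino-robust: every producible assembly not containing $s_1$ has size at most some constant $q$, the system is polyomino-safe so that $\termasm{\calT}=\termasm{\calT_\rs}$, and each maximal polyomino admits a total order consistent with seeded attachment to an assembly containing $s_1$. This follows from the fact that the zig-zag design (together with the one-to-one strength-$2$ outputs) forces every off-seed subassembly to be a short linear chain along the boundary of a row, whose growth direction is fully determined. With these structural facts, Lemma~\ref{lem-polyonimo-fast-multi-stage-mass-action} applied with $\delta_{s_1} = \Theta(1/|\ha|) = \Theta(1/(mk^2 \log(mk)))$ yields $[\prodasm{\ha}](t) \geq \delta_{s_1}/2^{p-1} = \Omega(1/(mk^2\log(mk)))$ for all $t \geq \sum_{i=1}^{p-1} 2^i |\gamma_i| |T(\gamma_i)|$, a sum dominated by the counter-body term $O(mk^2 \log k) = O(mk^2 \log(mk))$. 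The same lemma bounds $\sum_{r \in T}[r](0) \leq 1 + p(q+1) = O(1)$, so after rescaling by a constant the concentrations obey the finite density constraint without affecting the asymptotic bounds.

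The main obstacle is the polyomino-robustness verification for the two ``decorated'' parts of the structure, namely the periodic emission of left/right type-$B$ glues inside the counter body and the dense placement of $k$ type-$A$ glues spaced only two rows apart at the top. I would have to check that the extra tile types and glues required to trigger a glue emission (on the appropriate rows, modulo $k$, determined by the counter state) do not form polyominos of unbounded size or share strength-$2$ outputs in a way that allows nucleation away from $s_1$; the remedy, if it is not already present in the construction, is to pad each decorated row with a short strength-$2$ chain attached linearly to the counter, at the cost of only a constant blow-up in $|\gamma_2|$ and $|\gamma_3|$ and thus no change to the asymptotic time and concentration bounds.
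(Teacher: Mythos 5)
Your proposal is correct and follows essentially the same route as the paper's proof: decompose the vertical bar into the constant number of nested stages marked in Figure~\ref{fig:vertical-bar} (the paper uses $p=3$, you use $p=4$ by splitting off the top type-$A$ region, which changes nothing asymptotically), verify polyomino-robustness of the zig-zag counter by inspection, and apply Lemma~\ref{lem-polyonimo-fast-multi-stage-mass-action} with $\delta_{s_1}=\Theta(1/(mk^2\log(mk)))$ to obtain both the time/concentration bound and the $O(1)$ total-concentration bound. The only content the paper adds beyond this is an explicit description of the three embedded counters (counting to $k$, $k$, and $m$, plus the final $2k$ counter) that realize the stated glue placements, which you defer to the figure.
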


    \begin{proof}
        The tiles are essentially zig-zag counters as described in~\cite{RotWin00}.
        Since single-strength glues must be placed in precise locations as required in Figures~\ref{fig:vertical-bars-overview}, \ref{fig:vertical-bars}, and \ref{fig:vertical-bar-detailed}, some modifications of the counter are necessary.
        There are three embedded counters 1, 2, and 3, counting to $k$, $k$, and $m$, respectively ($m=k$ in our construction), respectively.
        Counter 1 bits are shown on the left side of each tile, counter 2 bits on the right, and counter 3 bits are omitted.
        Counter 1 increments each row, and when it rolls over, it resets and counter 2 increments once.
        Similarly, when counter 2 rolls over, it and counter 1 reset and counter 3 increments.
        The values $m$ and $k$ are embedded in the first two rows and carried through each subsequent row to enable the resets.
        These place single strength glues on each side (the left glues to allow the vertical bar to bind to two horizontal bars as show in Figure~\ref{fig:vertical-bars}, and the right glues to help the orange counter in Figure~\ref{fig:vertical-bar-detailed} to correctly place glues on the right side of the vertical bar once it has attached.
        These single-strength glues are the ``group $B$'' glues of Figure~\ref{fig:vertical-bars}.
        Finally, when counter 3 rolls over, a new counter (using new tile types) is initiated to count to $2k$, placing a glue every other row, which are the ``group $A$'' glues of Figure~\ref{fig:vertical-bars}.
        To ensure that the vertical arm of the bottommost horizontal bar does not protrude below the bottom of the vertical bar, it is necessary to first count $k^2$ rows without placing side glues, but for space reasons this is not shown.
        Also, the first vertical bar must count an additional $\log m$ rows, since each subsequent vertical bar will add this many rows to the bottom, as shown in Figure~\ref{fig:vertical-bar-detailed}, when the stage computation counter must ``crawl'' below the bottom of the vertical bar in order to place the east-facing glues needed for the next-stage horizontal bars to bind to the vertical bar.

        Let $\beta_1 \sqsubseteq \beta_2 \sqsubseteq \beta_3$ respectively represent the assemblies encircled by dotted lines of Figure~\ref{fig:vertical-bar}, so that $\beta_1$ is just the tile $s_1$, and $\beta_3 = \ha$.
        Set the initial concentrations of tile types in $T$ as in the statement of Lemma~\ref{lem-polyonimo-fast-multi-stage-mass-action}.
        Defining $\gamma_1,\ldots,\gamma_3$ as in Lemma~\ref{lem-polyonimo-fast-multi-stage-mass-action}, note that $|T(\gamma_1)| = O(\log k + \log m)$,
        $|T(\gamma_2)| = O(1)$,
        $|\gamma_1| = O(\log k + \log m)$, and
        $|\gamma_2| = O(m k^2 (\log k + \log m))$.
        Therefore Lemma~\ref{lem-polyonimo-fast-multi-stage-mass-action} tells us that for all $t \geq 2 \cdot O((\log k + \log m)^2) + 4 \cdot O(m k^2 (\log k + \log m)) = O(m k^2 \log(mk))$, $[\prodasm{\ha}](t) \geq \delta_{s_1} / 4$.  Setting $\delta_{s_1} = \frac{1}{2 m k^2 (\log m + 2 \log k)}$ satisfies $\delta_{s_1} \leq \frac{1}{|\beta_3|}$ and gives a total concentration of tile types at most $1 + p(q+1)$, where $p=3$ and $q=3$ in the tile system of Figure~\ref{fig:horizontal-bar-with-arms}, where $q$ is defined to be the maximum size of any polyomino as in the definition of polyomino-robustness, giving the required constant concentration bound.
    \end{proof}

    We now prove \opt{normal}{the main theorem of this section,} Theorem~\ref{thm-fast-rectangle}.

    \begin{proof}[Proof of Theorem~\ref{thm-fast-rectangle}]
        Based on the primitives introduced in Figures~\ref{fig:thin-bar}, \ref{fig:horizontal-bar-with-arms}, and \ref{fig:vertical-bar} and their assembly time analysis, Figures~\ref{fig:vertical-bars-overview}, \ref{fig:vertical-bars}, and \ref{fig:vertical-bar-detailed} outline the remainder of the construction.

        Intuitively, the construction proceeds as follows.
        Fix positive integers $k$, $m$, and $w$ to be defined later.
        The rectangle grows rightward in $m$ ``stages'', each stage of width $w$ and height $h = O(m k^2)$.
        The speedup is obtained by using ``binding parallelism'': the ability of a single (large) assembly $\beta$ to bind to multiple sites on another assembly $\alpha$.
        Think of $\alpha$ as the structure built so far, with a vertical bar on its right end, and think of $\beta$ as one of the horizontal bars shown in Figures~\ref{fig:vertical-bars} and \ref{fig:vertical-bars-overview}.
        This ``binding parallelism'' is in addition to the ``assembly parallelism'' described in Section~\ref{sec-hierarchical-square}: the ability for $\alpha$ to assemble in parallel with $\beta$ so that (a large concentration of) $\beta$ is ready to bind as soon as $\alpha$ is assembled.
        The number $k$ controls the amount of ``binding parallelism'': it is the number of binding sites on $\alpha$ to which $\beta$ may bind, the \emph{first} of which binds in expected time $\frac{1}{k}$ times that of the expected time before any \emph{fixed} binding site binds (since the minimum of $k$ exponential random variables of expected value $t$ has expected value $\frac{t}{k}$).
        Actually two different versions of $\beta$ bind to one of two different regions on $\alpha$, each with $k$ binding sites.
        Because assembly may proceed as soon as each of the two regions has a $\beta$ bound (so that no individual binding site is required before assembly can proceed), the system is not a partial order system;
        in fact it is not even directed since different filler tiles will fill in the other $k-1$ regions where copies of $\beta$ could have gone but did not.

    \begin{figure}[htb]
    \begin{center}
      \opt{normal}{\includegraphics{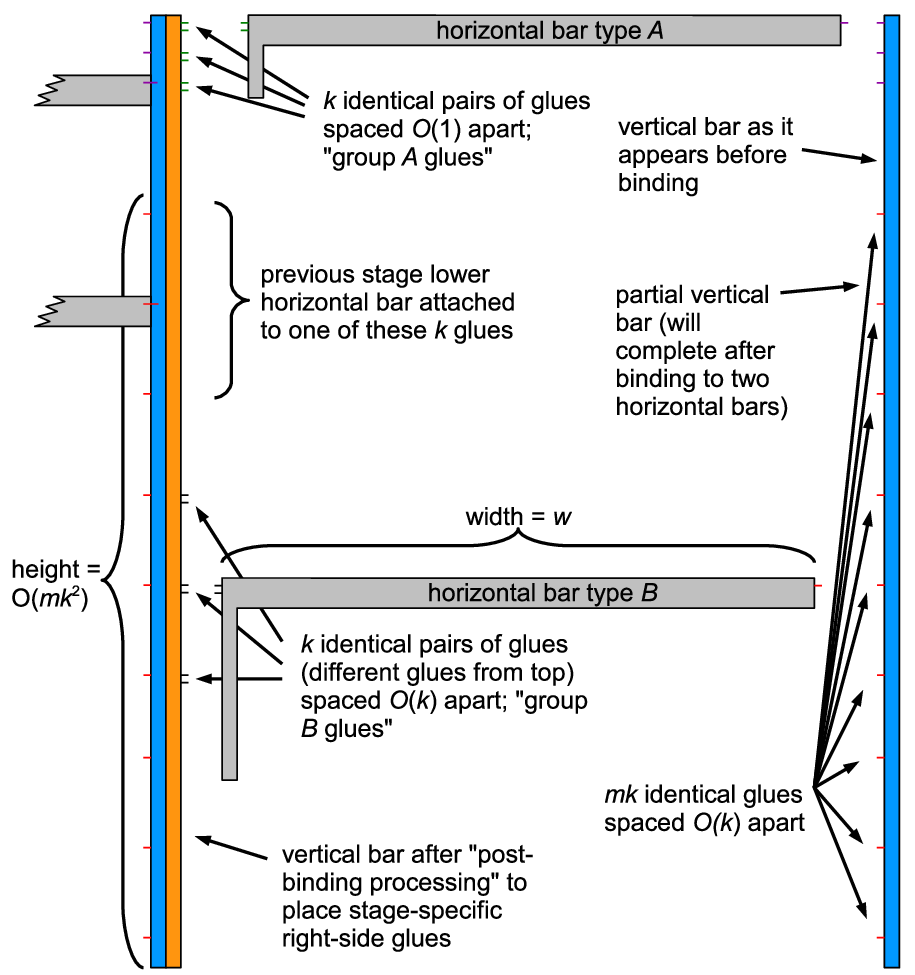}}
      \opt{submission}{\includegraphics{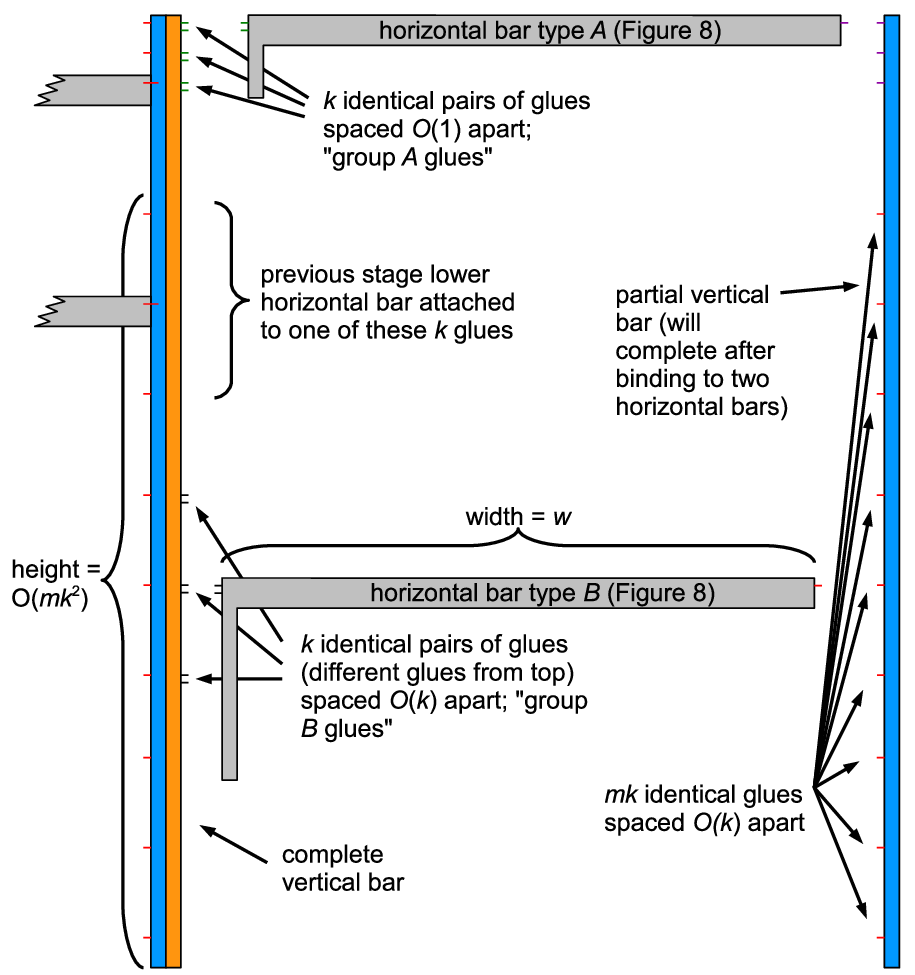}}
      \caption{\label{fig:vertical-bars} \figuresize
      ``Vertical bars'' for the construction of a fast-assembling square, and their interaction with horizontal bars of Figure~\ref{fig:horizontal-bar-with-arms}.
      ``Type $B$'' horizontal bars have a longer vertical arm than ``Type $A$'' since the glues they must block are farther apart.
      }
    \end{center}
    \end{figure}

    \begin{figure}[htb]
    \begin{center}
      \includegraphics[height=6in]{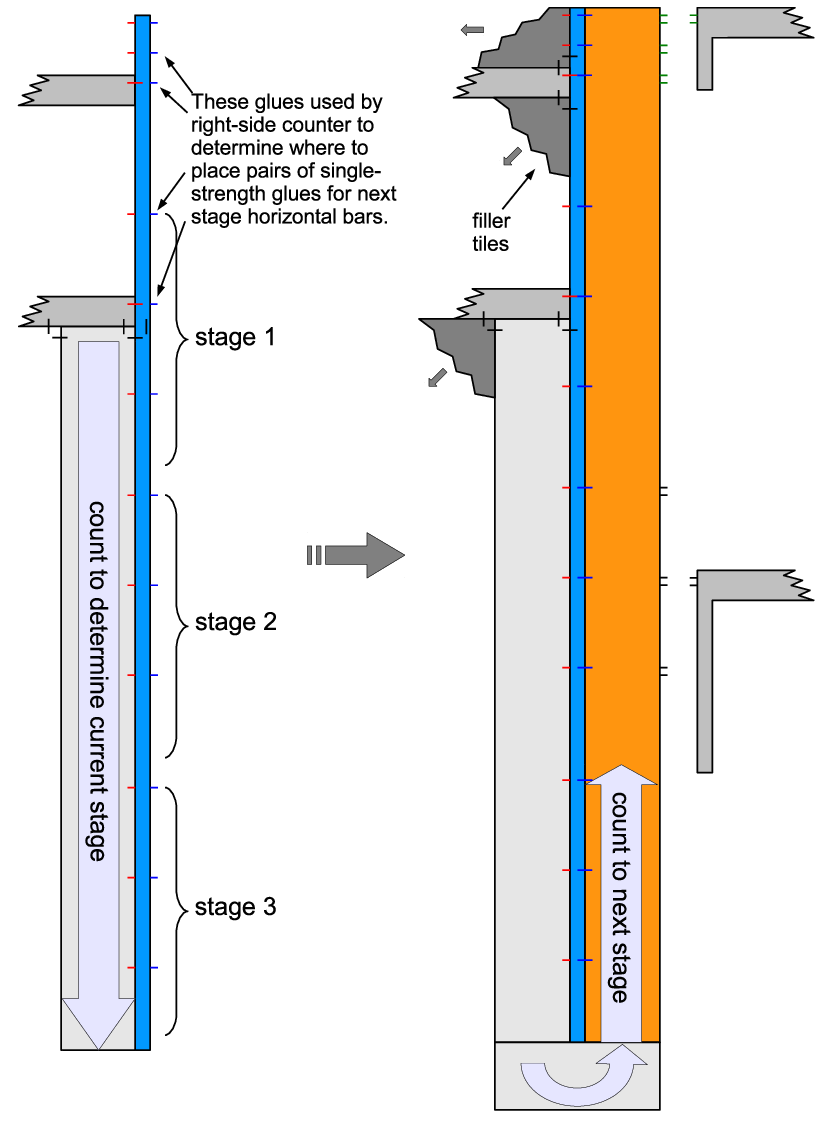}
      \caption{\label{fig:vertical-bar-detailed} \figuresize
      Detailed view of how a single partial vertical bar attaches completes into a full vertical bar after attaching to the previous stage's two horizontal bars.
      This enables the correct vertical placement of the two horizontal bars in the next stage.
      This is needed to communicate which stage is the current one based on the position of the previous lower horizontal bar, since there is only one type of vertical bar and it doesn't ``know'' the current stage.
      }
    \end{center}
    \end{figure}

        The timekeeper seed is contained in a height-$h$ ``vertical bar'' with a region of $k$ pairs of single-strength glues on its top right, and another region of $k$ (more widely spaced) pairs of single-strength glues on its right below the first region.
        Within each pair of glues, the two glues are different (despite being the same color in Figures~\ref{fig:vertical-bars} and \ref{fig:vertical-bars-overview}).
        However, within the first region, each pair is the same.
        Within the second region, each pair of glues is also identical to all other pairs in the second region, although each is distinct from the pair of glues of the first region.
        This vertical bar assembles as in Figure~\ref{fig:vertical-bar}, but rotated 90 degrees.
        The placement of the glues on the right is simple to calculate from the numbers $k$ and $m$, which are embedded into the tile types.
        The width of the vertical bar is therefore $O(\log k + \log m)$ (although most figures show the vertical bar as ``thin'', it is more than one tile wide) and requires at most $O(\log k + \log m)$ tile types to encode.
        The leftmost vertical bar, having no west-facing glues, has distinct tile types from the other vertical bars, but all other vertical bars are identical to each other.

        The horizontal bar of Figure~\ref{fig:horizontal-bar-with-arms} also has two types, but unlike the vertical bar, both types appear in each stage: type $A$ to bind in the top region and type $B$ to bind in the bottom region of the vertical bar, each with different single-strength glues on each end from the other type.
        The vertical ``arm'' on the left side below the bar is intended to prevent other horizontal bars from binding to another pair of glues in the same region of a vertical bar.
        An order of growth is chosen that enforces that the entire horizontal bar and the arm must be assembled before the two strength-1 glues on the left are present.
        This is required for speed; if a horizontal bar could bind to the right of a vertical bar before the horizontal bar is complete, then it would have to complete (taking at least time $w$ in the worst case) before growth of the larger assembly could continue.
        By ensuring that only complete horizontal bars can attach to vertical bars (and ensuring that many horizontal bars will be complete by the time ``most'' vertical bars require them), the assembly grows rightward quickly.
        Since the right side of a horizontal bar has only a single strength-1 glue, two horizontal bars, one in each of regions $A$ and $B$, are required to bind before the next stage's vertical bar may attach by using cooperative binding with each of these strength-1 glues.

        The following is a key idea in the construction: since there are only two kinds of horizontal bars, the glues on the right side cannot ``communicate'' the current stage.
        The natural solution to this, letting single-tile addition propagate the stage information from left to right along the horizontal bar, defeats the purpose of letting the horizontal bar attach hierarchically: such a solution would slow down the assembly process to be no faster than the seeded model.
        To enforce that stage $i+1$ properly follows stage $i$ (so that we deterministically stop after $m$ stages), the relative \emph{position} of the horizontal bars with respect to each other communicates the current stage.
        This is the reason that each stage ``staggers'' the vertical position of the group $B$ region of $k$ east-facing glues on the vertical bar, depending on the stage.

        In more detail, examine the single-strength glues on the left side of the vertical bar in Figure~\ref{fig:vertical-bar-detailed}.
        The group $A$ glues on the top left side of the vertical bar are each 2 spaces apart (to make room for their counterpart glues that will be placed on the \emph{right} side, which are twice as numerous since they cooperate in adjacent pairs).
        The group $B$ glues are each $k$ spaces apart.
        This ensures that every vertical distance between a glue from group $A$ and a glue from group $B$ is unique.
        Note that all vertical bars except the initial vertical bar have more single-strength glues on the left than on the right.
        These allow the vertical bars to bind at \emph{any} stage.
        However, by the distance-uniqueness property just explained, there is only one vertical position in which the vertical bar can bind, so that the vertical bar will be evenly lined up with the vertical bar from all the previous stages.
        Furthermore, it is possible, as shown in Figure~\ref{fig:vertical-bar-detailed}, to use a counter to \emph{measure} the relative height of the bottom horizontal bar in order to determine the current stage.
        The counter increments once for each group of $k$ single-strength glues that it passes (this can be implemented by marking boundaries between groups of $k$ glues with a special glue).

        As Figure~\ref{fig:vertical-bar-detailed} shows, once this counter reaches the bottom of the vertical bar and has value $h$ to indicate that the previous stage was stage $m-h$, it uses $h$ to determine where to place pairs of glues on the right side of the vertical bar for the next stage: they should placed at relative height $h-1$.
        This construction processing of the left side of the vertical bar to assemble the right side of the vertical bar can be done with a constant number of tile types.
        When the value $h$ is calculated to be 0, no right side of the vertical bar is constructed, since we have reached the final stage.
        As Figure~\ref{fig:vertical-bar-detailed} shows, filler tiles fill in the gaps above and below the horizontal bars after the vertical bar binds to the right.
        %The filler tiles between the two horizontal bars will mismatch with the lower horizontal bar, but since it is guaranteed to be in place before the filler tiles begin to bind, there is no danger that they will ``block'' the horizontal bar.

        We now analyze the assembly time of this construction.
        Although the TAS we describe is not directed, the subcomponents are deterministic, so we may apply Lemmas~\ref{lem-polyonimo-fast-as-seeded-mass-action},~\ref{lem-polyonimo-fast-multi-stage-mass-action} and \ref{lem-polyonimo-fast-as-seeded-discrete} to them.

        We choose the timekeeper seed to be the same $s_1$ as shown in Figure~\ref{fig:vertical-bar}, for the alternate (not shown) version of the tiles that create the stage 1 vertical bar, with no glues on the left side.
        By our design in Figure~\ref{fig:vertical-bar-detailed}, the only way a horizontal bar can attach to any other
        assembly is to attach to a complete vertical bar to its left that part of an assembly containing $s_1$.

        Let $T_\text{h-A}$, $T_\text{h-B}$, $T_\text{v}$, and $T_{\text{v}_0}$ be the tile types to create horizontal type $A$ bars, horizontal type $B$ bars, vertical bars (partial, those colored blue in Figures~\ref{fig:vertical-bars} and~\ref{fig:vertical-bar-detailed}), and leftmost vertical bar, respectively.
        %Let $T_\text{h} = T_\text{h-A} \cup T_\text{h-B}$.
        By the construction, we have $|T_\text{h-A}| = O(\log w + \log k)$, $|T_\text{h-B}| = O(\log w + \log k)$, and $|T_\text{v}| = |T_{\text{v}_0}| = O(\log w + \log k + \log m)$.

        For the general system, we add all four systems
        $T_{\text{h-A}}$, $T_{\text{h-B}}$, $T_{\text{v}}$, and $T_{\text{v}_0}$. We also add a constant number of counter
        tile types that wrap around the vertical bar as in Figure~\ref{fig:vertical-bar-detailed}, with each tile type having
        concentration $\Omega(1)$
        and a constant number of filler tile types also with total concentration $\Omega(1)$.
        In particular, assign the tile types concentrations as in the statement of Lemma~\ref{lem-polyonimo-fast-as-seeded-mass-action}, so that each tile type in the counter tiles and filler tiles (or polyomino attachment class) is guaranteed to have $\Omega(1)$ \emph{excess} concentration by assigning extra concentration to tile types in polyominos.
        (Although the details are not shown, it is trivial to implement the stage counter tiles with a polyomino-robust system, and to implement the filler tiles with no polyominos at all.)
        This allows us to apply Lemma~\ref{lem-polyonimo-fast-as-seeded-discrete} to these subsystems with $\delta_1 = \Omega(1).$

        Let $s_1$ denote the seed of the whole tile system.
        Let $\delta_{s_1} = [s_1](0) = \frac{1}{2 n n'}$.
        For our choice of $k$ and $m$, this will mean $\delta_{s_1} = O(\frac{1}{k^8})$.
        Let $\alpha_\text{h-A}$ denote the type $A$ ``horizontal bar with an arm'' assembly.
        Let $\alpha_\text{h-B}$ denote the type $B$ ``horizontal bar with an arm'' assembly.
        Let $\alpha_{\text{v}_0}$ denote the leftmost vertical bar assembly.
        Let $\alpha_{\text{v}}$ denote the other vertical bar assembly.
        By Lemma~\ref{lem-horizontal-bar-mass-action}, for all $t \geq O(w \log w)$, the total concentration of type $A$ horizontal bars that have been produced is as least $\Omega(\frac{1}{w \log w})$ (although some may be incorporated into assemblies containing the seed, and similarly for type $B$ vertical bars).
        Since at most $\delta_{s_1} m$ concentration of each horizontal bar can be attached to an assembly containing $s_1$, this implies that for all $t \geq O(w \log w)$, $[\alpha_\text{h-A}](t) \geq \Omega(\frac{1}{w \log w}) - \delta_{s_1} m$, which is $\Omega(\frac{1}{w \log w})$ by our choice of $\delta_{s_1}$.
        By the same reasoning, for all $t \geq O(w \log w)$, $[\alpha_\text{h-B}](t) \geq \Omega(\frac{1}{w \log w}).$
        Similarly, by Lemma~\ref{lem-vertical-bar-mass-action} and similar reasoning regarding the scarcity of $s_1$-containing assemblies that could attach to $\alpha_\text{v}$, for all $t \geq \Omega(m k^2 \log (m k))$, $[\alpha_\text{v}](t) \geq \Omega(\frac{1}{m k^2 \log(m k)})$.

        Having derived these concentration bounds in the mass-action model, we now analyze the stochastic assembly time of $s_1$ into a terminal assembly.

        By Lemma~\ref{lem-polyonimo-fast-as-seeded-discrete}, the leftmost vertical bar's bottom two rows assemble in expected time $O(\log k + \log m)$, using the fact that each tile type in the bottom two rows can be assigned concentration $\Omega(\frac{1}{\log k + \log m})$, to ensure that $\delta_1 = \Omega(\frac{1}{\log k + \log m})$ in Lemma~\ref{lem-polyonimo-fast-as-seeded-discrete}.
        Again applying Lemma~\ref{lem-polyonimo-fast-as-seeded-discrete}, the expected time to assemble the complete leftmost vertical bar from its seed row is at most $O(m k^2 \log (m k))$, using the fact that there are $O(1)$ tile types needed to complete the remaining rows and can be therefore be assigned concentration $\Omega(1)$ and in particular ensuring excess $\delta_1$ at least $\Omega(1)$ in Lemma~\ref{lem-polyonimo-fast-as-seeded-discrete}.

        Once the leftmost vertical bar completes, and if the current time $t = \Omega(w \log w)$, then by our above-derived bound on the concentration of $\alpha_\text{h-A}$, the expected time before a type $A$ horizontal bar attaches to some binding site in the group $A$ glues on the right side is at most $O(w \log w) / k$.
        This holds similarly for $\alpha_{h-B}$, so the expected time before both attachments happen is also at most $O(w \log w) / k$.
        Once two horizontal bars have attached, and if the current time $t = \Omega(m k^2 (\log k + \log m))$, by our above-derived bound on the concentration of $\alpha_\text{v}$, the expected time before the next vertical bar attaches is at most $O(m k^2 (\log k + \log m))$.
        Once this occurs, by Lemma~\ref{lem-polyonimo-fast-as-seeded-discrete}, the expected time before the constant-size tile set shown in Figure~\ref{fig:vertical-bar-detailed} to complete the placement of glues on the east side of the just-attached vertical bar is at most $O(m k^2 (\log k + \log m))$.
        At this point the first stage is complete, requiring time at most $O(w \log w) + O(m k^2 (\log k + \log m))$ (to wait for sufficient concentration of horizontal and vertical bars) $+ O(m k^2 (\log k + \log m))$ (to grow the leftmost vertical bar) $+ O(w \log w) / k$ (to attach two horizontal bars) $+ O(m k^2 (\log k + \log m))$ (to attach the second vertical bar) by linearity of expectation.
        Simplified, this is $O(w \log w) + O(m k^2 (\log k + \log m)) + O(w \log w) / k$.

        Repeating this analysis for each of the remaining stages, the total time for the complete ``skeleton'' of Figure~\ref{fig:vertical-bars} to complete is at most $m$ times the previous bound, $O(w \log w) + O(m^2 k^2 (\log k + \log m)) + O(m w \log w) / k$, by linearity of expectation.

        Finally, the filler tiles must tile the $3m$ empty regions left in the skeleton.
        Although this assembly process likely begins before the full skeleton is complete, we analyze it as if no filler tiles attach until the full skeleton is complete, as an upper bound for the actual assembly time.
        Each of these regions is a rectangle (minus the vertical arms of the horizontal bars) of diameter at most $O(w + m k^2)$, which is tiled by a constant-size rectilinear tile set (tile set in which each tile cooperates using north and east glues to grow towards the southwest), where each tile type has concentration $\Omega(1)$ at all times.
        By Theorem 4.4 of~\cite{AdChGoHu01}, the expected time for one of these regions to be completely tiled is at most $O(w + m k^2)$, with an exponentially decaying tail on the time distribution.
        Since there are $3m$ such regions assembling in parallel, and each has an exponentially decaying tail, the time for all regions to completely fill is at most $O((w + m k^2) \log m)$.

        Therefore the entire expected assembly time is at most
        $$
            O(w \log w) + O(m^2 k^2 (\log k + \log m)) + O(m w \log w) / k + O((w + m k^2) \log m).
        $$
        We choose $k = m = n^{1/5}$ and $w = k^4$.
        Simplifying the above expression, this gives an expected assembly time of
        $
            O(n^{4/5} \log n).
        $
    \end{proof}

    By using the base-conversion technique of~\cite{AdChGoHu01} for all counters, the number of tile types required could be reduced from $O(\log n)$ to the information-theoretically optimal $O(\frac{\log n}{\log \log n})$.
    However, this is a now-standard technique for obtaining optimal tile complexity of structures that ``encode'' a natural number $n$.
    Since the primary contribution of our construction is the bound on assembly time, we have presented a simpler (but larger than optimal) tile system for illustrative purposes.
    \opt{normal}{Unlike Theorem~\ref{thm-hierarchical-square}, where the problem of obtaining small assembly depth for a shape $S$ is trivialized by allowing tile complexity $|S|$, obtaining fast assembly \emph{time} is nontrivial whether tile complexity is large or not.
    In fact, small tile complexity \emph{helps} to obtain fast assembly time, since with fewer tile types, one can distribute to each tile type a greater share of the $O(1)$ concentration allowed by the finite density constraint, which in turn reduces the expected time for each tile to attach.}

    \opt{normal}{Also, it is possible to shave log factors from the assembly time analysis by using the ``optimal counter'' tiles of~\cite{AdChGoHu01}, which grow an $n \times \log n$ counter in the seeded model in time $O(n)$, compared to the suboptimal $O(n \log n)$ time required by the zig-zag counters we use.
    However, our Lemma~\ref{lem-polyonimo-fast-as-seeded-mass-action} does not take the ``binding parallelism'' of the seeded model into account, but instead implicitly assumes in the worst case that the frontier is always size 1.
    A more careful analysis could remove some of these log factors, but we have allowed the log factors in order to simplify the analysis, since our main goal is to obtain a sublinear time bound.}

\section{Nearly maximally parallel hierarchical assembly of a square with optimal tile complexity}
\label{sec-hierarchical-square}

In this section we show that under the hierarchical model of tile assembly, it is possible to self-assemble an $n \times n$ square, for arbitrary $n \in \Z^+$, using the asymptotically optimal $O(\frac{\log n}{\log \log n})$ number of tile types.
Furthermore, the square assembles using nearly the maximum possible parallelism in the hierarchical model, building the final square out of four assembled sub-squares of size $n/2 \times n/2$, which are themselves each assembled from four sub-squares of size $n/4 \times n/4$, etc.
The sub-optimality stems from the need for us to construct the smallest sub-squares of size $O(\log n) \times O(\log n) = O(\log^2 n)$ without parallelism.

We formalize the notion of ``parallelism through hierarchical assembly'' as follows.

\newcommand{\FigSquareHighLevelCaption}{
  Overview of the hierarchical TAS that assembles an $n \times n$ square with $O(\log^2 n)$ assembly depth and $O(\frac{\log n}{\log \log n})$ tile complexity.
  Each square in the figure represents a block of width $O(\log n)$ with each side of each block encoding its $(x,y)$-address in the square.
  \opt{normal}{(The encoding scheme is shown in more detail in Figure~\ref{fig:two-handed-square-block}.)}
  Each of the thin solid lines is a strength-1 glue intended to connect the block to other blocks.
  Dotted lines are drawn between those glues that are intended to bind to each other.
  The circled subassemblies show the order of growth of one particular block (at coordinates $(3,2)$) into the final square.
}

\opt{submission}{
    \newcommand{\twoHandedSquareFigWidth}{3.0in}
    \begin{wrapfigure}{r}{\twoHandedSquareFigWidth}
    \vspace{-25pt}
      \begin{center}
      \includegraphics[width=\twoHandedSquareFigWidth]{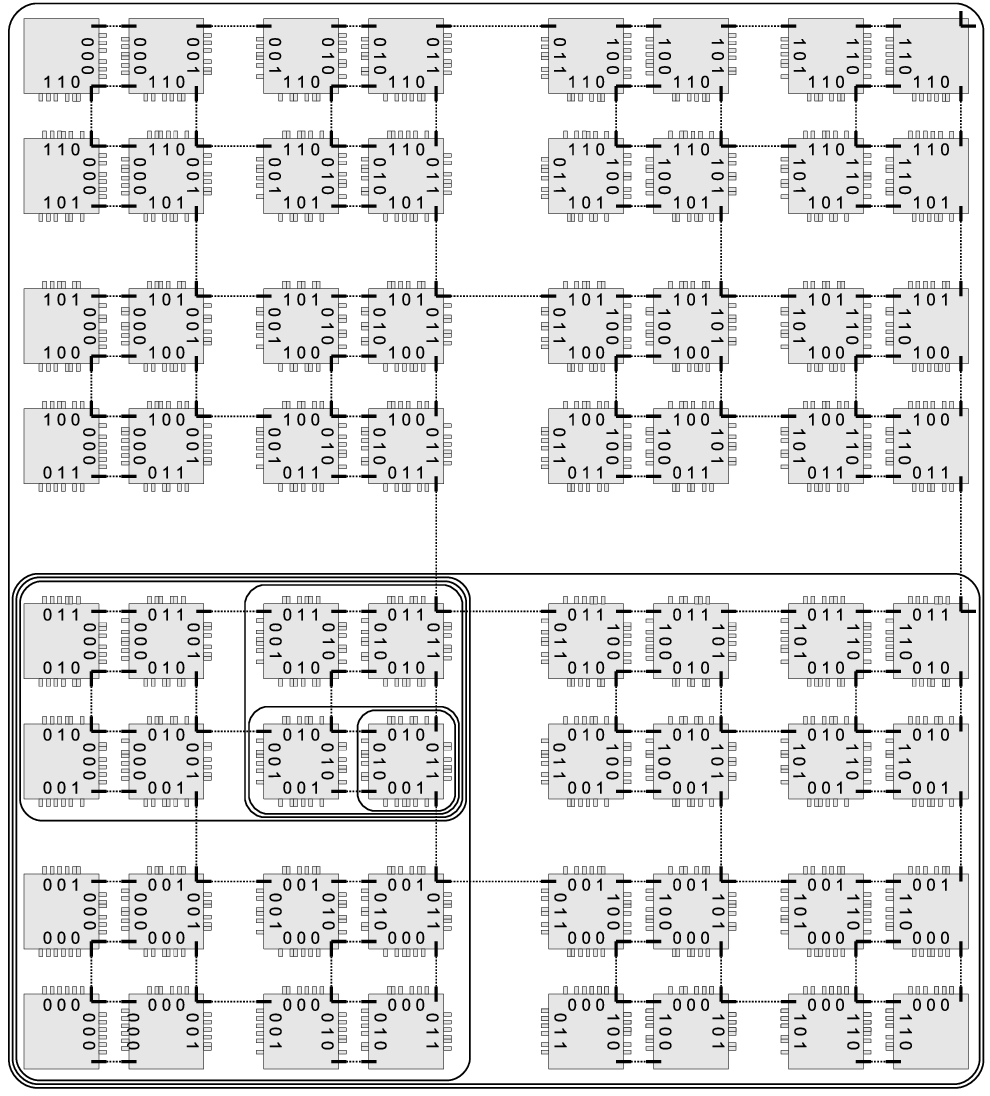}
      \caption{\label{fig:two-handed-square-high-level} \figuresize
      \FigSquareHighLevelCaption}
      \end{center}
    \vspace{-50pt}
    \end{wrapfigure}
}

Let $\calT=(T,\tau)$ be a directed hierarchical TAS.
Let $\alpha \in \prodasm{\calT}$ be a producible assembly.
An \emph{assembly tree} $\Upsilon$ of $\alpha$ is a full binary tree whose nodes are labeled by producible assemblies, with $\alpha$ labeling the root, individual tile types labeling the $| \alpha|$ leaves, and node $v$ having children $u_1$ and $u_2$ with the requirement that $u_1$ and $u_2$ can attach to assemble $v$.
That is, $\Upsilon$ represents one possible pathway through which $\alpha$ could be produced from individual tile types in $\calT$.
Let $\Upsilon(\calT)$ denote the set of all assembly trees of $\calT$.
Say that an assembly tree is \emph{terminal} if its root is a terminal assembly.
%An assembly can be terminal even if T is not directed, so we can define a terminal assembly tree for any TAS, not just terminal systems.
Let $\Upsilon_\Box(\calT)$ denote the set of all terminal assembly trees of $\calT$.
Note that even a directed hierarchical TAS can have multiple terminal assembly trees that all have the same root terminal assembly.
The \emph{assembly depth} of $\calT$ is
$
    \Kddac(\calT) = \max_{\Upsilon \in \Upsilon_\Box(\calT)} \mathrm{depth}(\Upsilon),
$
where $\mathrm{depth}(\Upsilon)$ denotes the standard depth of the tree $\Upsilon$, the length of the longest path from any leaf to the root.

%Let $S \subset \Z^2$ be a finite shape.
%    The \emph{directed assembly depth} of $S$ is
%    $$\Kddac(S) = \min \setr{\Kddac(\calT)}{\begin{array}{l}
%    \calT=(T,\tau) \text{ is a directed hierarchical}
%    \\
%    \text{TAS and } \calT \text{ strictly self-assembles } S
%    \end{array}}.$$

It is clear by the definition that for any shape $S$ with $N$ points strictly self-assembled by a tile system $\calT$, $\Kddac(\calT) \geq \log N$.
Our construction achieves $\Kddac(\calT) \leq O(\log^2 n)$ in the case of assembling an $n \times n$ square $S_n$, while simultaneously obtaining optimal tile complexity $O(\frac{\log n}{\log \log n})$.\footnote{In~\cite{AGKS05}, the authors prove that whenever $n \in \N$ is algorithmically random, at least $\Omega(\log n / \log \log n)$ tile types are required to strictly self-assemble an $n \times n$ square in the hierarchical model.
Actually, that paper states only that this holds for the \emph{$q$-tile model}, in which some constant $q$ exists that limits the size of attachable assemblies other than those containing a special seed tile, and the authors claim that the proof requires the bound $q$, but in fact their proof does not use the bound $q$ and works for the general hierarchical model~\cite{SchwellerPersonalComm2016}.
Thus the tile complexity obtained in Theorem~\ref{thm-hierarchical-square} is asymptotically optimal.}
In other words, not only is it the case that every producible assembly can assemble into the terminal assembly (by the definition of directed), but in fact every producible assembly is at most $O(\log^2 n)$ attachment events from becoming the terminal assembly.

Demaine, Demaine, Fekete, Ishaque, Rafalin, Schweller, and Souvaine~\cite{DDFIRSS07} studied a complexity measure similar to assembly depth called \emph{stage complexity} for another variant of the aTAM known as \emph{staged assembly}.
In the staged assembly model, a hierarchical model of attachment is used, with the added ability to prepare different assemblies in separate test tubes.
The separate test tubes are allowed to reach a terminal state, after which any produced nonterminal assemblies (including individual tile) are assumed to be washed away, before combining the tubes.
The stage complexity of a tile system is similarly defined to be the depth of the ``mixing tree'' describing the order of test tube mixing steps.
Our model is more restrictive by permitting only one test tube (``bin complexity 1'' in the language of~\cite{DDFIRSS07}).
In a sense, Theorem~\ref{thm-hierarchical-square} ``automates'' the highly selective mixing that is assumed to be externally controlled in the staged assembly model, while paying only a quadratic price in the number of parallel assembly stages required.
It naturally pays a price in tile complexity as well, since unlike the staged model in which both the tile types and the mixing order can encode information, the construction of Theorem~\ref{thm-hierarchical-square} must encode the size $n$ of the square entirely in the tile types.
The primary challenge in achieving a highly parallel square construction in the hierarchical model --- a challenge not present in the staged assembly model --- is the prevention of overlapping subassemblies.\footnote{Adleman~\cite{Adl00} showed a $\Omega(n)$ lower bound (in a much different and more permissive model of assembly time than in the present paper; later improved to $\Omega(n \log n)$ by Adleman, Cheng, Goel, Huang, and Wasserman~\cite{AdlCheGoeHuaWas01}) for the problem of assembling a $1 \times n$ line from $n$ distinct tile types $t_1,\ldots,t_n$.
The intuitive reason that the time is not $O(\log n)$ is that if assemblies $\alpha_1 = t_i \ldots t_j$ and $\alpha_2 = t_{i'} \ldots t_{j'}$ form, with $i < i' < j < j'$, then $\alpha_1$ can never attach to $\alpha_2$ because they overlap.
Staged assembly can be used to control the overlap directly by permitting only the growth of lines covering dyadic intervals.}

\opt{normal}{
    \begin{figure}[htb]
    \begin{center}
      \includegraphics[width=5.7in]{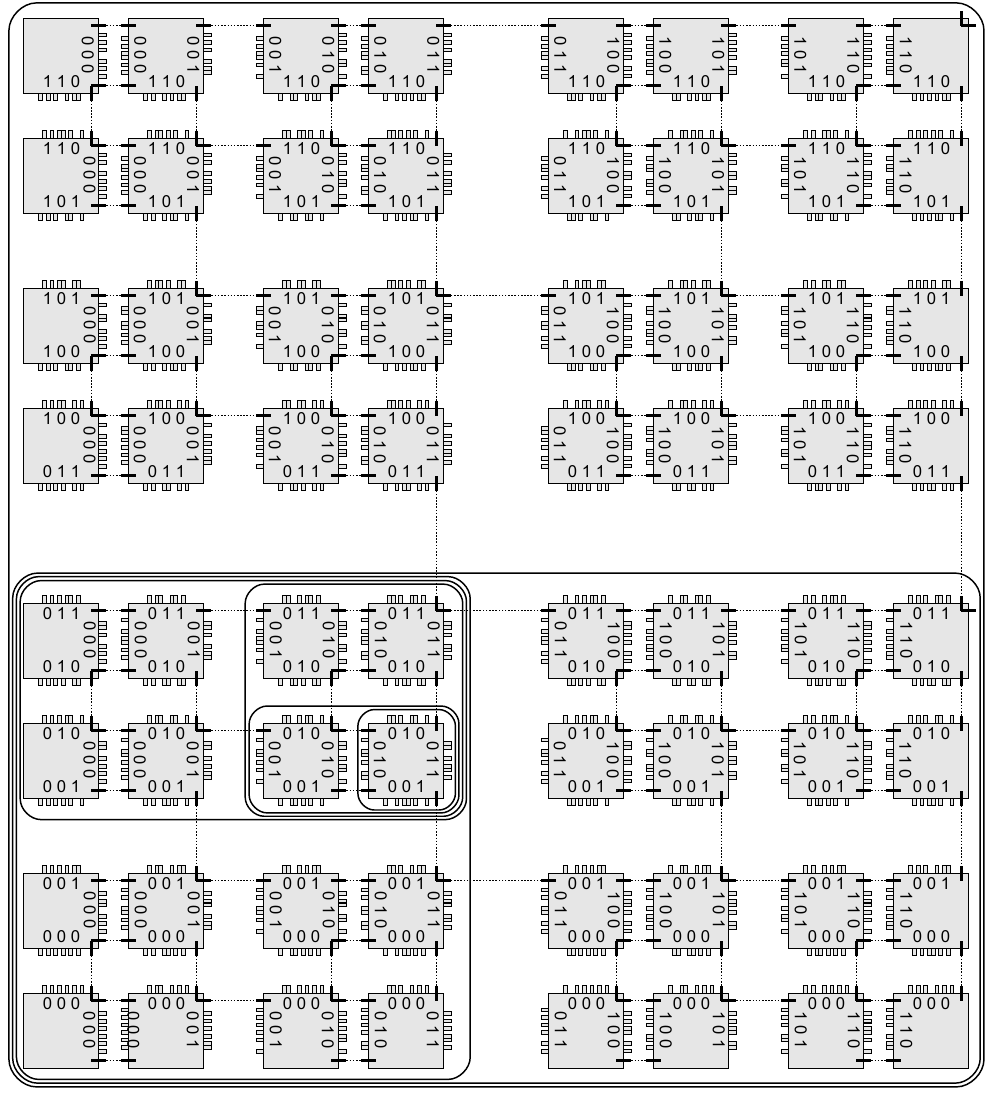}
      \caption{\label{fig:two-handed-square-high-level} \figuresize
      \FigSquareHighLevelCaption
      }
    \end{center}
    \end{figure}
}

\newcommand{\FigNonPowTwo}{
    \begin{figure}[htb]
    \begin{center}
      \includegraphics[width=4in]{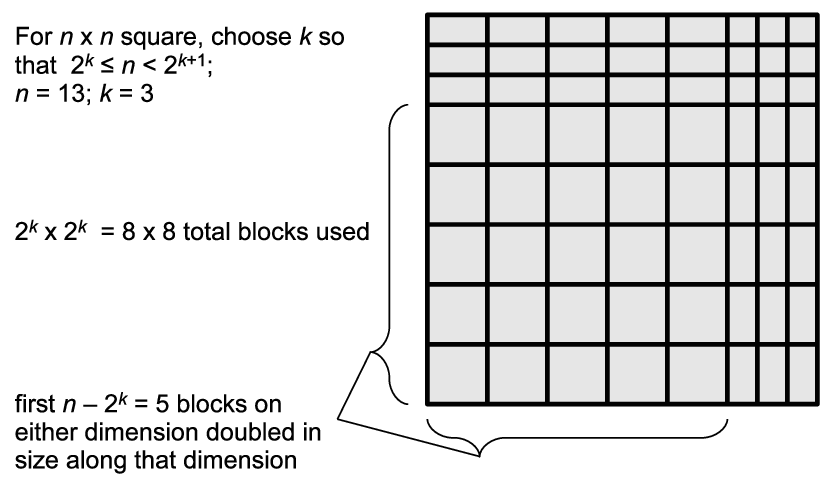}
      \caption{\label{fig:two-handed-square-non-power-of-2} \figuresize
      Design of block sizes to handle values of $n$ that are not a power of two. There are always exactly $2^k \times 2^k$ blocks, where $2^k \leq n < 2^{k+1}$.
      Each block doubles its length along the $x$-axis (resp. $y$-axis) if $n - 2^k$ exceeds its $x$-coordinate (resp. $y$-coordinate).
      %The solid lines represent block boundaries, and the dotted lines are to help visualize the size of the square.
      }
    \end{center}
    \end{figure}
}

\newcommand{\FigBlock}{
    \begin{figure}[htb]
    \begin{center}
      \includegraphics[width=6.5in]{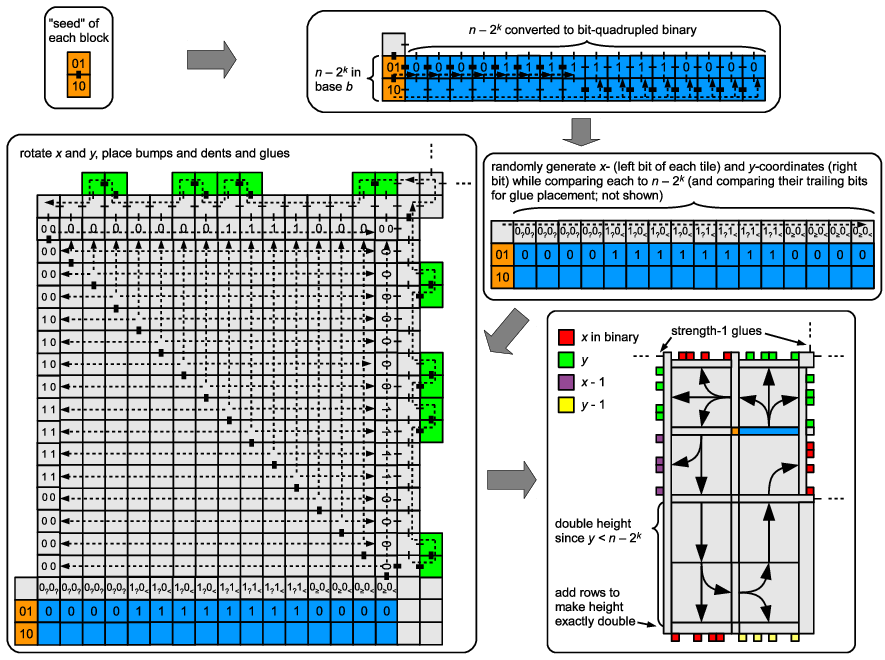}
      \caption{\label{fig:two-handed-square-block} \figuresize
      Assembly of $O(\log n) \times O(\log n)$ size block from $O(\frac{\log n}{\log \log n})$ tile types.
      Every block starts from the same tile types that encode $n-2^k$, using exactly $k$ bits; in this example, $n=22$ so $k=4$ and $n-2^k$ is 0110 in binary using 4 bits.
      Thick solid lines represent strength-2 glues.
      For clarity, strength-1 glues are shown selectively to help verify that a certain order of growth is possible to enforce.
      The tiles encode $n-2^k$ in base $b$ chosen to be a power of two such that $\frac{\log n}{\log \log n} \leq b < 2  \frac{\log n}{\log \log n}$ (labeled ``seed'' for intuition, although those tiles start unattached).
      $n-2^k$ is first converted to binary and each of its bits quadrupled to make room for the bumps and dents.
      A constant-size set of tile types does the rest.
      Then $x$ and $y$ coordinates are randomly guessed and simultaneously compared to $n-2^k$; if either is smaller, that dimension is doubled in length (in this example the height is doubled but not the width).
      At the same time, the values of $x$ and $y$ are compared as described in the proof of Theorem~\ref{thm-hierarchical-square} to determine where to place strength-1 glues.}
    \end{center}
    \end{figure}
}

\opt{normal}{
    \FigNonPowTwo

    \FigBlock
}

\begin{theorem} \label{thm-hierarchical-square}
    For all $n\in\N$, there is a hierarchical TAS $\calT=(T,2)$ such that $\calT$ strictly self-assembles an $n \times n$ square, $|T| = O(\frac{\log n}{\log \log n})$, and $\Kddac(\calT) = O(\log^2 n)$.
   %There are constants $c_1,c_2\in\N$ such that, for all $n\in\N$, there is a directed hierarchical TAS $\calT=(T,2)$ such that $\calT$ strictly self-assembles the $n \times n$ square, $|T| \leq c_1 \log n / \log \log n$, and $\Kddac(\calT) \leq c_2 \log^2 n$.
\end{theorem}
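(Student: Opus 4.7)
The plan is to implement the hierarchy sketched in Figure~\ref{fig:two-handed-square-high-level}. Let $k = \lfloor \log n \rfloor$ and partition the target $n \times n$ square into a $2^k \times 2^k$ grid of blocks, each of base side $\Theta(\log n)$; certain columns (resp.\ rows) are doubled in width (resp.\ height) to make the totals exactly $n$, as in Figure~\ref{fig:two-handed-square-non-power-of-2}. Specifically, block column $x$ is doubled iff $x < n - 2^k$, and analogously for rows. The hierarchical combination then proceeds dyadically: two $2^\ell \times 2^\ell$ sub-squares of blocks first combine horizontally into a $2^{\ell+1}\times 2^\ell$ rectangle, and two such rectangles combine vertically into a $2^{\ell+1}\times 2^{\ell+1}$ sub-square, for $\ell = 0, 1, \ldots, k-1$. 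This gives an $O(\log n)$-depth combination tree over the $2^{2k}$ blocks.

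Next, I design the tile set that assembles a single block, as in Figure~\ref{fig:two-handed-square-block}. Using base $b = \Theta(\log n / \log \log n)$, I hard-code $O(\log n/\log \log n)$ tile types encoding the value $n-2^k$ as a base-$b$ string; these form the ``seed column'' of every block. A constant-size set of general-purpose tile types then (i)~expands the base-$b$ digits into binary, quadrupling each bit to leave room for geometric features, (ii)~nondeterministically guesses the block's $(x,y)$-address in the $2^k \times 2^k$ grid and compares $x$ and $y$ against $n-2^k$ to decide whether to double the block's width and height, and (iii)~uses the guessed $(x,y)$ to place strength-1 glues at specific positions on the block's four sides. The computation inside a block takes $O(\log^2 n)$ tile attachments in sequence, which is what drives the $O(\log^2 n)$ assembly depth; combined with the $O(\log n)$-depth block hierarchy the bound $\Kddac(\calT) = O(\log^2 n)$ follows. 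Tile complexity is dominated by the $O(\log n/\log \log n)$ seed tile types.

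The core design step is the glue placement rule that encodes the dyadic hierarchy. Since all strength-1 glues are given a single common label, only geometric alignment (together with the steric-hindrance no-overlap condition) can cause attachment. For each level $\ell$ I reserve a disjoint set of perimeter slots (e.g.\ slots whose position along a side depends on $\ell$) and arrange, by the address computation inside each block, that a $2^\ell \times 2^\ell$ sub-square at address $(x,y)$ exposes level-$\ell$ glues on its east side exactly when $x \equiv 0 \pmod{2^{\ell+1}}$, at positions that match those on the west side of the sub-square at $(x+2^\ell, y)$, with exactly two matching glue positions so that $\tau=2$ is barely met. The analogous rule handles vertical combination. All sub-squares of level $<\ell$ have their level-$\ell$ glue slots either absent or misaligned, and sub-squares of other levels have physically incompatible shapes, so no cross-level or off-diagonal pairing can occur.

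The main obstacle is a careful correctness argument that \emph{no} spurious pair of producible sub-assemblies can $2$-stably attach. I plan to prove this by induction on $\ell$: assuming that for every valid address the unique $2^\ell \times 2^\ell$ sub-square is producible and all other level-$\ell$-or-smaller producible sub-assemblies are proper subassemblies of one of these, show that the only $2$-stable binding of two such sub-assemblies that can occur is the prescribed dyadic combination. This requires verifying that (a)~mismatched-address pairs at level $\ell$ have at most one aligned glue on any candidate shared edge, (b)~pairs from different levels either do not fit geometrically or fail to align two glues, and (c)~intermediate assemblies (partially grown blocks) cannot present two aligned level-$\ell$ glues to anything. Points (a) and (b) follow from the slot-disjointness across levels and the address-comparison logic inside the block; point (c) follows because the block's computation places perimeter glues only after the block is complete (enforced by the direction of strength-2 growth shown in Figure~\ref{fig:two-handed-square-block}). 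Once the induction closes, strict self-assembly of the $n\times n$ square, directedness, and the $O(\log^2 n)$ depth bound all fall out, completing the proof.
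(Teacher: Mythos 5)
Your overall architecture is the same as the paper's: $O(\log n)\times O(\log n)$ blocks that guess their $(x,y)$ address nondeterministically, encode it geometrically in bumps and dents, use a single common strength-1 glue label, seed each block with $O(\frac{\log n}{\log\log n})$ base-$b$ tiles encoding $n-2^k$, double some rows/columns for non-powers of two, and pay $O(\log^2 n)$ depth inside blocks plus $O(\log n)$ depth for the dyadic merging. The gap is in the one place where the real difficulty of the construction lives: the rule deciding \emph{which blocks} carry the two cooperating strength-1 glues for each dyadic seam. You state that a $2^\ell\times 2^\ell$ sub-square ``exposes level-$\ell$ glues exactly when'' it has the right address, but glues are physically placed by individual blocks as soon as those blocks finish; they cannot be made to appear only once the whole sub-square exists. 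Consequently your claim that ``all sub-squares of level $<\ell$ have their level-$\ell$ glue slots either absent or misaligned'' is false unless the two glues of each level-$\ell$ seam are deliberately split between the two level-$(\ell-1)$ halves of each side, so that no producible proper sub-assembly ever presents both. Your verification points (a)--(c) do not cover this case: (c) only concerns partially grown blocks, and a completed level-$(\ell-1)$ half is neither a mismatched-address pair in your sense nor geometrically incompatible -- its boundary blocks carry genuine, correctly positioned level-$\ell$ glues. If, say, both glues of a seam ended up in the same half (or on one block), smaller assemblies could merge across the seam out of order; in the extreme case single blocks could chain with strength $2$, and since $\Kddac(\calT)$ is the \emph{maximum} depth over all terminal assembly trees, such snake-like orders would destroy the $O(\log^2 n)$ bound even if strict self-assembly survived.

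This missing invariant is exactly what the paper's explicit placement rules supply: with $t(\cdot)$ the number of trailing 1's in binary, glues are placed on a block's top/bottom/left/right edges according to comparisons such as $t(x)\ge t(y)$, $t(x)\ge t(y-1)$, and $t(x\pm 1)\le t(y)+1$ (plus the parity rule putting the two horizontal-pairing glues on the right for $x$ even and the left for $x$ odd). These rules guarantee that across every dyadic seam exactly two glue pairs align, one contributed by each of the two lower-level halves, which is what forces the ``recursive doubling'' order (level-$i$ twins cannot bind until both have finished all level-$(i-1)$ binding) and hence both strict self-assembly and the worst-case depth bound. Your proof plan needs either these rules or an equivalent explicitly stated distribution property, together with the induction showing no producible assembly ever exposes two aligned glues across a seam it is not yet entitled to cross; as written, the induction cannot close. (A minor bookkeeping slip: with blocks of side $\Theta(\log n)$ you want roughly $n/\log n$, not $2^{\lfloor\log n\rfloor}$, blocks per side, and the paper additionally patches the final additive $O(\log n)$ discrepancy with filler tiles.)
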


\opt{submission}{
    %A formal proof of Theorem~\ref{thm-hierarchical-square} is contained in Section~\ref{sec-app-parallel-square}.
    Figure~\ref{fig:two-handed-square-high-level} shows the high-level idea.
    The figure shows how to create a square of $n \times n$ blocks, each of width $O(\log n)$, for $n$ a power of two.
    The full proof explains how to handle $n$ that are not a power of two and how to handle $n$ that are not perfectly divisible by the block length.
    Intuitively, tiles aggregate into blocks while randomly guessing their ``$(x,y)$ address'' in the final square.
    The blocks encode this address into the bumps and dents on the side of the block, which ensures that fully formed blocks can only bind to their neighbors (the strength-1 glues shown in Figure~\ref{fig:two-handed-square-high-level} are all identical so the bumps and dents are required to encode this specificity.)
    The growth order of the blocks enforces that the strength-1 glues needed for binding are not present until enough bumps and dents are present to enforce this binding specificity.
    The $(x,y)$ address is also used to determine where to put strength-1 glues, since not all sides of all blocks have them.
    This enforces the ``recursive doubling'' of Figure~\ref{fig:two-handed-square-high-level}, namely that all blocks bind without overlap, so that blocks at level $i$ of the recursive hierarchy cannot bind to their level $i$ ``twin'' set of blocks until both have completely finished all level $i-1$ binding.
}

\newcommand{\ProofParallelSquare}{
   A high-level outline of the construction is shown in Figure~\ref{fig:two-handed-square-high-level}.
   We assemble a number of blocks of width $O(\log n) \times O(\log n)$, each of which represents in its tile types an address indicating its position in the square, and the block binds only to (some of) its neighboring blocks.
   The blocks assemble using standard single-tile accretion (actually we cannot directly enforce this in the model, but each block will nonetheless assemble the same structure in either model).
   Since each block is $O(\log^2 n)$ total tiles, this is the source of the suboptimal $O(\log^2 n)$ assembly depth.
   Once the blocks are assembled, however, they assemble into the full square using $O(\log n)$ assembly depth.
   All blocks $(x,y)$ with $x$ even bind to $(x+1,y)$ to create the two-block assembly $(x,y):(x+1,y)$, then all blocks $(x,y):(x+1,y)$ with $y$ even bind to $(x,y+1):(x+1,y+1)$ to create the four-block assembly $(x,y):(x+1,y):(x,y+1):(x+1,y+1)$, etc.

   The construction will actually control the width of the square only to within an additive logarithmic factor by bring together blocks of width and length $\Theta(\log n)$; standard techniques can be used to make the square precisely $n \times n$.
   For instance, we could add $O(\log n)$ total filler tiles to the leftmost and bottommost blocks, while adding only $O(\log^2 n)$ to the assembly depth and $O(\log \log n)$ to the tile complexity since such filler tiles could be assembled from a counter that counts to $\log n$ using $O(\log \log n)$ tile types.
   For simplicity we describe the desired width $n$ as the number of blocks instead of the desired dimensions of the square and omit the details of this last step of filling in the logarithmic gap.

   Figure~\ref{fig:two-handed-square-high-level} outlines the construction of a square when the number of blocks $n$ is a power of two.
   Figure~\ref{fig:two-handed-square-non-power-of-2} shows how to modify the blocks so that some of them are double in width, double in height, or both, to achieve a total square width that is an arbitrary positive integer.
   Each block contains the same $O(\frac{\log n}{\log \log n})$ tile types that encode $n$, and as the block assembles it randomly chooses $x$ and $y$-coordinates, which represent an index in the square.
   This random choice is implemented through competition between tile types that share the same ``input'' glues but represent different bits of $x$ or $y$. %\footnote{A curious feature of hierarchical TAS's is that the TAS can remain directed despite this nondeterministic competition, whereas it cannot happen in any directed seeded TAS.}
   These are used to determine the block's own size and to determine what series of bumps and dents to place on its perimeter to enforce that the only blocks that can bind are adjacent in Figure~\ref{fig:two-handed-square-high-level}.
   The coordinates are also used to determine where to place strength-1 glues. The same strength-1 glue is used uniformly throughout Figure~\ref{fig:two-handed-square-high-level}.
   The bumps and dents ensure that no two blocks can attach unless they are adjacent in the figure.

   The growth of an individual block is shown in Figure~\ref{fig:two-handed-square-block}.
   We describe the assembly as if it grows only by single-tile accretion.
   There are some strength-2 glues so this is not completely accurate, but the growth of the block is ``polyomino-safe'', to borrow a term of Winfree~\cite{Winfree06}.
   By design, no assembly larger than four can form except by attachment to the growing block, and even if these assemblies attach at once to the block rather than by single-tile accretion, the correct operation of the block growth is unaffected.
   This is due to the fact that all strength-2 glues are ``one-to-one''; no strength-2 glue is shared as an ``output'' (in the direction of growth in the seeded model) by two different tile types.
   This implies that no partial assembly occurring away from the main ``seeded'' assembly can grow ``backward'' and place an incorrect tile.

   To form a square of size $n \times n$ ``units'' (where a unit is $O(\log n)$, the width and height of a small block), we choose the largest power of two $2^k \leq n$ and assemble exactly $2^k \times 2^k$ different types of blocks, doubling the width (resp. height) of the first $n-2^k$ of them in the $x$-direction (resp. $y$-direction), as in Figure~\ref{fig:two-handed-square-non-power-of-2}.
   The orange (medium darkness in gray-scale) tile types and the base-conversion tile types that attach to them in Figure~\ref{fig:two-handed-square-block} are the only non-constant set of tile types.
   Borrowing a technique from~\cite{AdChGoHu01}, we will represent $n$ in base $b$, where $b \approx \log n / \log \log n$, using $\approx \log n / \log \log n$ unique tile types, and we use $O(\log n / \log \log n)$ tile types to convert $n-2^k$ to binary and $O(1)$ tile types to accomplish all the other tasks needed to assemble the block.

   Choose $b=2^m$ to be a power of two such that $\log n / \log \log n \leq b < 2 \log n / \log \log n.$
   Each digit in base $b$ can represent $m$ bits of $n-2^k$.
   $n-2^k$ is encoded in exactly $m \cdot \ceil{\frac{k}{m}} = O(\frac{\log n}{\log \log n})$ base-$b$ digits.
   The blue (dark in gray-scale) tile types in Figure~\ref{fig:two-handed-square-block} convert $n-2^k$ from base $b$ to binary and at the same time represent $n-2^k$ with its ``bit-quadrupled'' version (e.g., $0110 \mapsto 0000111111110000$), since each bit along the edge will eventually require width four to make room to place the bumps properly.\footnote{The bumps cannot simply be placed with strength-2 glues above a width-1 or even width-2 representation of a bit in the obvious way, otherwise there would be nothing to force that the bumps are present before the inter-block strength-1 glues.  If the bumps are allowed to grow in parallel with the rest of the assembly then they may not complete fast enough.  Width four is required to create a ``linear assembly path'' for the bumps and dents tiles to follow, ensuring that growth of the block continues only once the path is complete.}
   The set of base-conversion tile types from~\cite{AdChGoHu01} of cardinality $O(\log n / \log \log n)$ can be easily modified to achieve this ``bit-quadrupling'' without increasing the asymptotic tile complexity.
   The potential doubling of height and/or width can be achieved with a constant set of tile types since the unit width is implicitly encoded in the width of the block and a constant set of ``rotator'' tiles similar to those in Figure~\ref{fig:two-handed-square-block} can be used to add extra unit-width blocks when needed.
   %At the same time $n-2^k$ is converted to binary, every fourth position along the row, there is a nondeterministic competition among four tile types to determine a bit of $x$ and a bit of $y$.
   The counterclockwise order of growth ensures that if not all of the bumps and dents are formed, then at least one of the four strength-1 glues necessary for an attachment event to occur is not yet present in one of the blocks.
   To ensure that the TAS is directed, we do not include base-conversion tiles for any digit $d \in \{0,1,\ldots,b-1\}$ that does not appear in the base-$b$ expansion of $n-2^k$, otherwise those tiles will form unused terminal assemblies.
   Each glue in a tile type representing a bit is ``marked'' indicating whether it is the most significant bit, least significant bit, or interior bit, as well as being marked with its relative position among the four copies of the bit.

   Once $n-2^k$ is converted to binary, we use nondeterministic attachment of tiles to the north of this value to randomly guess $2k$ bits that represent the $x$- and $y$-coordinates of the block, meaning the binary numbers represented on the top and right, respectively, of each block in Figure~\ref{fig:two-handed-square-high-level}.
   To be precise, we must actually choose each of $x$ and $y$ to be a random bit string that is not all 1's, since each represents a connection between two blocks, of which there are $2^k-1$ along each dimension.
   It is straightforward to encode into the tile types the logic that if the first $k$ bits were 1, then the final bit must be 0.
   A number of additional computations are done on these values (some computations are possible to do as the values are guessed).
   The results of these computations will be stored in the rightmost tile type and propagated to all subsequent tile types.
   First, each of $x$ and $y$ is compared to $n-2^k$ to determine how large to make each dimension of the block.
   In the example of Figure~\ref{fig:two-handed-square-block}, $y < n-2^k$ and $x \geq n-2^k$, so the block is one ``unit'' wide and two ``units'' high.
   Also, the binary expansions of $x$ and $y$ are themselves compared to determine where to place strength-1 glues.
   After $x$ and $y$ are determined, to place bumps and dents on the left and bottom of the block, the values $x-1$ and $y-1$ must be computed, which requires assembling from least significant to most significant, so this is delayed until after the first 90-degree rotation shown in Figure~\ref{fig:two-handed-square-block}.
   Once these values are computed, they are also used to determine placement of glues.
   The entire block is created by rotating either counter-clockwise (in the case of $x$ even, as shown in the bottom right of Figure~\ref{fig:two-handed-square-block}) or clockwise (in the case of $x$ odd, not shown but the exact mirror image of the bottom right of Figure~\ref{fig:two-handed-square-block}), placing bumps and dents and single-strength glues.
   The glues are placed in the order shown by the rotations, so that the last glue to be (potentially) placed is the top east-facing glue in the case of $x$ even, or the top west-facing glue in the case of $x$ odd.

   By inspection of Figure~\ref{fig:two-handed-square-high-level}, it is routine to verify that the following rules can be used to determine placement of strength-1 glues.
   If $x$ is even, then place two single-strength glues on the right edge.
   If $x$ is odd, then place two single-strength glues on the left edge.
   For a natural number $n$, define $t(n)$ to be the number of trailing 1's in $n$'s binary expansion.
   If $t(x) \geq t(y)$, then place exactly one strength-1 glue on the top edge.
   If $t(x) \geq t(y-1)$, then place exactly one strength-1 glue on the bottom edge.
   If $t(x-1) \leq t(y)+1$ and $x$ is even, then place exactly one strength-1 glue on the left edge.
   If $t(x) \leq t(y)+1$ and $x$ is odd, then place exactly one strength-1 glue on the right edge.

   Each of these computations (for placement of glues and for determining block dimensions) can be computed by a deterministic finite automaton whose input symbols represent tuples of bits from $n-2^k$, $x$, $x-1$, $y-1$, and $y$.
   These automata can then be combined in a product construction and embedded into the tile types that accrete in the row above $n-2^k$ if only $x$ and $y$ are needed, and embedded into tile types that are placed after the first rotation if $x-1$ or $y-1$ is needed.
   Since the decision for placing glue on the top edge requires only $x$ and $y$, this ensures that the decision for each glue placement can be made before the region containing the potential glue site is assembled.

   As shown in Figure~\ref{fig:two-handed-square-block}, some padding with filler tiles is necessary to make the block a perfect rectangle.
   Also, some padding is needed in the case of a doubling of height or width, to ensure that the resulting assembly has height or width precisely twice that of the non-doubled version.
}

\opt{normal}{
    \begin{proof}
    \ProofParallelSquare
    \end{proof}
}

\section{Open Questions}
\label{sec-conclusion}

There are some interesting questions that remain open.
Say that a tile system $\calT$ \emph{strictly self-assembles} a shape $S \subseteq \Z^2$ if all terminal assemblies $\ha$ of $\calT$, appropriately translated, satisfy $\dom \ha = S$.

\begin{enumerate}
  \item What upper or lower bound can be placed on the quantity $\Kddac(\calT)$ for $\calT$ a hierarchical TAS that strictly self-assembles an $n \times n$ square with optimal tile complexity $O(\frac{\log n}{\log \log n})$ (or even with nearly-optimal tile complexity $O(\log n)$)?
      It is not obvious how to show either $\Kddac(\calT) = o(\log^2 n)$ for some such $\calT$ or $\Kddac(\calT) = \omega(\log n)$ for all such $\calT$.
      Obtaining bounds for more general shapes would also be interesting.

  \item What is the complexity of the following decision problems?
  $$
    \textsc{HierDirectedAssembly} = \setr{\pair{\alpha}{\calT}}{
    \begin{array}{l}
    \calT \text{ is a directed hierarchical TAS with}
    \\
    \text{unique producible terminal assembly } \alpha
    \end{array}},
  $$
  $$
    \textsc{HierDirectedShape} = \setr{\pair{S}{\calT}}{
    \begin{array}{l}
    \calT \text{ is a directed hierarchical TAS that}
    \\
    \text{strictly self-assembles finite shape } S
    \end{array}},
  $$
  $$
    \textsc{HierUniqueShape} = \setr{\pair{S}{\calT}}{
    \begin{array}{l}
    \calT \text{ is a hierarchical TAS that}
    \\
    \text{strictly self-assembles finite shape } S
    \end{array}}.
  $$
  %In the case of the seeded aTAM, the seeded variants of these problems are known to be $\coNP$-complete~\cite{GareyJohnson79,LewisPapadimitriou98,AGKS05} and in $\P$~\cite{ACGHKMR02}, respectively.
  In the case of the seeded aTAM, the seeded variants of these problems are known to be in $\P$~\cite{ACGHKMR02} for the first two,
  and $\coNP$-complete~\cite{AGKS05} for the last.

  For the case of \emph{3D} hierarchical tile systems, $\textsc{HierDirectedAssembly}$ was shown to be $\coNP$-complete by Cannon, Demaine, Demaine, Eisenstat, Patitz, Schweller, Summers, and Winslow~\cite{TwoHandsBetterThanOne}.\footnote{In that paper, the problem is called the \emph{Unique Assembly Verification} problem.}
  Furthermore, their proof shows that the 3D version of  $\textsc{HierDirectedShape}$ is $\coNP$-hard.\footnote{Their technique to reduce the complement of $\mathsf{SAT}$ to the problem is such that, if the formula is unsatisfiable, then $\calT$ has a unique terminal assembly $\alpha$ (hence strictly self-assembles the shape $\dom \alpha$), and if the formula is satisfiable, then $\calT$ produces multiple terminal assemblies, and at least two of them are guaranteed to have different shapes.
  Therefore their proof also shows that the 3D version of $\textsc{HierDirectedShape}$ is $\coNP$-hard.
  The proof that $\textsc{HierDirectedAssembly} \in \coNP$ (which applies to any number of dimensions)
  does not so easily apply to $\textsc{HierDirectedShape}$, so the computational complexity of the 2D shape version of the problem is still open.}

  See~\cite{arora2009computational} for definitions of complexity classes $\Sigma^\mathsf{P}_i$ and $\Pi^\mathsf{P}_i$, where $\NP = \Sigma^\mathsf{P}_1$ and $\coNP = \Pi^\mathsf{P}_1$.
  %See \cite[Lemma 4.3]{TwoHandsBetterThanOne} for a proof that $\textsc{HierDirectedAssembly} \in \coNP$.
  The ``obvious'' containments are
  $\textsc{HierDirectedAssembly} \in \coNP$ (proven in~\cite[Lemma 4.3]{TwoHandsBetterThanOne}),
  $\textsc{HierUniqueShape} \in \Pi^\mathsf{P}_2$\footnote{
      The producibility of $\alpha$ is decidable in polynomial time~\cite{DotyProducibilityNaCo}.
      For $k\in\N$, Let $\prodasm{\calT}_{\leq k} = \{ \alpha \in \prodasm{\calT} \mid |\alpha| \leq k\}$.
      Then $\pair{S}{\calT} \in \textsc{HierUniqueShape}$​ if and only if
      %1) there exists $\alpha\in\prodasm{\calT}_{\leq |S|}$ with $S=\dom \alpha$​, and
      for all $\alpha \in \prodasm{\calT}_{\leq 2 |S|}$:
      \begin{itemize}
        \item $|\alpha| \leq |S|$ (if this is verified for all $\alpha\in\prodasm{\calT}_{\leq 2|S|}$ then no assembly larger than $2|S|$ is producible either),
        \item if $|\alpha| < |S|$​ then there exists $\gamma \in \prodasm{\calT}_{\leq |S|}$​ attachable to $\alpha$ (so $\alpha \not\in \termasm{\calT}$), and
        \item if $|\alpha| = |S|$​ (so $\alpha \in \termasm{\calT}$ since nothing larger than $S$​ is producible) then $S = \dom \alpha$.
      \end{itemize}
      The second condition is a $\forall \exists$​ quantifier that makes the problem in $\Pi^\mathsf{P}_2$; the other conditions have only one $\exists$ or $\forall$.
      Note that the second check guarantees no assembly strictly smaller than $|S|$ is terminal, 
      and the first check guarantees that no assembly strictly larger than $|S|$ is producible.
      Therefore there must be at least one $\alpha \in \termasm{\calT}$ with $|\alpha| = |S|$ 
      (and the third check guarantees that it has shape $S$).
  }
  and
  $\textsc{HierDirectedShape} \in \Pi^\mathsf{P}_2$.\footnote{
      Using similar reasoning as above​, we have $\pair{S}{\calT} \in \textsc{HierDirectedShape}$ if and only if
      %1) there exists $\alpha\in\prodasm{\calT}$ with $S=\dom \alpha$​, and
      for all $\alpha \in \prodasm{\calT}_{\leq 2|S|}$:
      \begin{itemize}
        \item $|\alpha| \leq |S|$,
        \item if $|\alpha| < |S|$​ then there exists $\gamma \in \prodasm{\calT}_{\leq |S|}$​ attachable to $\alpha$ (so $\alpha \not\in \termasm{\calT}$), and
        \item for all $\beta \in \prodasm{\calT}_{\leq |S|}$, if $|\alpha| = |\beta| = |S|$​ (so $\alpha,\beta \in \termasm{\calT}$) then $\alpha = \beta$ and $S = \dom \alpha$.
        \end{itemize}
      %The last condition is a $\forall \exists$​ quantifier that makes the problem in $\Pi^\mathsf{P}_2$; the other conditions have only one $\exists$ or $\forall$ quantifier.
  }
  It is open whether $\textsc{HierDirectedAssembly}$ is $\coNP$-hard (in 2D),
  and whether $\textsc{HierUniqueShape}$ or $\textsc{HierDirectedShape}$ are $\Pi^\mathsf{P}_2$-hard.
  The proof of $\textsc{HierUniqueShape}$ in the ``multiple-tile'' model of~\cite{AGKS05} can be used to show that $\textsc{HierUniqueShape}$ (in the hierarchical aTAM as defined in this paper) is $\coNP$-hard~\cite{AGKS05, SchwellerPersonalComm2016}.

  \item What is the complexity of the following decision problems?
  $$\begin{array}{l}
    \textsc{HierMinTileSet} =
    \\
    \setr{\pair{S}{c}}{
    \begin{array}{l}
    (\exists \calT=(T,\tau))\ \calT \text{ is a hierarchical TAS with}
    \\
    \text{$|T| \leq c$ and $\calT$ strictly self-assembles finite shape $S$}
    \end{array}}
    \end{array},
  $$
  $$\begin{array}{l}
    \textsc{HierDirectedMinTileSet} =
    \\
    \setr{\pair{S}{c}}{
    \begin{array}{l}
    (\exists \calT=(T,\tau))\ \calT \text{ is a directed hierarchical TAS with}
    \\
    \text{$|T| \leq c$ and $\calT$ strictly self-assembles finite shape $S$}
    \end{array}}
    \end{array}.
  $$
  %In the case of the seeded aTAM, the seeded variants of these problems are known to be $\mathsf{\Sigma}^\mathsf{P}_2$-complete~\cite{BryChiDotKarSek11PNSA} and $\NP$-complete~\cite{ACGHKMR02}, respectively.
  In the case of the seeded aTAM, the seeded variants of these problems are known to be $\Sigma^\mathsf{P}_2$-complete~\cite{BryChiDotKarSekJournal} and $\NP$-complete~\cite{ACGHKMR02}, respectively.

  \item What is the optimal time complexity of strictly self-assembling an $n \times n$ square with a hierarchical TAS? % using optimal tile complexity $O(\frac{\log n}{\log \log n})$?
      Any shape with diameter $n$?
      What if we require the TAS to be directed?

  \item Two asymptotically unrealistic aspects of the model are the assumption of a constant rate of diffusion of assemblies and a constant binding strength threshold required to bind two assemblies together.
      Large assemblies will diffuse more slowly in a well-mixed solution; some simple models predict that the diffusion rate of a molecule is inversely proportional to its diameter~\cite{riseman1950intrinsic,berg1985diffusion}.
      It is conceivable that an assembly model properly accounting for diffusion rates could enforce an absolute lower bound of $\Omega(D)$ on the assembly time required to assemble any shape of diameter $D$.

      The binding strength threshold of the seeded aTAM is a simplified model of a more complicated approximation in the \emph{kinetic} tile assembly model (kTAM, \cite{Winf98}).
      Tiles in reality will occasionally detach, but so long as their concentration  is sufficient, another tile will reattach after not too much time.
      While our model accounts directly for concentrations of large assemblies, it only accounts for this concentration up to the moment of first binding.
      A more realistic model might require a larger binding strength threshold to balance the fact that, if a large assembly detaches, it may take a long time to reattach.
      In particular, the seeded aTAM is justifiable as a model, despite its lack of reverse reactions or modeling of strength-1 attachments (which happen in reality but have a higher reverse rate than higher strength attachments), in part due to Winfree's proof~\cite{Winfree98simulationsof,Winf98} that under suitable conditions (in particular setting concentrations and binding energies such that the rate of forward attachments is just barely larger than the rate of backward detachments of strength-2-bound tiles), the kTAM ``simulates the aTAM with high probability.''
      It is an open question whether there is any similar theorem that can be proven in the hierarchical aTAM, showing that detachment reactions may be safely ignored under certain conditions.

      Incorporating these and other physical phenomena into the hierarchical assembly model would be an interesting challenge.

\end{enumerate}

\paragraph{Acknowledgement.}
The authors are especially grateful to David Soloveichik for contributions to this paper, including the proof of Theorem~\ref{thm-seeded-time-lower-bound}, discussion and insights on other proofs, and generally for indispensable help with discovering and solidifying the results.
The authors also thank Adam Marblestone, Robbie Schweller, Matt Patitz, and the members of Erik Winfree's group, particularly Joe Schaeffer, Erik Winfree, Damien Woods, and Seung Woo Shin, for insightful discussion and comments, and to Bernie Yurke and Rizal Hariadi for discussing diffusion rates and pointing the authors to~\cite{riseman1950intrinsic,berg1985diffusion}.
We are grateful to anonymous reviewers for identifying important flaws in the original proofs of some theorems, one of which spurred significant followup work~\cite{poirghtsJoCG}.

\opt{submission}{
    %\newpage
    %\appendix
    %\input{appendix}
}

%\newpage
\bibliographystyle{plain}
\bibliography{tam}

\opt{normal}{
    %\newpage
    \appendix
    \section{Appendix: Formal Definition of Abstract Tile Assembly Model}
    \label{sec-tam-formal}
    
\newcommand{\fullgridgraph}{G^\mathrm{f}}
\newcommand{\bindinggraph}{G^\mathrm{b}}

This section gives a terse definition of the abstract Tile Assembly Model (aTAM,~\cite{Winf98}). This is not a tutorial; for readers unfamiliar with the aTAM,  \cite{RotWin00} gives an excellent introduction to the model.

Fix an alphabet $\Sigma$. $\Sigma^*$ is the set of finite strings over $\Sigma$.
Given a discrete object $O$, $\langle O \rangle$ denotes a standard encoding of $O$ as an element of $\Sigma^*$.
$\Z$, $\Z^+$, $\N$, $\R^+$ denote the set of integers, positive integers, nonnegative integers, and nonnegative real numbers, respectively.
%For all $n \in \N$, $[n]$ denotes the set $\{0, 1, \ldots, n-1\}$.
For a set $A$, $\calP(A)$ denotes the power set of $A$.
Given $A \subseteq \Z^2$, the \emph{full grid graph} of $A$ is the undirected graph $\fullgridgraph_A=(V,E)$, where $V=A$, and for all $u,v\in V$, $\{u,v\} \in E \iff \| u-v\|_2 = 1$; i.e., if and only if $u$ and $v$ are adjacent on the integer Cartesian plane.
A \emph{shape} is a set $S \subseteq \Z^2$ such that $\fullgridgraph_S$ is connected.
%A shape $S$ is a \emph{tree} if $\fullgridgraph_S$ is acyclic.

A \emph{tile type} is a tuple $t \in (\Sigma^* \times \N)^4$; i.e., a unit square with four sides listed in some standardized order, each side having a \emph{glue label} (a.k.a. \emph{glue}) $\ell \in \Sigma^*$ and a nonnegative integer \emph{strength}.
For a set of tile types $T$, let $\Lambda(T) \subset \Sigma^*$ denote the set of all glue labels of tile types in $T$.
Let $\dall$ denote the \emph{directions} consisting of unit vectors $\{(0,1), (0,-1), (1,0), (-1,0)\}$.
Given a tile type $t$ and a direction $d \in \dall$, $t(d) \in \Lambda(T)$ denotes the glue label on $t$ in direction $d$.
We assume a finite set $T$ of tile types, but an infinite number of copies of each tile type, each copy referred to as a \emph{tile}. An \emph{assembly}
is a nonempty connected arrangement of tiles on the integer lattice $\Z^2$, i.e., a partial function $\alpha:\Z^2 \dashrightarrow T$ such that $\fullgridgraph_{\dom \alpha}$ is connected and $\dom \alpha \neq \emptyset$.
The \emph{shape of $\alpha$} is $\dom \alpha$.
Write $|\alpha|$ to denote $|\dom\alpha|$.
Given two assemblies $\alpha,\beta:\Z^2 \dashrightarrow T$, we say $\alpha$ is a \emph{subassembly} of $\beta$, and we write $\alpha \sqsubseteq \beta$, if $\dom \alpha \subseteq \dom \beta$ and, for all points $p \in \dom \alpha$, $\alpha(p) = \beta(p)$.

Given two assemblies $\alpha$ and $\beta$, we say $\alpha$ and $\beta$ are \emph{equivalent up to translation}, written $\alpha \simeq \beta$, if there is a vector $\vec{x} \in \Z^2$ such that $\dom\alpha = \dom\beta + \vec{z}$ (where for $A \subseteq \Z^2$, $A + \vec{z}$ is defined to be $\setr{p + \vec{z}}{p \in A}$) and for all $p \in \dom\beta$, $\alpha(p + \vec{z}) = \beta(p)$.
In this case we say that $\beta$ is a \emph{translation} of $\alpha$.
Given a shape $S \subseteq \Z^2$, we say that $S$ is \emph{canonical} if $S \subseteq \N^2$, $(x,0) \in S$ for some $x\in\N$, and $(0,y) \in S$ for some $y\in\N$.
In other words, $S$ is located entirely in the first quadrant, but at far to the left and down as possible.
We say an assembly $\alpha$ is \emph{canonical} if $\dom\alpha$ is canonical.
For each finite assembly $\alpha$, there is exactly one canonical assembly $\wa$ such that $\alpha \simeq \wa$.
Given such a finite $\alpha$, we say $\wa$ is the \emph{canonical assembly of $\alpha$}.
For brevity and clarity, we will tend to abuse notation and speak of assemblies equivalent up to translation as if they are the same object, often taking $\alpha$ to mean $\wa$, particularly when discussing concentrations.
We have fixed assemblies at certain positions on $\Z^2$ only for mathematical convenience in some contexts, but of course real assemblies float freely in solution and do not have a fixed position.

%A \emph{strength function} is a function $g:\Lambda(T)\to\N$ indicating, for each glue label $\ell$, the strength $g(\ell)$ with which it binds.
Let $\alpha$ be an assembly and let $p\in\dom\alpha$ and $d\in\dall$ such that $p + d \in \dom\alpha$.
Let $t=\alpha(p)$ and $t' = \alpha(p+d)$.
We say that the tiles $t$ and $t'$ at positions $p$ and $p+d$ \emph{interact} if $t(d) = t'(-d)$ and $g(t(d)) > 0$, i.e., if the glue labels on their abutting sides are equal and have positive strength.
Each assembly $\alpha$ induces a \emph{binding graph} $\bindinggraph_\alpha$, a grid graph $G=(V_\alpha,E_\alpha)$, where $V_\alpha=\dom\alpha$, and $\{p_1,p_2\} \in E_\alpha \iff \alpha(p_1) \text{ interacts with } \alpha(p_2)$.\footnote{For $\fullgridgraph_{\dom \alpha}=(V_{\dom \alpha},E_{\dom \alpha})$ and $\bindinggraph_\alpha=(V_\alpha,E_\alpha)$, $\bindinggraph_\alpha$ is a spanning subgraph of $\fullgridgraph_{\dom \alpha}$: $V_\alpha = V_{\dom \alpha}$ and $E_\alpha \subseteq E_{\dom \alpha}$.}
Given $\tau\in\Z^+$, $\alpha$ is \emph{$\tau$-stable} if every cut of $\bindinggraph_\alpha$ has weight at least $\tau$, where the weight of an edge is the strength of the glue it represents.
That is, $\alpha$ is $\tau$-stable if at least energy $\tau$ is required to separate $\alpha$ into two parts.
When $\tau$ is clear from context, we say $\alpha$ is \emph{stable}.

\opt{submission}{\paragraph{Seeded aTAM.}}
\opt{normal}{\subsection{Seeded aTAM}}
A \emph{seeded tile assembly system} (seeded TAS) is a triple $\calT = (T,\sigma,\tau)$, where $T$ is a finite set of tile types, $\sigma:\Z^2 \dashrightarrow T$ is the finite, $\tau$-stable \emph{seed assembly},
%$g:\Lambda(T)\to\N$ is the \emph{strength function},
and $\tau\in\Z^+$ is the \emph{temperature}.
Given two $\tau$-stable assemblies $\alpha,\beta:\Z^2 \dashrightarrow T$, we write $\alpha \to_1^{\calT} \beta$ if $\alpha \sqsubseteq \beta$ and $|\dom \beta \setminus \dom \alpha| = 1$. In this case we say $\alpha$ \emph{$\calT$-produces $\beta$ in one step}.\footnote{Intuitively $\alpha \to_1^\calT \beta$ means that $\alpha$ can grow into $\beta$ by the addition of a single tile; the fact that we require both $\alpha$ and $\beta$ to be $\tau$-stable implies in particular that the new tile is able to bind to $\alpha$ with strength at least $\tau$. It is easy to check that had we instead required only $\alpha$ to be $\tau$-stable, and required that the cut of $\beta$ separating $\alpha$ from the new tile has strength at least $\tau$, then this implies that $\beta$ is also $\tau$-stable.}
If $\alpha \to_1^{\calT} \beta$, $ \dom \beta \setminus \dom \alpha=\{p\}$, and $t=\beta(p)$, we write $\beta = \alpha + (p \mapsto t)$.
The \emph{$\calT$-frontier} of $\alpha$ is the set $\partial^\calT \alpha = \bigcup_{\alpha \to_1^\calT \beta} \dom \beta \setminus \dom \alpha$, the set of empty locations at which a tile could stably attach to $\alpha$.
%The \emph{$t$-frontier} $\partial_t \alpha \subseteq \partial \alpha$ of $\alpha$ is the set $\setr{p\in\partial \alpha}{\alpha \to_1^\calT \beta \text{ and } \beta(p)=t}.$

A sequence of $k\in\Z^+ \cup \{\infty\}$ assemblies $\vec{\alpha} = (\alpha_0,\alpha_1,\ldots)$ is a \emph{$\calT$-assembly sequence} if, for all $1 \leq i < k$, $\alpha_{i-1} \to_1^\calT \alpha_{i}$.
We write $\alpha \to^\calT \beta$, and we say $\alpha$ \emph{$\calT$-produces} $\beta$ (in 0 or more steps) if there is a $\calT$-assembly sequence $\vec{\alpha}=(\alpha_0,\alpha_1,\ldots)$ of length $k = |\dom \beta \setminus \dom \alpha| + 1$ such that
1) $\alpha = \alpha_0$,
2) $\dom \beta = \bigcup_{0 \leq i < k} \dom {\alpha_i}$, and
3) for all $0 \leq i < k$, $\alpha_{i} \sqsubseteq \beta$.
In this case, we say that $\beta$ is the \emph{result} of $\vec{\alpha}$, written $\beta=\res{\vec{\alpha}}$.
If $k$ is finite then it is routine to verify that $\res{\vec{\alpha}} = \alpha_{k-1}$.\footnote{If we had defined the relation $\to^\calT$ based on only finite assembly sequences, then $\to^\calT$ would be simply the reflexive, transitive closure $(\to_1^\calT)^*$ of $\to_1^\calT$. But this would mean that no infinite assembly could be produced from a finite assembly, even though there is a well-defined, unique ``limit assembly'' of every infinite assembly sequence.}
We say $\alpha$ is \emph{$\calT$-producible} if $\sigma \to^\calT \alpha$, and we write $\prodasm{\calT}$ to denote the set of $\calT$-producible canonical assemblies.
The relation $\to^\calT$ is a partial order on $\prodasm{\calT}$ \cite{Roth01,jSSADST}.\footnote{In fact it is a partial order on the set of $\tau$-stable assemblies, including even those that are not $\calT$-producible.}
%Note that all $\calT$-producible assemblies are $\tau$-stable.
A $\calT$-assembly sequence $\alpha_0,\alpha_1,\ldots$ is \emph{fair} if, for all $i$ and all $p\in\partial^\calT\alpha_i$, there exists $j$ such that $\alpha_j(p)$ is defined; i.e., no frontier location is ``starved''.

%Given a TAS $\calT=(T,\sigma,\tau)$, an assembly $\alpha$ is \emph{$\calT$-producible}, or \emph{producible} if $\calT$ is clear from context, if either $\alpha = \sigma$, or $\alpha$ results from the $\tau$-stable attachment of a single tile to a producible assembly (``$\tau$-stable attachment'' meaning that the cut separating the tile from the rest of the assembly has strength $\geq \tau$, which implies by induction that $\alpha$ is $\tau$-stable).
An assembly $\alpha$ is \emph{$\calT$-terminal} if $\alpha$ is $\tau$-stable and $\partial^\calT \alpha=\emptyset$.
It is easy to check that an assembly sequence $\vec{\alpha}$ is fair if and only $\res{\vec{\alpha}}$ is terminal.
We write $\termasm{\calT} \subseteq \prodasm{\calT}$ to denote the set of $\calT$-producible, $\calT$-terminal canonical assemblies.

A seeded TAS $\calT$ is \emph{directed (a.k.a., deterministic, confluent)} if the poset $(\prodasm{\calT}, \to^\calT)$ is directed; i.e., if for each $\alpha,\beta \in \prodasm{\calT}$, there exists $\gamma\in\prodasm{\calT}$ such that $\alpha \to^\calT \gamma$ and $\beta \to^\calT \gamma$.\footnote{The following two convenient characterizations of ``directed'' are routine to verify.
$\calT$ is directed if and only if $|\termasm{\calT}| = 1$.
$\calT$ is \emph{not} directed if and only if there exist $\alpha,\beta\in\prodasm{\calT}$ and $p \in \dom \alpha \cap \dom \beta$ such that $\alpha(p) \neq \beta(p)$.}
We say that a TAS $\calT$ \emph{strictly self-assembles} a shape $S \subseteq \Z^2$ if, for all $\alpha \in \termasm{\calT}$, $\dom \alpha = S$; i.e., if every terminal assembly produced by $\calT$ has shape $S$.
If $\calT$ strictly self-assembles some shape $S$, we say that $\calT$ is \emph{strict}.
Note that the implication ``$\calT$ is directed $\implies$ $\calT$ is strict'' holds, but the converse does not hold.

\opt{submission}{\paragraph{Hierarchical aTAM.}}
\opt{normal}{\subsection{Hierarchical aTAM}}
A \emph{hierarchical tile assembly system} (hierarchical TAS) is a pair $\calT = (T,\tau)$, where $T$ is a finite set of tile types,
%$g:\Lambda(T)\to\N$ is the \emph{strength function},
and $\tau\in\Z^+$ is the \emph{temperature}.
Let $\alpha,\beta:\Z^2 \dashrightarrow T$ be two (possibly non-canonical) assemblies.
Say that $\alpha$ and $\beta$ are \emph{nonoverlapping} if $\dom\alpha \cap \dom\beta = \emptyset$.
If $\alpha$ and $\beta$ are nonoverlapping assemblies, define $\alpha \cup \beta$ to be the assembly $\gamma$ defined by $\gamma(p) = \alpha(p)$ for all $p\in\dom\alpha$, $\gamma(p) = \beta(p)$ for all $p \in \dom \beta$, and $\gamma(p)$ is undefined for all $p \in \Z^2 \setminus (\dom\alpha \cup \dom\beta)$.
An assembly $\gamma$ is \emph{singular} if $\gamma(p)=t$ for some $p\in\Z^2$ and some $t \in T$ and $\gamma(p')$ is undefined for all $p' \in \Z^2 \setminus \{p\}$.
Given a hierarchical TAS $\calT=(T,\tau)$, an assembly $\gamma$ is \emph{$\calT$-producible} if either 1) $\gamma$ is singular, or 2) there exist producible nonoverlapping assemblies $\alpha$ and $\beta$ such that $\gamma = \alpha\cup\beta$ and $\gamma$ is $\tau$-stable. %\footnote{Unlike some definitions of the hierarchical aTAM, we have not allowed for translation of $\alpha$ and $\beta$ to determine whether they can stably attach.  This is because every translation of every producible assembly is itself producible.}
In the latter case, write $\alpha + \beta \to \gamma$.
An assembly $\alpha$ is \emph{$\calT$-terminal} if for every producible assembly $\beta$ such that $\alpha$ and $\beta$ are nonoverlapping, $\alpha \cup \beta$ is not $\tau$-stable.\footnote{The restriction on overlap is a model of a chemical phenomenon known as \emph{steric hindrance}~\cite[Section 5.11]{WadeOrganicChemistry91} or, particularly when employed as a design tool for intentional prevention of unwanted binding in synthesized molecules, \emph{steric protection}~\cite{HellerPugh1,HellerPugh2,GotEtAl00}.}
Define $\prodasm{\calT}$ to be the set of all $\calT$-producible canonical assemblies.
Define $\termasm{\calT} \subseteq \prodasm{\calT}$ to be the set of all $\calT$-producible, $\calT$-terminal canonical assemblies.
A hierarchical TAS $\calT$ is \emph{directed (a.k.a., deterministic, confluent)} if $|\termasm{\calT}|=1$. %\footnote{Again, technically by this we mean that the quotient set of $\termasm{\calT}$ under the equivalence relation defined by equality under translation is a singleton set.}
We say that a TAS $\calT$ \emph{strictly self-assembles} a shape $S \subseteq \Z^2$ if, for all $\alpha \in \termasm{\calT}$, $\dom \alpha = S$; i.e., if every terminal assembly produced by $\calT$ has shape $S$.

Let $\calT$ be a hierarchical TAS, and let $\alpha \in \prodasm{\calT}$ be a $\calT$-producible assembly.
An \emph{assembly tree} $\Upsilon$ of $\widehat{\alpha}$ is a full binary tree with $|\widehat{\alpha}|$ leaves, whose nodes are labeled by $\calT$-producible assemblies, with $\widehat{\alpha}$ labeling the root, singular assemblies labeling the leaves, and node $u$ labeled with $\gamma$ having children $u_1$ labeled with $\alpha$ and $u_2$ labeled with $\beta$, with the requirement that $\alpha + \beta \to \gamma$.
That is, $\Upsilon$ represents one possible pathway through which $\widehat{\alpha}$ could be produced from individual tile types in $\calT$.
Let $\Upsilon(\calT)$ denote the set of all assembly trees of $\calT$.
Say that an assembly tree is \emph{$\calT$-terminal} if its root is a $\calT$-terminal assembly.
Let $\Upsilon_\Box(\calT)$ denote the set of all $\calT$-terminal assembly trees of $\calT$.
Note that even a directed hierarchical TAS can have multiple terminal assembly trees that all have the same root terminal assembly.

%In this paper we will always use \emph{singly-seeded temperature-2} TAS's, those with $| \sigma|=1$ and $\tau=2$; hence we will use the term \emph{seed tile} for $\sigma$ as well, and for the remainder of this paper we use the term TAS to mean singly-seeded temperature-2 TAS.
When $\calT$ is clear from context, we may omit $\calT$ from the notation above and instead write
$\to_1$,
$\to$,
$\partial \alpha$,
\emph{frontier},
\emph{assembly sequence},
\emph{produces},
\emph{producible}, and
\emph{terminal}.
%Since the behavior of a TAS $\calT=(T,\sigma,2)$ is unchanged if every glue with strength greater than 2 is changed to have strength exactly 2, we assume henceforth that all glue strengths are 0, 1, or 2, and use the terms \emph{null glue}, \emph{single glue}, and \emph{double glue}, respectively, to refer to these three cases.\footnote{We use \emph{null bond}, \emph{single bond}, and \emph{double bond} similarly to refer to the \emph{interaction} of two tiles.}
We also assume without loss of generality that every positive-strength glue occurring in some tile type in some direction also occurs in some tile type in the opposite direction, i.e., there are no ``effectively null'' positive-strength glues.
%We also assume without loss of generality that every single glue or double glue occurring in some tile type in some direction also occurs in some tile type in the opposite direction, i.e., there are no ``effectively null'' single or double glues.
%\footnote{Thus the existence of a tile with a double glue facing empty space implies that the empty space is part of the frontier. Many of our arguments use the contrapositive that if a shape $S$ is strictly self-assembled by a tile assembly system and a side of a tile faces a point $p \not\in S$, then the tile cannot have a double glue on that side.}

}

\end{document}